\documentclass[a4paper,UKenglish, thm-restate]{lipics-v2021}

\usepackage{graphicx,amsmath,amssymb,amsthm,multicol,color,makecell,environ,tabularx,mathtools,enumitem}

\usepackage[table, svgnames]{xcolor}
\usepackage{hyperref}

\newcommand{\N}{\ensuremath{\mathbb{N}}}
\renewcommand{\O}{\ensuremath{\mathcal{O}}}
\newcommand{\M}{\ensuremath{\mathcal{M}}}

\makeatletter
\newcommand{\problemtitle}[1]{\gdef\@problemtitle{#1}}
\newcommand{\probleminput}[1]{\gdef\@probleminput{#1}}
\newcommand{\problemquestion}[1]{\gdef\@problemquestion{#1}}
\newcommand{\problemparameter}[1]{\gdef\@problemparameter{#1}}
\NewEnviron{problemP}{
	\problemtitle{}\probleminput{}\problemquestion{}
	\BODY
	\par\addvspace{.5\baselineskip}
	\noindent
	\centering{
	\fbox{\parbox{.9\textwidth}{
		\@problemtitle \\
		\textit{Input: } \@probleminput \\
		\textit{Question: }	\@problemquestion
	}}}
	\par\addvspace{.5\baselineskip}
}

\NewEnviron{problemParam}{
	\problemtitle{}\probleminput{}\problemquestion{}\problemparameter{}
	\BODY
	\par\addvspace{.5\baselineskip}
	\noindent
	\centering{
	\fbox{\parbox{.9\textwidth}{
			\@problemtitle \hfill\\%
			\textit{Input: } \@probleminput \\
			\textit{Parameter: } \@problemparameter \\
			\textit{Question: }	\@problemquestion
	}}}
	\par\addvspace{.5\baselineskip}
}
\makeatother

\pdfoutput=1 
\hideLIPIcs 

\title{Parameterized Complexity of Streaming Diameter and Connectivity Problems\thanks{This version of the article has been accepted for publication, after peer review but is not the Version of Record and does not reflect post-acceptance improvements, or any corrections. The Version of Record is available online at: \url{https://dx.doi.org/10.1007/s00453-024-01246-z}. An extended abstract of this work appeared in the proceedings of IPEC 2022.}}
\author{Jelle J. Oostveen}{Dept.\ Information and Computing Sciences, Utrecht University, The Netherlands}{j.j.oostveen@uu.nl}{https://orcid.org/0009-0009-4419-3143}{This author is partially supported by the NWO grant OCENW.KLEIN.114 (PACAN).}
\author{Erik Jan {van Leeuwen}}{Dept.\ Information and Computing Sciences, Utrecht University, The Netherlands}{e.j.vanleeuwen@uu.nl}{https://orcid.org/0000-0001-5240-7257}{}

\authorrunning{J.J. Oostveen and E.J. van Leeuwen}
\titlerunning{Parameterized Complexity of Streaming Diameter and Connectivity Problems}

\Copyright{Jelle J. Oostveen and Erik Jan van Leeuwen}

\ccsdesc[500]{Theory of computation~Parameterized complexity and exact algorithms}
\ccsdesc[500]{Theory of computation~Streaming, sublinear and near linear time algorithms}
\ccsdesc[500]{Theory of computation~Lower bounds and information complexity}

\keywords{Stream, Streaming, Graphs, Parameter, Complexity, Diameter, Connectivity, Vertex Cover, Disjointness, Permutation}

\nolinenumbers

\begin{document}

\maketitle

\begin{abstract}
We initiate the investigation of the parameterized complexity of \textsc{Diameter} and \textsc{Connectivity} in the streaming paradigm. On the positive end, we show that knowing a vertex cover of size $k$ allows for algorithms in the Adjacency List (AL) streaming model whose number of passes is constant and memory is $\O(\log n)$ for any fixed $k$. Underlying these algorithms is a method to execute a breadth-first search in $\O(k)$ passes and $\O(k \log n)$ bits of memory. On the negative end, we show that many other parameters lead to lower bounds in the AL model, where $\Omega(n/p)$ bits of memory is needed for any $p$-pass algorithm even for constant parameter values. In particular, this holds for graphs with a known modulator (deletion set) of constant size to a graph that has no induced subgraph isomorphic to a fixed graph $H$, for most $H$. For some cases, we can also show one-pass, $\Omega(n \log n)$ bits of memory lower bounds. We also prove a much stronger $\Omega(n^2/p)$ lower bound for \textsc{Diameter} on bipartite graphs.

Finally, using the insights we developed into streaming parameterized graph exploration algorithms, we show a new streaming kernelization algorithm for computing a vertex cover of size~$k$. This yields a kernel of $2k$ vertices (with $\O(k^2)$ edges) produced as a stream in $\text{poly}(k)$ passes and only $\O(k \log n)$ bits of memory.
\end{abstract}

\section{Introduction}
Graph algorithms, such as to compute the diameter of an unweighted graph (\textsc{Diameter}) or to determine whether it is connected (\textsc{Connectivity}), often rely on keeping the entire graph in (random access) memory. However, very large networks might not fit in memory. Hence, graph streaming has been proposed as a paradigm where the graph is inspected through a so-called stream, in which its edges appear one by one~\cite{HenzingerStreams}. To compensate for the assumption of limited memory, multiple passes may be made over the stream and computation time is assumed to be unlimited. The complexity theory question is which problems remain solvable and which problems are hard in such a model, taking into account trade-offs between the amount of memory and passes. 

Many graph streaming problems require $\Omega(n)$ bits of memory \cite{OmegaNMemory,GraphDistancesDataStreaming} for a constant number of passes on $n$-vertex graphs. 
Any $p$-pass algorithm for \textsc{Connectivity} needs $\Omega(n/p)$ bits of memory~\cite{HenzingerStreams}. Single pass algorithms for \textsc{Connectivity} or \textsc{Diameter} need $\Omega(n \log n)$ bits of memory on sparse graphs~\cite{SunWoodruffPermBounds}.
A $2$-approximation of \textsc{Diameter} requires $\Omega(n^{3/2})$ bits of memory on weighted graphs~\cite{GraphDistancesDataStreaming}.
A naive streaming algorithm for \textsc{Connectivity} or \textsc{Diameter} stores the entire graph, using $\O(m \log n) = \O(n^2 \log n)$ bits and a single pass. For \textsc{Connectivity}, union-find yields a $1$-pass, $\O(n \log n)$ bits of memory, algorithm~\cite{McGregorSurveyStream}.

An intriguing aspect on \textsc{Diameter} and \textsc{Connectivity} is that classic algorithms for them rely on breadth-first search (BFS) or depth-first search (DFS). These seem difficult to execute efficiently in a streaming setting. It was a longstanding open problem to compute a DFS tree using $o(n)$ passes and $o(m \log n)$ bits of memory. This barrier was recently broken~\cite{KhanM19}, through an algorithm that uses $\O(n/k)$ passes and $\O(nk \log n)$ bits of memory, for any $k$.
The situation for computing single-source shortest paths seems similar~\cite{Elkin20a}, although good approximations exist even on weighted graphs (see e.g.~\cite{McGregorSurveyStream,ElkinT2021}).
We do know that DFS algorithms cannot be executed in logarithmic space~\cite{DFSsequential}. In streaming, any BFS algorithm that explores $k$ layers of the BFS tree must use at least $k/2$ passes or $\Omega(n^{1+1/k}/(\log n)^{1/k})$ space~\cite{GraphDistancesDataStreaming}.
Hence, much remains unexplored when it comes to graph exploration- and graph distance-related streaming problems such as BFS/DFS, \textsc{Diameter}, and \textsc{Connectivity}. In particular, most lower bounds hold for general graphs. As such, a more fine-grained view of the complexity of these problems has so far been lacking.

In this paper, we seek to obtain this fine-grained view using parameterized complexity~\cite{DowneyFellowsBook}.
The idea of using parameterized complexity in the streaming setting was first introduced by Fafianie and Kratsch~\cite{FafianieKratsch} and Chitnis et al.~\cite{ChitnisMaxMatchVC}. Many problems are hard in streaming parameterized by their solution size~\cite{FafianieKratsch,ChitnisMaxMatchVC,ChitnisTheory}. Crucially, however, deciding whether a graph has a vertex cover of size $k$ has a one-pass, $\O(k^2\log k \log n)$-memory kernelization algorithm by Chitnis et al.~\cite{ChitnisEsfandiariSampling}, and a $2^k$-passes, $\O(k\log n)$-memory direct algorithm by Chitnis and Cormode~\cite{ChitnisTheory}. 
Bishnu et al. \cite{BishnuStreamingVCconference} then showed that knowing a vertex cover of size $k$ is useful in solving other deletion problems using $p(k)$ passes and $f(k) \log n$ memory, notably $H$-free deletion; this approach was recently expanded on by Oostveen and van Leeuwen \cite{StreamingDeletionVC}. This leads to the more general question how knowing a (small) \emph{$H$-free modulator}, that is, a set $X$ such that $G-X$ has no induced subgraph isomorphic to $H$ (note that $H=P_2$ in \textsc{Vertex Cover [$k$]}\footnote{See Section~\ref{sec:prelims} for the notation.}), would affect the complexity of streaming problems and of BFS/DFS, \textsc{Diameter}, and \textsc{Connectivity} in particular. 
We are not aware of any investigations in this direction.

An important consideration is the streaming model 
(see~\cite{HenzingerStreams,GoelKK12,Kapralov13,McGregorVV16} or the survey by McGregor~\cite{McGregorSurveyStream}). In the Edge Arrival (EA) model, each edge of the graph appears once in the stream, and the edges appear in some fixed but arbitrary order. Most aforementioned results use this model. In the Vertex Arrival (VA) model the edges arrive grouped per vertex, and an edge is revealed at its endpoint that arrives latest.
In the Adjacency List (AL) model the edges also arrive grouped per vertex, but each edge is present for both its endpoints. This means we see each edge twice and when a vertex arrives we immediately see all its adjacencies (rather than some subset dependent on the arrival order, as in the VA model). 
This model is quite strong, but as we shall see, unavoidable for our positive results.
We do not consider dynamic streaming models in this paper, although they do exist. Note that all these streaming models concern edge streams, so even though we may talk about `arrival of a vertex' this really means `arrival of the edges incident to a vertex', and vertices are not separately present in the graph stream.

\subparagraph*{Our Contributions}
The main takeaway from our work is that the vertex cover number likely sits right at the frontier of parameters that are helpful in computing \textsc{Diameter} and \textsc{Connectivity}. As our main positive result, we show the following.

\begin{restatable}{theorem}{VCDiameterConnSummary}\label{thm:VCDiameterConnSummary}
Given a graph $G$ as an AL stream and a vertex cover of $G$ of size $k$ in memory, \textsc{Diameter [$k$]} and \textsc{Connectivity [$k$]} can be solved using $\O(2^kk)$ passes and $\O(k \log n)$ bits of space or using one pass and $\O(2^k + k \log n)$ bits of space.
\end{restatable}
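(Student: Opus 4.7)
Let $C$ be the given vertex cover of size $k$ and set $I := V \setminus C$, which is an independent set. Each $u \in I$ has $N(u) \subseteq C$ (one of at most $2^k$ types), and two vertices of $I$ sharing a neighborhood are twins with equal eccentricity. Moreover, any graph with vertex cover $k$ has diameter at most $2k$: in any shortest path, no two consecutive vertices can both lie outside $C$, so at least $\lfloor (d+1)/2 \rfloor$ vertices of a length-$d$ path belong to $C$. The central technical step is to implement a BFS from an arbitrary source $v$ in $\O(k)$ passes and $\O(k \log n)$ bits in the AL model. Writing $A_j := L_j \cap C$, every $u \in I \setminus \{v\}$ satisfies $d(v,u) = 1 + \min\{j : N(u) \cap A_j \neq \emptyset\}$, so layer-membership for $I$ is implicit in the sequence of $A_j$'s. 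In pass $i$ I maintain only $A_{i-1}$, $A_i$, the cumulative $V_i \cap C$, and a tentative $A_{i+1}$ (each $\O(k)$ bits): upon seeing $u \in I$'s adjacency in the AL stream I decide locally whether $u \in L_i$ and, if so, contribute its unvisited $C$-neighbors to $A_{i+1}$; upon seeing $c \in C$'s adjacency I test directly whether $c$ is unvisited and has a neighbor in $A_i$. Depth $\O(k)$ then gives $\O(k)$ passes and $\O(k \log n)$ bits overall (the $\log n$ factor comes from storing the labels of $C$ and of $v$).

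For \textsc{Diameter}, by the twin property it suffices to compute the eccentricity of each $c \in C$ and of one representative per nonempty neighborhood type $S \subseteq C$. I iterate through the $2^k$ subsets $S$: for each, one pass locates any $u \in I$ with $N(u) = S$ (only its $\O(\log n)$-bit label is stored), after which $\O(k)$ further passes run the succinct BFS from $u$; I treat the $k$ vertices of $C$ analogously. Memory is reused across invocations, giving $\O(2^k k)$ passes and $\O(k \log n)$ bits overall, and the diameter is the maximum of the computed eccentricities. For \textsc{Connectivity} it suffices to run a single BFS from an arbitrary source and then one closing pass verifying that $V_\infty \cap C = C$ and that every $u \in I$ has a neighbor in $V_\infty \cap C$; this costs $\O(k)$ passes and $\O(k \log n)$ bits.

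For the one-pass variant, a single AL pass is enough to accumulate: the adjacencies inside $G[C]$ (using $\O(k^2)$ bits); a $2^k$-bit vector recording which subsets $S \subseteq C$ arise as $N(u)$ for some $u \in I$; a flag for whether any isolated vertex exists; and the $\O(k \log n)$-bit labels of $C$. Together with the twin property this compressed view determines all pairwise distances up to twin-equivalence, and so both \textsc{Diameter} (by computing all eccentricities offline on the quotient) and \textsc{Connectivity} (by checking connectivity of the quotient and the absence of isolated $I$-vertices) can be decided without a further pass, using $\O(2^k + k \log n)$ bits.

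The principal obstacle is the succinct BFS: with $|I|$ potentially $\Theta(n)$, we cannot afford to store which vertices of $I$ are visited or which lie in the current frontier. The resolution is the observation that both pieces of information are implicit functions of the $A_j$'s (whose union has size at most $k$) and can be reconstructed on the fly from each $u$'s adjacency list as it streams by in AL form. Once this succinct BFS is in place, the remainder is bookkeeping to orchestrate the $\O(2^k + k)$ BFS invocations (multi-pass) or the single-pass summary (one-pass) within the stated budget.
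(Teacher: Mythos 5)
Your proposal takes essentially the same approach as the paper: a succinct BFS from a source in $\O(k)$ passes by keeping distance information only for vertices of the cover $C$, then iterating over $\O(2^k + k)$ representatives (one per twin class, plus the cover itself) for the multi-pass algorithm, and a compressed twin-class representation for the one-pass algorithm. The only difference in the BFS is cosmetic: you explicitly maintain a sliding window of layer sets $A_j = L_j \cap C$, whereas the paper stores a tentative distance field $d(w)$ for each $w \in C$ and performs Bellman-Ford-style relaxations in each pass; both give $\O(k)$ passes and $\O(k\log n)$ bits, and both recover distances to $I$-vertices on the fly from their adjacency lists.

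There is, however, a gap in your one-pass variant. You record only \emph{which} twin types $S \subseteq C$ occur (a $2^k$-bit existence vector), not their multiplicity, and then compute the diameter as the maximum eccentricity on the resulting quotient. This fails when the diametral pair consists of two vertices of the \emph{same} twin class. Concretely, take $G = K_{1,2}$ with cover $C=\{c\}$: both leaves have type $\{c\}$ and are at mutual distance $2$, but the quotient with one representative per type has diameter $1$. The paper handles this by recording two bits per type, distinguishing $0$, $1$, and $\geq 2$ occurrences; a multiplicity of $\geq 2$ for a non-empty type forces a distance-$2$ pair, and for the empty type forces disconnectedness (diameter $\infty$). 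Adding these multiplicity bits, still $\O(2^k)$ total, repairs the argument. Your multi-pass diameter algorithm is unaffected by this issue, since the BFS from a representative $u$ already reports distances from $u$ to every vertex, including $u$'s own twins.
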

The crux to our approach is to perform a BFS in an efficient manner, using $\O(k)$ passes and $\O(k \log n)$ space.
Knowledge of a vertex cover is not a restricting assumption, as one may be computed using similar memory requirements~\cite{ChitnisEsfandiariSampling,ChitnisTheory}. We will also show how to extend the single-pass result to work without a vertex cover being given, at the cost of increasing the memory use to $\O(4^k + k \log n)$ bits of space. 

As a contrasting result, we will show that in the VA model, even a constant-size vertex cover does not help in computing \textsc{Diameter} and \textsc{Connectivity}. Moreover, the bound on the vertex cover seems necessary, as we can prove that any $p$-pass algorithm for \textsc{Diameter} requires $\Omega(n^2/p)$ bits of memory even on bipartite graphs and any $p$-pass algorithm for \textsc{Connectivity} requires $\Omega(n/p)$ bits of memory, both in the AL model. This indicates that both the permissive AL model and a low vertex cover number are truly needed.

In some cases, we are also able to prove that a single-pass algorithm requires $\Omega(n \log n)$ bits of memory.

More broadly, knowledge of being \emph{$H$-free} (that is, not having a fixed graph $H$ as an induced subgraph) or having an $H$-free modulator does not help even in the AL model. Here, $H\not\subseteq_i G$ denotes that $H$ is not an induced subgraph of $G$.

\begin{restatable}{theorem}{HfreeOverview}\label{thm:HfreeOverview}
For any fixed graph $H$ with $H \not\subseteq_i P_4$ and $H \not= 3P_1, P_3+P_1,P_2+2P_1$, any streaming algorithm for \textsc{Diameter} in the AL model that uses $p$ passes over the stream must use $\Omega(n/p)$ bits of memory even on the class of $H$-free graphs.
\end{restatable}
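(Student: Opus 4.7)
The plan is to reduce from the two-party Set Disjointness problem $\mathrm{DISJ}_n$, whose randomized communication complexity is $\Omega(n)$. For each admissible $H$, I would construct a family of $H$-free graphs $\{G_{x,y}\}$ on $\Theta(n)$ vertices whose diameter takes one value if $x \cap y = \emptyset$ and a different value otherwise. The edge set of each $G_{x,y}$ splits into an ``Alice part'' depending only on $x$, a ``Bob part'' depending only on $y$, and a fixed input-independent skeleton, with vertex names chosen so that Alice can emit the AL adjacency lists of one half of $V(G_{x,y})$ and Bob the other half. A $p$-pass, $s$-memory AL streaming algorithm for \textsc{Diameter} can then be simulated by a $2p$-round, $\O(ps)$-bit communication protocol, and the $\mathrm{DISJ}_n$ lower bound forces $s = \Omega(n/p)$.

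The base gadget follows the standard two-pole diameter template used in lower bounds such as~\cite{GraphDistancesDataStreaming}: two poles $a, b$ together with two layers $U = \{u_1,\dots,u_n\}$ and $V = \{v_1,\dots,v_n\}$, a fixed interconnection between $U$ and $V$, edges $\{a u_i : x_i=1\}$ added by Alice, and edges $\{b v_i : y_i=1\}$ added by Bob, tuned so that $\mathrm{dist}(a,b)$ is small (say $3$) when $x \cap y \neq \emptyset$ and strictly larger otherwise. This base graph is triangle-free, which already handles $H = K_3$ directly, and is bipartite, covering a few further cases.

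The core work is a case analysis that augments the base gadget to be $H$-free while preserving the diameter gap. I would split on whether $H$ has an isolated vertex. If $H$ has an isolated vertex (which, after the exclusion of $3P_1, P_3+P_1, P_2+2P_1$, includes $tP_1$ for $t \geq 4$, $K_3+P_1$, and any larger disconnected $H$), then attaching a small dominating hub adjacent to $U \cup V$ forbids the isolated-vertex component of $H$ in any induced subgraph while only shrinking distances by a bounded additive amount; when $H$ additionally has no edge, the layers $U$ and $V$ must further be turned into cliques so that $\alpha(G_{x,y}) < |V(H)|$. If $H$ has no isolated vertex and is not an induced subgraph of $P_4$, then $H$ contains either a triangle, an induced $C_4$, or a vertex of degree at least~$3$; in each sub-case a suitable clique/independent-set choice on $U$ and $V$, combined with a sparse bipartite skeleton, prevents any $|V(H)|$-vertex induced subgraph from being isomorphic to $H$. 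Verifying $H$-freeness is then a finite check exploiting the high symmetry of the gadget.

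The main obstacle is the tight corner where $|V(H)| \leq 4$ and $H$ lacks enough structure to be ``hidden'' by either a dominating hub or a cliqueification of $U$ and $V$: the excluded graphs $3P_1, P_3+P_1, P_2+2P_1$ are exactly those for which every such augmentation either creates a forbidden induced copy around a pole or collapses the diameter gap. For every remaining $H$, I expect that one side of the clique/independent-set dichotomy on the layers, together with at most one extra dominating vertex, leaves enough slack to both avoid $H$ and preserve the $\mathrm{DISJ}_n$-encoded jump in $\mathrm{dist}(a,b)$; iterating this dichotomy over the finitely many structural patterns of~$H$ yields the theorem.
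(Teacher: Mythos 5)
Your case analysis has a genuine gap. The dichotomy you propose for $H$ without isolated vertices is wrong: you claim that such an $H$, if not an induced subgraph of $P_4$, must contain a triangle, an induced $C_4$, or a vertex of degree at least $3$. This fails for every linear forest on $\geq 5$ vertices and for $2P_2$ — for instance $H=P_5$ and $H=2P_2$ are connected/degree-$\leq 2$/cycle-free and have no isolated vertex, yet they are not induced subgraphs of $P_4$. These cases are not peripheral; they are a large chunk of the theorem, and the paper handles them with an entirely separate construction (the split-graph reduction, which is $2P_2$-free by design, with a diameter gap of $2$ versus $3$). Your ``two-pole, distance $3$ versus $4$'' template structurally cannot cover them, because any connected $2P_2$-free graph has diameter at most $3$, so the gap you rely on is unachievable for those $H$. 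Similarly, your parenthetical ``any larger disconnected $H$ has an isolated vertex'' is false ($2P_2$ is a counterexample), which compounds the miscount.

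A second, more local issue: attaching a single dominating hub to $U \cup V$ does not prevent induced copies of $H$ that have an isolated vertex, because an induced subgraph of $G$ need not include the hub, and outside the hub's closed neighbourhood (or just avoiding the hub) isolated components can reappear. Controlling $\alpha(G_{x,y})$ by cliqueifying $U$ and $V$ helps only for $H=tP_1$; for a disconnected $H$ such as $K_3+P_1$ you would need to argue about every induced four-vertex set, and the hub alone does not give that. The paper's route — a tree gadget (``Windmill'') for any $H$ containing a cycle, a $K_{1,3}$-free modification of it for forests with a degree-$3$ vertex, the split-graph gadget for any $H$ containing $2P_2$, and then a targeted cliqueified variant of the split gadget for the two residual cases $H=4P_1$ and $H=P_4+P_1$ — makes these $H$-freeness checks concrete and finite. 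Your plan would need at minimum a second, non-two-pole construction with a $2$-versus-$3$ diameter gap to recover the $2P_2$-containing cases, and a correct structural classification of the admissible~$H$.
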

We note that these results hold for $H$-free graphs (without the need for a modulator). The case when $H \subseteq_i P_4$ is straightforward to solve with $\O(\log n)$ bits of memory,  as the diameter is either $1$ or $2$ (an induced path of length~$3$ is a $P_4$). If the graph has diameter~$1$, it is a clique. This can be tested in a single pass by counting the number of edges. 

\begin{restatable}{theorem}{DistHfreeOverview}\label{thm:DistHfreeOverview}
For any fixed graph $H$ with $H \not= P_2 + sP_1$ for $s \in\{0,1,2\}$ and $H \not= sP_1$ for $s \in \{1,2,3\}$ and any fixed constant $k \geq 3$, any streaming algorithm for \textsc{Diameter} in the AL model that uses $p$ passes over the stream must use $\Omega(n/p)$ bits of memory even on the class of graphs $G$ with a given set of vertices $X$ with $|X| = k$ such that $G-X$ is $H$-free. If $G-X$ must be connected and $H$-free, then additionally $H \not= P_3$.
\end{restatable}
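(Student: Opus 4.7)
The plan is to combine Theorem \ref{thm:HfreeOverview} with new reductions for the $H$ not already covered there. Comparing the two exclusion sets, whenever $H$ lies outside the exclusion set of Theorem \ref{thm:HfreeOverview}, the $\Omega(n/p)$ lower bound on $H$-free graphs trivially implies the modulator version by taking $X = \emptyset$. The truly new cases to handle are the $H$ that are excluded in Theorem \ref{thm:HfreeOverview} but not in the present theorem, namely $H \in \{P_3, P_4, P_3 + P_1\}$; the classes $sP_1$ and $P_2 + sP_1$ remain excluded in both.

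For each of these $H$, I would reduce from two-party \textsc{Set Disjointness}, which requires $\Omega(n)$ communication, using the standard streaming-to-communication reduction that splits the stream into an Alice-prefix and a Bob-suffix. Given $A, B \subseteq [n]$, the construction places $n$ ``element vertices'' $v_1, \ldots, v_n$ outside a constant-size modulator $X = \{s, t, q_1, q_2\}$ connected internally by a path $s$-$q_1$-$q_2$-$t$; Alice adds the edge $s v_i$ for each $i \in A$ and Bob adds the edge $t v_i$ for each $i \in B$, plus auxiliary edges from $q_1$ to any unused elements for connectivity. Since every edge is incident to $X$, the induced subgraph $G - X$ is edgeless and therefore vacuously $H$-free for all three target $H$; in particular the construction is uniform across $H$. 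A direct shortest-path computation shows that the pair $(v_i, v_j)$ with $i \in A \setminus B$ and $j \in B \setminus A$ realises the diameter, which equals $4$ when $A \cap B \neq \emptyset$ (via a shared neighbour in the element layer) and $5$ when $A \cap B = \emptyset$ (routing through the internal $s$-$q_1$-$q_2$-$t$ path). Deciding between $\mathrm{diam}(G) = 4$ and $\mathrm{diam}(G) = 5$ therefore settles disjointness, yielding the $\Omega(n/p)$ memory bound.

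For the connected addendum, $G - X$ must additionally be connected, so an edgeless $G - X$ is no longer allowed. For $H = P_4$ and $H = P_3 + P_1$ I would replace the edgeless $G - X$ by a suitable connected $H$-free graph on the element vertices (a complete bipartite graph in the first case, a single clique in the second); because the Alice/Bob-dependent edges still touch only $X$, they leave $G - X$ untouched and $H$-free, and the same diameter-gap argument goes through with constant shifts. For $H = P_3$, however, a connected $P_3$-free graph must itself be a clique, collapsing distances among element vertices to $1$, so the diameter of $G$ is then determined entirely by the constant-size $X$ and can carry no $\Omega(n)$ information. This is exactly why the additional exclusion $H = P_3$ appears in the connected variant. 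The main technical obstacle will be the case-by-case verification that the auxiliary edges introduced for connectivity or padding do not destroy either $H$-freeness of $G - X$ or the diameter gap between disjoint and intersecting instances, but in every case this reduces to checking that all non-skeleton edges are still incident to $X$.
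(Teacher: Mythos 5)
Your case analysis of which $H$ remain after invoking Theorem~\ref{thm:HfreeOverview} is correct (the genuinely new cases are $H\in\{P_3,P_4,P_3+P_1\}$, with $P_3$ only needed for the non-connected variant), but the construction you propose does not work in the AL model. In your gadget, an element vertex $v_i$ may receive a variable edge to $s$ (depending on Alice's input) \emph{and} a variable edge to $t$ (depending on Bob's input). In the AL model, whoever emits $v_i$'s adjacency list must know \emph{all} of $v_i$'s incident edges; here neither Alice nor Bob can do so without communication. This is exactly the obstruction the paper flags after Figure~\ref{fig:SimpleVA}: no vertex may be incident to variable edges of both players. Your construction would give a valid VA-model reduction, but the theorem is about the AL model, so this is a genuine gap, not a presentation issue.

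The connected addendum also does not go through as described. Turning $G-X$ into a complete bipartite graph or a clique on the element vertices drives all element--element distances down to $1$ or $2$, so the diameter of $G$ is then bounded by a constant independent of whether $A\cap B=\emptyset$; the diameter gap you need does not survive ``with constant shifts.'' The paper instead reuses the already-established `Simple AL' construction (Theorem~\ref{thm:SimpleAL}), which is built precisely so that Alice's and Bob's variable edges touch disjoint vertex sets. It then observes: removing $X=\{a,b\}$ leaves a bouquet of triangles through $c$, which is connected, $P_4$-free, and $P_3+P_1$-free (since $c$ dominates and lies on every induced $P_3$); removing $X=\{a,b,c\}$ leaves a matching, which is $P_3$-free but disconnected, covering the last case. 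No new reduction is required, and the AL and connectivity constraints are satisfied by design.
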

We note that the case when $H = P_2$ or $H=P_1$ is covered by Theorem~\ref{thm:VCDiameterConnSummary}.
Cobipartite graphs seem to be a bottleneck class. The cases when $H = 2P_1$ or when $H = P_3$ and $G-X$ must be connected lead to a surprising second positive result. 

\begin{restatable}{theorem}{DistlCliqueSummary}\label{thm:DistlCliqueSummary}
Given a graph $G = (V,E)$ as an AL stream and a set $X\subseteq V$ of size $k$ in memory such that $G-X$ is a disjoint union of $\ell$ cliques, \textsc{Diameter [$k, \ell$]} and \textsc{Connectivity [$k,\ell$]} can be solved using $\O(2^kk\ell)$ passes and $\O((k+\ell)\log n)$ bits of space or one pass and $\O(2^k\ell + (k + \ell) \log n)$ bits of space.
\end{restatable}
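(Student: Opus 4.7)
The plan is to lift the BFS framework underlying Theorem~\ref{thm:VCDiameterConnSummary} from a vertex cover to a modulator to a disjoint union of cliques. In one preliminary pass we identify the $\ell$ cliques of $G-X$ by storing a single representative per clique: when a vertex $w \in V \setminus X$ arrives in the AL stream the adjacency $N(w)$ is revealed, so we check whether any previously-stored representative lies in $N(w)$ (in which case $w$ joins that clique) or not (in which case $w$ becomes a new representative). This uses $\O(\ell \log n)$ bits. Two structural observations then drive the remaining computation. First, $\mathrm{diam}(G) = \O(k)$: a shortest path decomposes into alternating maximal runs in $X$ and in $V\setminus X$, where each non-$X$-run lies entirely in a single clique and has length at most $2$ (three consecutive clique vertices would admit a shortcut via a clique edge), and the $X$-runs together use at most $k$ distinct vertices; since consecutive non-$X$-runs are separated by at least one $X$-vertex, the path has length at most $3k+2$. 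Second, for a fixed source $v$ and a clique $C$, every $c \in C$ satisfies $d(v,c) \in \{\beta_C, \beta_C+1\}$ where $\beta_C := \min_{c' \in C} d(v,c')$, with $d(v,c) = \beta_C$ iff $c$ has some $X$-neighbour at distance $\beta_C - 1$ from $v$. Consequently a BFS from $v$ is fully captured by the values $\{d(v,x)\}_{x \in X}$ and $\{\beta_C\}_C$, totalling $\O((k+\ell)\log n)$ bits.

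Each BFS is implemented by $\O(k)$ rounds of Bellman--Ford-style relaxation, two passes per round. In the first pass of a round, every arriving $w \in V \setminus X$ triggers the computation of $\alpha(w) := \min_{x \in N(w) \cap X} d(v,x)$ from the revealed adjacency and the update $\beta_{C_w} \leftarrow \min(\beta_{C_w}, 1 + \alpha(w))$. In the second pass, using the structural observation, we derive the exact $d(v,w)$---namely $\beta_{C_w}$ if $\alpha(w) = \beta_{C_w}-1$ and $\beta_{C_w}+1$ otherwise---and update $d(v,x) \leftarrow \min(d(v,x), 1 + d(v,w))$ for each $x \in N(w) \cap X$. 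Edges inside $X$ are relaxed in memory using a pre-stored $\O(k^2)$-bit copy of $G[X]$. By the $\O(k)$ diameter bound, the BFS state converges within $\O(k)$ rounds.

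For \textsc{Connectivity} one BFS from an arbitrary source suffices: $G$ is connected iff all $d(v,x)$ and $\beta_C$ values are finite. For \textsc{Diameter}, I would run BFS from each $x \in X$ and from one vertex of each (clique, type) pair $(C,S)$: any two vertices in the same clique sharing the same type $S \subseteq X$ are true twins (their closed neighbourhoods coincide) and therefore have identical eccentricity, so the $k + 2^k\ell$ chosen sources cover every eccentricity class. Locating a starting vertex of type $S$ in clique $C$ costs one extra pass per pair, and each BFS costs $\O(k)$ passes, giving $\O(2^k k\ell)$ passes in total. For the one-pass variant, a single pass records the cliques ($\O(\ell \log n)$ bits), a $2^k$-bit indicator per clique of which types are present ($\O(2^k \ell)$ bits), and the edges of $G[X]$ ($\O(k^2)$ bits); this suffices to compute every pairwise distance, and hence the diameter or connectivity, offline in memory.

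The most delicate point is verifying convergence of the two-pass-per-round relaxation: because $\beta_{C_w}$ is still being updated as the vertices of $C_w$ arrive during the first pass of a round, the value $d(v,w)$ used to relax in the second pass may match the true BFS distance of $w$ only a round later. The main technical task I foresee is to argue carefully that the interleaved $\beta_C$ and $d(v,x)$ updates nevertheless tighten every value to the correct BFS distance within the $\O(k)$-round budget guaranteed by the diameter bound, analogously to the standard Bellman--Ford convergence argument but on the contracted graph on $X$ together with the clique-nodes.
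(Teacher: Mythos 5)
Your plan tracks the paper's route closely: a bounded-depth BFS simulation that stores $\O(\log n)$-bit distance fields for each $x \in X$ and one $\beta_C$ per clique, together with a twin-class enumeration of source vertices (one representative per type--clique pair). This matches the paper's Lemmas~\ref{lemma:DistLCliquepathlength} and~\ref{lemma:BFSinDistCliquegraph} and Theorems~\ref{thm:DiameterDistCliques} and~\ref{thm:Clique1pass} in spirit and in the resulting bounds.

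There is, however, one genuine gap in the multi-pass variant. You ``pre-store a $\O(k^2)$-bit copy of $G[X]$'' to relax edges inside $X$, but the multi-pass algorithm is required to run in $\O((k+\ell)\log n)$ bits, and $\O(k^2)$ is not dominated by $\O(k\log n)$ in general (only when $k \le \log n$). The paper avoids this entirely: in the AL model, when a vertex $w \in X$ arrives during a round, its full adjacency including $N(w)\cap X$ is visible, so $X$-internal edges can be relaxed directly from the stream during each round-pass with no additional storage. Your own first/second-pass scheme could be fixed the same way, and then the stated memory bound holds. (In the one-pass variant storing $G[X]$ is harmless, since $k^2 = \O(2^k\ell)$ for $\ell \ge 1$, which is why the paper stores it there.)

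Two minor remarks. The claim ``$d(v,c) = \beta_C$ iff $c$ has an $X$-neighbour at distance $\beta_C - 1$'' silently assumes $\beta_C > 0$; if $v$ itself lies in $C$ the right-hand side is vacuous but $d(v,v)=\beta_C=0$, so the implementation needs to special-case the source's own clique. Also, to correctly decide diameter in the degenerate regime (graph of diameter $0$ or $1$, e.g.\ a single clique) the one-pass version should record, per $(C,S)$ class, whether there are $0$, $1$, or $\ge 2$ such vertices; the paper uses two bits per class for exactly this reason, whereas a one-bit presence flag cannot distinguish a lone vertex from a pair at distance $1$. Finally, your worry about interleaved $\beta_C$/$d(v,x)$ updates is legitimate but is resolved exactly as in the vertex-cover case: the invariant to prove is that after round $i$ all stored fields at true distance at most $i$ from $v$ are correct, and the bounded-diameter lemma guarantees termination in $\O(k)$ rounds.
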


The approach for this result is similar as for Theorem~\ref{thm:VCDiameterConnSummary}. Moreover, we show a complementary lower bound in the VA model, even for $\ell=1$ and constant $k$.

To summarize our results in words, generalizing Theorem~\ref{thm:VCDiameterConnSummary} using the perspective of an $H$-free modulator does not seem to lead to a positive result (Theorem~\ref{thm:DistHfreeOverview}). Instead, connectivity of the remaining graph after removing the modulator seems crucial. However, this perspective only helps for Theorem~\ref{thm:DistlCliqueSummary}, while the problem remains hard for most other $H$-free modulators and even for the seemingly simple case of a modulator to a path (we will show this in Theorem~\ref{thm:DiamondResults}). While Theorem~\ref{thm:DistlCliqueSummary} would also hint at the possibility of using a modulator to a few components of small diameter, this also leads to hardness (we show this in Corollary~\ref{cor:SimpleALResults}).

We emphasize that all instances of \textsc{Diameter} in our hardness reductions are connected graphs. Hence, the hardness of computing \textsc{Diameter} is separated from the hardness of computing \textsc{Connectivity}.

For \textsc{Connectivity}, we also give two broad theorems that knowledge of being $H$-free or having an $H$-free modulator does not help even in the AL model.

\begin{restatable}{theorem}{HfreeOverviewConn}\label{thm:HfreeOverviewConn}
For any fixed graph $H$ that is not a linear forest containing only paths of length at most~$5$, any streaming algorithm for \textsc{Connectivity} in the AL model that uses $p$ passes over the stream must use $\Omega(n/p)$ bits of memory even on the class of $H$-free graphs.
\end{restatable}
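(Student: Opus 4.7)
The plan is to split into two cases based on the structure of $H$, and in each case to adapt a known $\Omega(n/p)$-hardness reduction for \textsc{Connectivity} from multi-pass Set Disjointness (which has $\Omega(n)$ randomized communication complexity, hence $\Omega(n/p)$ memory for any $p$-pass streaming algorithm) so that the hard instance is additionally $H$-free.

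If $H$ is not a linear forest, then $H$ contains either a cycle or a vertex of degree at least $3$, and so every linear forest is $H$-free. I would use the classical reduction on a path template $v_0 v_1 \cdots v_n$: Alice inserts the edges $\{v_{i-1} v_i : i \in S\}$ and Bob inserts $\{v_{i-1} v_i : i \in T\}$, so that the graph is connected iff $S \cup T = [n]$, which is equivalent to Disjointness on the complements $\overline{S},\overline{T}$. The resulting graph is a subgraph of a path and hence a linear forest, and therefore $H$-free.

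If $H$ is a linear forest whose longest path component has at least six vertices, then $H$ contains $P_6$ as an induced subgraph, and so every $P_6$-free graph is $H$-free. Here I would use a ``hub'' reduction on the vertex set $\{h\} \cup \{a_i, b_i : i \in [n]\}$: the edges $a_i b_i$ are always present, Alice inserts $\{h a_i : i \in S\}$, and Bob inserts $\{h b_i : i \in T\}$. The graph is connected iff $S \cup T = [n]$, again equivalent to Disjointness, yielding the $\Omega(n/p)$ bound. Removing $h$ leaves a perfect matching (whose maximum induced path is only $P_2$), and $h$ is the only possible branching vertex; a direct case check shows that every induced path through $h$ has the form $b_k, a_k, h, b_j, a_j$ with $k \in S \setminus T$, $j \in T \setminus S$, $k \neq j$, or shorter, so the graph is $P_6$-free.

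The main technical hurdle is the $P_6$-freeness verification in the second case, which reduces to a short case analysis on the possible adjacencies of $h$ with the endpoints and interior of a candidate induced path. Once $P_6$-freeness is established, $H$-freeness for every $H$ in the theorem's scope follows immediately, since any such $H$ has a component on at least six vertices and therefore contains $P_6$ as an induced subgraph. Beyond this verification and the standard care needed to realize the two-player protocol in the AL model so that each vertex's adjacency list remains contiguous in the stream (which can be arranged by ordering the blocks $a_1, \ldots, a_n, h, b_1, \ldots, b_n$ with the split block for $h$ lying across the Alice/Bob boundary), I do not anticipate other substantial obstacles.
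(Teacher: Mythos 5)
Your case 2 (the hub reduction for linear forests containing $P_7$) is correct and in fact a nice improvement on the paper's approach: identifying the two hubs $a,b$ into a single hub $h$ yields a graph whose longest induced path is $P_5$ (the forms $b_i,a_i,h,a_j,b_j$ and $b_i,a_i,h,b_j,a_j$ exhaust the possibilities once one tracks which endpoints of matched pairs are adjacent to $h$), so it is $P_6$-free, whereas the paper makes $a$ and $b$ adjacent in its `Simple AL-Conn' construction and only obtains $P_7$-freeness. Splitting $h$'s adjacency list across the Alice/Bob boundary is legitimate: the list is still contiguous in the stream and the protocol exchanges the memory state a constant number of times per pass.

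However, your case 1 has a genuine gap, and it is precisely the AL-model subtlety the paper flags (``no vertex may be incident to variable edges of both Alice and Bob''). In the path template $v_0\cdots v_n$ with edge $v_{i-1}v_i$ present iff $i\in S\cup T$, \emph{every} internal vertex $v_i$ has an adjacency list whose contents depend jointly on Alice's $S$ and Bob's $T$: whether $v_{i-1}$ is a neighbour depends on $i\in S\cup T$, and whether $v_{i+1}$ is a neighbour depends on $i+1\in S\cup T$. In the AL model the full adjacency list of each vertex must appear as a contiguous block, and neither party alone can produce a correct block for $v_i$. The split-block trick only tolerates \emph{one} vertex (or $O(1)$ vertices) whose list straddles a hand-off; here you would need $\Theta(n)$ hand-offs per pass, destroying the $\Omega(n/p)$ bound. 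There is no rearrangement of a linear-forest-shaped construction that avoids this, because Disjointness requires each variable edge to encode the OR of one bit of Alice and one bit of Bob, and the two endpoints of that edge must both have their full lists produced by a single party.

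The paper circumvents this by not insisting on a linear forest: for $H$ containing a cycle it uses the `Simple AL-Conn' construction, which \emph{does} contain cycles but only $C_6$ (and a subdivided variant avoids any one fixed $C_\ell$); for $H$ containing a degree-$3$ vertex it uses the `Cycles' construction, a union of cycles of maximum degree $2$. Both of these constructions route all Alice-dependent edges to vertices revealed by Alice and all Bob-dependent edges to vertices revealed by Bob, which is exactly what your path template cannot do. If you replace your case 1 by this pair of existing constructions (citing Theorems \ref{thm:SimpleAL-Conn} and \ref{thm:Cycles-Conn}), and keep your hub reduction for case 2, the proof goes through and in fact sharpens the path-length threshold from $5$ to $4$ (excluding $P_6$ rather than $P_7$).

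Two small remarks. First, you should check the degenerate inputs: when $S=T=\emptyset$ the hub $h$ is isolated, but in that case the Disjointness answer is also NO, so the reduction is consistent; the standard non-triviality assumptions on $\textsc{Disj}_n$ cover the rest. Second, the hub graph contains triangles whenever $i\in S\cap T$, which is harmless here because $H$ in case 2 is a linear forest; but it does mean the hub construction cannot be recycled for case 1.
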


\begin{restatable}{theorem}{DistHfreeOverviewConn}\label{thm:DistHfreeOverviewConn}
For any fixed graph $H$ that is not a linear forest containing only paths of length at most~$1$ and any fixed constant $k\geq 2$, any streaming algorithm for \textsc{Connectivity} in the AL model that uses $p$ passes over the stream must use $\Omega(n/p)$ bits of memory even on the class of graphs $G$ with a given set of vertices $X$ with $|X| = k$ such that $G-X$ is $H$-free.
\end{restatable}

Our hardness results for $H$-free modulators for both \textsc{Diameter} and \textsc{Connectivity} have meaning for several standard graph parameters, we explicitly introduce and mention these parameters in the main body of the paper.

As a final result, we use our insights into graph exploration on graphs of bounded vertex cover to show a result on the \textsc{Vertex Cover} problem itself. In particular, a kernel on $2k$ vertices for \textsc{Vertex Cover [$k$]} can be obtained as a stream in $\O(k^3)$ passes in the EA model using only $\O(k\log n)$ bits of memory. In the AL model, the number of passes is only $\O(k^2)$. This kernel still may have $\O(k^2)$ edges, which means that saving it in memory would not improve over the result of Chitnis et al.~\cite{ChitnisEsfandiariSampling} (which uses $\O(k^2 \log k \log n)$ bits of memory), up to logarithmic factors and constants. Indeed, a better kernel seems unlikely to exist~\cite{DellMelkebeekKernelLowerBounds}. However, the important point is that storing the (partial) kernel in memory is not needed during its computation. Hence, it may be viewed as a possible first step towards a streaming algorithm for \textsc{Vertex Cover [$k$]} using $\O(k\log n)$ bits of memory and $\text{poly}(k)$ passes, which is an important open problem in the field, see \cite{ChitnisTheory}.
Our kernel is constructed through a kernel by Buss and Goldsmith~\cite{BussVCkernel}, and then finding a maximum matching in an auxiliary bipartite graph (following Chen et al.~\cite{CHEN2001_VC}) of bounded size through repeated DFS applications.

\subparagraph*{Related work}
There has been substantial work on the complexity of graph-distance and reachability problems in the streaming setting. For example, Guruswami and Onak~\cite{GuruswamiO16} showed that any $p$-pass algorithm needs $n^{1+\Omega(1/p)}/p^{\O(1)}$ memory when given vertices $s,t$ to test if $s,t$ are at distance at most $2p+2$ in undirected graphs or to test $s$-$t$ reachability in directed graphs. Further work on directed $s$-$t$ reachability~\cite{AssadiR20} recently led to a lower bound that any $o(\sqrt{\log n})$-pass algorithm needs $n^{2-o(1)}$ bits of memory~\cite{CKPSSY21}. Other recent work considers $p$-pass algorithms for $\epsilon$-property testing of connectivity~\cite{YuVerbinCycleCounting,HuangP19,AssadiKSY20}, including strong memory lower bounds $n^{1-\O(\epsilon\cdot p)}$ on bounded-degree planar graphs~\cite{AssadiN21}. Further problems in graph streaming are extensively discussed and referenced in these works; see also~\cite{AssadiCK19-arxiv}.

In the non-streaming setting, the \textsc{Diameter} problem can be solved in $\O(nm)$ time by BFS. There is a lower bound of $n^{2-\epsilon}$ for any $\epsilon > 0$ under the Strong Exponential Time Hypothesis (SETH)~\cite{RodittyWilliamsSeth}. Parameterizations of \textsc{Diameter} have been studied with parameter vertex cover~\cite{BringmannHusfeldtDiameterTreewidth}, treewidth~\cite{AbboudWilliamsWangDiameter,HusfeldtDiameterTreewidth,BringmannHusfeldtDiameterTreewidth}, and other parameters~\cite{CoudertDucoffePopaDiameter,DiameterComplexityNonStreaming}, leading to a $2^{\O(k)} n^{1+\epsilon}$ time algorithm on graphs of treewidth~$k$ \cite{BringmannHusfeldtDiameterTreewidth}. Running time $2^{o(k)}n^{2-\epsilon}$ for graphs of treewidth $k$ is not possible under SETH~\cite{AbboudWilliamsWangDiameter}. Subquadratic algorithms are known for various hereditary graph classes; see e.g.~\cite{Cabello19,CorneilDHP01,Ducoffe21,DucoffeD21,DucoffeHV19,DucoffeHV20,GawrychowskiKMS21} and references in~\cite{CorneilDHP01}.


\section{Preliminaries}\label{sec:prelims}
We work on undirected, unweighted graphs. 
We denote a computational problem $A$ with \textsc{A~[$k$]}, where $[.]$ denotes the parameterization. A parameter $k$ is an integer given as additional input. In parameterized complexity the aim is to find algorithms with running time $f(k)\cdot n^{\O(1)}$, where $f$ is some computable function. This notion was introduced by Downey and Fellows~\cite{DowneyFellowsBook}, and we refer the reader to~\cite{BookParamComplexity} for more on parameterized complexity. In our setting, our aim will be to find streaming algorithms with a space complexity of $O(f(k) \log n)$ for some computable function $f$. This space complexity class is dubbed Fixed Parameter Streaming by Chitnis and Cormode~\cite{ChitnisTheory}.

\textsc{Diameter} is to compute $\max_{s,t\in V}d(s,t)$ where $d(s,t)$ denotes the distance between $s$ and $t$. \textsc{Connectivity} asks to decide whether or not the graph is connected.
A \emph{twin class} consists of all vertices with the same open neighbourhood. In a graph with vertex cover size $k$, we have $\O(2^k)$ twin classes. For two graphs $G,H$, $G+H$ denotes their disjoint union. We also use $2G$ to denote $G+G$; $3G$ is $G+G+G$, etc. $H\not\subseteq_i G$ denotes that $H$ is not an induced subgraph of $G$. A \emph{linear forest} is a disjoint union of paths. A path on $a$ vertices is denoted $P_a$ and has length~$a-1$.

\begin{problemP}
	\problemtitle{\textsc{Disj}$_n$ (Disjointness)}
	\probleminput{Alice has a string $x \in {\{0,1\}}^n$ given by $x_1x_2\ldots x_n$. Bob has a string $y \in {\{0,1\}}^n$ given by $y_1y_2\ldots y_n$.}
	\problemquestion{Bob wants to check if $\exists 1\leq i \leq n$ such that $x_i = y_i = 1$. (Formally, the answer is NO if this is the case.)}
\end{problemP}

The communication complexity necessary between Alice and Bob to solve \textsc{Disjointness} is well understood, and can be used to prove lower bounds on the memory use of streaming algorithms. This was first done by Henzinger et al.~\cite{HenzingerStreams}.
	
\begin{problemP}
	\problemtitle{\textsc{Perm}$_n$ (Permutation)}
	\probleminput{Alice gets a permutation $\pi : [n] \rightarrow [n]$, Bob gets an integer $j \in [n \log n]$.}
	\problemquestion{Is the $j$-th bit of the concatenation of $\pi(1), \pi(2), \ldots, \pi(n)$ equal to 1? That is, is the $\gamma = (j \bmod \log n)$-th bit of the image of the $\psi = \lceil \frac{j}{\log n} \rceil$-th index equal to 1?}
\end{problemP}
	
The values $\gamma$ and $\psi$ essentially break $j$ down into the quotient ($\psi$) and the remainder ($\gamma$) with divisor $\log n$. In essence, $\psi$ tells us in which `block' of $\log n$ bits Bob should look, while $\gamma$ tells us where in a block Bob should look. Let it be clear that $\psi \in [n]$ and $\gamma \in [\log n]$. For computation, one can compute $\gamma = (j + \log n - \lceil \frac{j}{\log n} \rceil \times \log n)$.
The problem of \textsc{Permutation} was created and applied first to the streaming setting by Sun and Woodruff~\cite{SunWoodruffPermBounds}.
The following formulations by Bishnu et al.~\cite{BishnuStreamingVC} come in very useful.

\begin{proposition}{(Rephrasing of items $(ii)$ of \cite[Proposition~5.6]{BishnuStreamingVC})}\label{prop:DisjReduction}
If we can show a reduction from \textsc{Disj}$_n$ to problem $\Pi$ in streaming model $\M$ such that in the reduction, Alice and Bob construct one model-$\M$ pass for a streaming algorithm for $\Pi$ by communicating the memory state of the algorithm only a constant number of times to each other, then any streaming algorithm working in the model $\M$ for $\Pi$ that uses $p$ passes requires $\Omega(n/p)$ bits of memory, for any $p \in \N$ \cite{ChitnisAnnouncement,BishnuFundamentalGeometric,AgarwalSpatialScan}.
\end{proposition}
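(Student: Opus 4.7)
The plan is to reduce the claim to the classical $\Omega(n)$ randomized communication complexity lower bound for \textsc{Disj}$_n$ (Kalyanasundaram--Schnitger, Razborov), by using the streaming algorithm for $\Pi$ as a black box inside a two-party communication protocol. Suppose, for contradiction, that $\A$ is a streaming algorithm for $\Pi$ in model $\M$ using $p$ passes and $s$ bits of memory with $s = o(n/p)$. Given an instance $(x,y)$ of \textsc{Disj}$_n$, I would have Alice and Bob invoke the hypothesized reduction to conceptually produce a model-$\M$ stream for $\Pi$ whose answer encodes disjointness of $x$ and $y$, and then jointly simulate $\A$ on this stream.

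Next, I would spell out the simulation of a single pass. By hypothesis, the reduction splits the responsibility for generating the stream between Alice and Bob, using only their local inputs $x$ and $y$, with the memory state of $\A$ being handed between them at most a constant number of times per pass (say $c$ handoffs). Whenever a handoff occurs, the current owner of $\A$'s state sends the entire memory contents, i.e.\ $s$ bits, to the other party, who then continues feeding their portion of the stream into $\A$. Thus a single pass incurs $O(c \cdot s) = O(s)$ bits of communication; over all $p$ passes the total is $O(p \cdot s)$ bits. At the end of the last pass, whichever party currently holds the state reads off $\A$'s output on $\Pi$ and, via the reduction, announces the answer to \textsc{Disj}$_n$.

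Finally, I would combine this protocol with the $\Omega(n)$ lower bound on the randomized communication complexity of \textsc{Disj}$_n$ to conclude $p \cdot s = \Omega(n)$, hence $s = \Omega(n/p)$. The statement follows for every $p \in \N$.

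The only delicate point — and what I would be most careful with — is ensuring that the ``constant number of handoffs'' in the hypothesis is genuinely a per-pass constant independent of $p$; otherwise the total communication bound degrades. Given that, the rest is a routine application of the standard streaming-to-communication template; this is exactly why subsequent hardness results in the paper can be stated simply by exhibiting a reduction in which Alice and Bob each produce a bounded number of contiguous chunks of the AL stream from their private inputs.
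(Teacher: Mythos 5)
The paper does not actually prove Proposition~\ref{prop:DisjReduction}: it is stated as a rephrasing of a result from Bishnu et al.~and is cited, not reproved. Your proposal reconstructs what is essentially the standard streaming-to-communication argument underlying that cited result, and it is correct: a $p$-pass, $s$-bit algorithm yields a two-party protocol for \textsc{Disj}$_n$ with $O(c\,p\,s)$ bits of communication (where $c$ is the per-pass constant number of memory-state handoffs), and combining with the $\Omega(n)$ randomized communication lower bound for \textsc{Disj}$_n$ gives $s = \Omega(n/p)$. This matches the intended argument exactly; the one point you flag --- that $c$ must be a per-pass constant independent of $p$ --- is precisely what the hypothesis guarantees, and it is correctly handled.
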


\begin{proposition}{(Rephrasing of item $(iii)$ of \cite[Proposition~5.6]{BishnuStreamingVC})}\label{prop:PermReduction}
If we can show a reduction from \textsc{Perm}$_n$ to a problem $\Pi$ in the streaming model $\M$ such that in the reduction, Alice and Bob construct one model-$\M$ pass for a streaming algorithm for $\Pi$ by communicating the memory state of the algorithm only a constant number of times to each other, then any streaming algorithm working in the model $\M$ for $\Pi$ that uses $1$ pass requires $\Omega(n\log n)$ bits of memory.
\end{proposition}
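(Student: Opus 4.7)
The plan is to derive the streaming lower bound from the communication complexity lower bound for \textsc{Perm}$_n$ by a standard simulation argument. The crucial fact I would invoke is that the one-way communication complexity of \textsc{Perm}$_n$ is $\Omega(n \log n)$ bits, as established by Sun and Woodruff~\cite{SunWoodruffPermBounds}; indeed, Alice's input $\pi$ encodes $n \log n$ bits of entropy, and Bob's single-index query forces any protocol with bounded error to essentially transmit the whole permutation's image description up to constants.

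First I would assume toward contradiction a streaming algorithm $\A$ for $\Pi$ in model $\M$ that uses a single pass and $s(n)$ bits of memory, with the goal of showing $s(n) = \Omega(n \log n)$. Given an instance $(\pi, j)$ of \textsc{Perm}$_n$, Alice and Bob apply the hypothesized reduction to produce a model-$\M$ stream for $\Pi$, where Alice controls the prefix of stream tokens determined by $\pi$ and Bob controls the suffix determined by $j$ (possibly with alternation, but only a constant number of handoffs by hypothesis). They then jointly simulate $\A$ on this stream: whichever party currently owns the stream runs $\A$ locally on their portion, and at each handoff they ship the $s(n)$-bit memory state of $\A$ across the channel. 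After the final handoff, the party holding the state reads off the output of $\A$ and, using the reduction, reports the answer to \textsc{Perm}$_n$.

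Since the reduction uses only $c$ handoffs for some absolute constant $c$, the total communication is at most $c \cdot s(n) + O(1)$ bits. Combined with the $\Omega(n \log n)$ communication lower bound for \textsc{Perm}$_n$, this forces $s(n) = \Omega(n \log n)$, as claimed. The main subtlety, and the only step that requires care beyond bookkeeping, is ensuring that the simulation is faithful in the model $\M$: each party must be able to generate their slice of the stream locally from their own input alone, and the combined stream must be a legal $\M$-stream (e.g.\ a valid adjacency-list ordering in the AL case). This is exactly what is built into the hypothesis of the proposition that ``Alice and Bob construct one model-$\M$ pass,'' so no additional work is required there. A final minor point is handling the success probability: since the protocol is obtained from $\A$ by exchanging memory states verbatim, it inherits the same error bound as $\A$, and the $\Omega(n \log n)$ lower bound for \textsc{Perm}$_n$ applies in the same (randomized, bounded-error) regime.
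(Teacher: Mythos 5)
The paper does not actually prove Proposition~\ref{prop:PermReduction}: it states it as a rephrasing of a result of Bishnu et al.\ and cites that paper, so there is no in-paper proof to compare against. Your argument is the standard simulation argument that one would reconstruct for such a proposition, and it captures the intended reasoning: simulate the one-pass algorithm across Alice's and Bob's portions of the stream, ship the $s(n)$-bit memory state at each handoff, and combine the resulting $O(s(n))$-bit protocol with the Sun--Woodruff communication lower bound for $\textsc{Perm}_n$.

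One subtlety deserves to be made explicit, and it is a looseness you have inherited from the proposition's own phrasing rather than introduced yourself. You invoke the \emph{one-way} communication lower bound for $\textsc{Perm}_n$, but then allow a ``constant number'' of handoffs, which could in principle produce a constant-round interactive protocol. For a genuinely interactive protocol the $\Omega(n\log n)$ bound does not hold: Bob could simply send his index $j$ to Alice in $O(\log n)$ bits, after which Alice can answer, so even a two-round protocol for $\textsc{Perm}_n$ (with either party speaking first) has cost $O(\log n)$. The argument therefore only goes through if all handoffs run from Alice to Bob, i.e.\ Alice contributes a prefix of the one-pass stream and Bob a suffix, so that the simulated protocol is one-way. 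This is exactly what happens in every application in the paper (Theorems~\ref{thm:WindmillPerm}, \ref{thm:DiamondPerm}, \ref{thm:CyclesPermConn}, each with a single Alice-to-Bob handoff), and it is almost certainly what the cited Proposition~5.6(iii) intends. If you tighten your write-up to say that the handoffs must all go from Alice to Bob (equivalently, Alice's segment of the stream precedes Bob's), the proof is complete and correct.
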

If we can show a reduction from either \textsc{Disjointness} or \textsc{Permutation}, we call a problem `hard', as it does not admit algorithms using only poly-logarithmic memory.

Any upper bound for the EA model holds for all models, and an upper bound for the VA model also holds for the AL model. On the other hand, a lower bound in the AL model holds for all models, and a lower bound for the VA model also holds for the EA model.


\section{Upper Bounds for Diameter}\label{sec:upperboundsdiameter}

We give an overview of our upper bound results for \textsc{Diameter} in Table~\ref{table:DiameterUpperBounds}.
The memory-efficient results rely on executing a BFS on the graph, which is made possible by both the parameter and the use of the AL model. The one-pass results rely on the possibility to save the entire graph in a bounded fashion. Our upper bounds assume the deletion set related to the parameter is given, that is, it is in memory.

\begin{table}[ht]
\begin{tabular}{|l|l|l|l|l|}
\hline
 Parameter ($k$) & Passes & Memory (bits) & Model & Theorem\\ \hline
 \textsc{Vertex Cover} & \cellcolor{LightGreen}$\O(2^kk)$ & \cellcolor{LightGreen}$\O(k \log n)$ & \cellcolor{LightGreen}AL & \ref{thm:DiameterVC}\\ 
 & \cellcolor{LightGreen}1 & \cellcolor{LightGreen}$\O(2^k + k \log n)$ & \cellcolor{LightGreen}AL & \ref{thm:VC1pass} \\ \hline
 \textsc{Distance to $\ell$ cliques} & \cellcolor{LightGreen}$\O(2^k\ell k)$ & \cellcolor{LightGreen}$\O((k + \ell) \log n)$ & \cellcolor{LightGreen}AL & \ref{thm:DiameterDistCliques}\\
 & \cellcolor{LightGreen}1 & \cellcolor{LightGreen}$\O(2^k\ell + (k + \ell) \log n)$ & \cellcolor{LightGreen}AL & \ref{thm:Clique1pass}\\ \hline
\end{tabular}
\caption{Overview of the algorithms and their complexity for \textsc{Diameter} and \textsc{Connectivity}.}
\label{table:DiameterUpperBounds}
\end{table}

\begin{restatable}{lemma}{VCpathlength}\label{lemma:VCpathlength}
    In a graph with vertex cover size $k$, any simple path has length at most~$2k$.
\end{restatable}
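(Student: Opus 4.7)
Let $S$ be a vertex cover of size at most $k$, so that $V(G) \setminus S$ is an independent set. Consider an arbitrary simple path $P = v_0 v_1 \cdots v_\ell$ of length $\ell$. The plan is to bound $\ell$ by exploiting the independence of $V(G)\setminus S$ to limit how many ``non-cover'' vertices can appear on $P$.

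Partition the vertices of $P$ into those lying in $S$ (say $a$ of them) and those lying in $V(G)\setminus S$ (say $b$ of them), so that $\ell + 1 = a + b$. Since $V(G)\setminus S$ is independent, no two vertices of $V(G)\setminus S$ can be consecutive on $P$; therefore between any two vertices of $V(G)\setminus S$ on $P$ there must be at least one vertex of $S$. A straightforward counting argument then gives $b \le a + 1$, because the $b$ non-cover vertices occupy slots that are pairwise separated by at least one $S$-vertex, with at most one additional non-cover vertex possibly at each end. Combining, $\ell + 1 = a + b \le 2a + 1$, hence $\ell \le 2a \le 2k$.

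There is really no obstacle here; the only thing to be careful about is the off-by-one in the ``length'' convention (length $\ell$ means $\ell+1$ vertices and $\ell$ edges) and making sure the separation argument is stated crisply. The entire proof is one or two sentences once the observation that $V(G)\setminus S$ is independent is made explicit.
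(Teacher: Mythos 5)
Your proof is correct and follows essentially the same approach as the paper's: both rely on the observation that vertices outside the cover form an independent set, so they must alternate with cover vertices along the path, giving at most $2k+1$ vertices and hence length at most $2k$. Your version is slightly more explicit in the counting (the bound $b \le a+1$), but the underlying idea is identical.
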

\begin{proof}
    Let $G$ be a graph with vertex cover size $k$. Consider some simple path $P$ in the graph. Any vertex in the independent set of $G$ (i.e. not in the vertex cover) that is on the path $P$ only has neighbours in the vertex cover. Hence, for each vertex in the independent set the path visits, we also visit a vertex in the vertex cover. As the vertex cover has size $k$, any simple path can visit at most $2k + 1$ vertices, as then all vertices in the vertex cover have been visited.
\end{proof}

\noindent Lemma~\ref{lemma:VCpathlength} is useful in that the diameter of such a graph can be at most $2k$ if the graph is connected.
Our algorithm will simulate a BFS for $2k$ rounds, deciding on the distance of a vertex to all other vertices.

\begin{lemma}\label{lemma:BFSinVCgraph}
    Given a graph $G$ as an AL stream with a vertex cover $X$ of size $k$, we can compute the distance from a vertex $v$ to all others using $\O(k)$ passes and $\O(k \log n)$ bits of memory.
\end{lemma}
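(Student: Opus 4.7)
The plan is to reduce the BFS to computing $D[x] := d_G(v,x)$ only for the $k$ vertices $x \in X$, exploiting that $I := V(G) \setminus X$ is an independent set. Since every $u \in I$ has $N(u) \subseteq X$, once all values $D[x]$ are known we can recover $d_G(v,u) = 1 + \min_{x \in N(u)} D[x]$ in one final pass by reading each $u$'s adjacency list on the fly. The table $D[\cdot]$ has only $k$ entries, hence fits in $\O(k \log n)$ bits.

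To fill $D$, I would run a Bellman-Ford style relaxation on the AL stream. I initialize $D[x] = \infty$ everywhere, and in a single initialization pass set $D[v] = 0$ if $v \in X$, or set $D[x] = 1$ for every $x \in N(v)$ if $v \in I$ (these are at most $k$ entries, all in $X$). I then repeat the following relaxation pass $\O(k)$ times. Whenever the stream yields the adjacency list of a vertex $u$, in the case $u \in X$ I relax $D[w] \gets \min(D[w], D[u]+1)$ for every $w \in N(u) \cap X$; in the case $u \in I$, I first compute $d_{\min} := \min_{x \in N(u)} D[x]$ on the fly and then relax $D[x] \gets \min(D[x], d_{\min}+2)$ for every $x \in N(u)$. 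The first kind of update handles unit-length hops inside $X$, while the second handles the length-$2$ detour $y \to u \to x$ through an independent vertex $u$.

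Correctness follows by induction on the pass count: after $t$ relaxation passes, $D[x] = d_G(v,x)$ for every $x \in X$ whose distance from $v$ is at most $t$ (respectively $t+1$ when $v \in I$). The inductive step decomposes a shortest $v$-to-$x$ path of length $t+1$ according to the location of its penultimate vertex $w_t$: if $w_t \in X$, the IH gives $D[w_t] = t$ at the start of pass $t+1$ and the direct-edge relaxation at $w_t$ forces $D[x] \leq t+1$; if $w_t \in I$, the vertex $w_{t-1} \in X$ two steps before $x$ has $D[w_{t-1}] = t-1$ by the IH, which combined with the walk-based lower bound $D[y] \geq d_G(v,y)$ (every $D$-value is the length of a concrete walk from $v$) pins $d_{\min}$ at $w_t$ to exactly $t-1$, so the 2-hop relaxation at $w_t$ forces $D[x] \leq t+1$. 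The matching lower bound $D[x] \geq d_G(v,x)$ is immediate from the same walk interpretation.

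By Lemma~\ref{lemma:VCpathlength} every vertex reachable from $v$ lies at distance at most $2k$, so $2k$ relaxation passes suffice; together with initialization and the output pass for $I$-vertices this gives $\O(k)$ passes in total, with $\O(k \log n)$ bits of memory throughout. The main obstacle, beyond verifying the induction, is the asymmetric initialization for $v \in I$: because no entry $D[v]$ is ever stored, a naive 2-hop relaxation at $v$ would yield $D[x] \leq 3$ at its $X$-neighbours rather than the correct $1$, so the initialization pass is needed, and one must also verify that subsequent relaxations never undercut the true distances.
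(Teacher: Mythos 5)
Your proposal is correct and takes essentially the same approach as the paper: a round-by-round Bellman--Ford relaxation that stores tentative distances only for vertices of $X$, performs one-hop relaxations from $X$-vertices and two-hop relaxations through independent-set vertices, and appeals to Lemma~\ref{lemma:VCpathlength} to bound the number of rounds by $2k$. The only cosmetic difference is that the paper stores $d(v)$ explicitly even when $v \notin X$, whereas you instead seed $D[x]=1$ for $x \in N(v)$ in an initialization pass; both handle the $v \in I$ case correctly.
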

\begin{proof}
    We simulate a BFS originating at $v$ for at most $2k$ rounds on our graph, using a pass for each round. Contrary to a normal BFS, we only remember whether we visited the vertices in $X$ and their distances, to reduce memory complexity.
    
    For every vertex $w\in X$, we save its tentative distance $d(w)$ from $v$; if this is not yet decided, this field has value $\infty$. Our claim will be that after round $i$, the value of $d(w)$ for vertices $w$ within distance $i$ from $v$ is correct. We initialize the distance of $v$ as $d(v) = 0$ (we store $d(v)$ regardless of whether $v\in X$).
    
    Say we are in round $i\geq 1$. We execute a pass over the stream. Say we view a vertex $w \in X \cup \{v\}$ in the stream with its adjacencies. If $w$ has a distance of $d(w)$, we update the neighbours $u$ of $w$ in $X$ to have distance $d(u) = \min(d(u), d(w)+1)$. If instead we view a vertex $w \notin X \cup \{v\}$ in the stream, we do the following. Locally save all the neighbours and look at their distances, and let $z$ be the neighbour with minimum $d(z)$ value. For every $u\in N(w)$ we update the distance as $d(u) = \min(d(z)+2, d(u))$. This simulates the distance of a path passing through $w$ (note that this may not be the shortest path, but this may be resolved by other vertices). Executing this procedure for every vertex of $G$ takes a single pass, as by the AL model we see all the adjacencies of a vertex when it arrives in the stream. This completes the procedure for round $i$.
    
    Notice that we use only $\O(k \log n)$ bits of memory during the procedure, and that the total number of passes is indeed $\O(k)$ as we execute $2k$ rounds, using one pass each.
    
    For the correctness, let us first argue the correctness of the claim \emph{after round $i$, the value of $d(w)$ of vertices $w\in X$ within distance $i$ from $v$ is correct}. We proceed by induction, clearly the base case of 0 is correct. Now consider some vertex $w$ at distance $i$ from $v$. Consider a shortest path from $v$ to $w$. Look at the last vertex on the path before visiting $w$. If this vertex is in $X$, then by induction, this vertex has a correct distance after round $i-1$, and so, in round $i$ this vertex will update the distance of $w$ to be $i$. If this vertex is not in $X$, then it has a neighbour with distance $i-2$, which is correct after round $i-2$ by induction, and so, the vertex not in $X$ will (have) update(d) the distance of $w$ to be $i$ in round $i$.
    
    The correctness of the algorithm now follows from the claim, together with Lemma~\ref{lemma:VCpathlength}, and the fact that we can now output all distances using a single pass by either outputting the value of the field $d(w)$ for a vertex $w\in X$, or by looking at all neighbours of a vertex $w \notin X$ and outputting the smallest value $+1$.
\end{proof}

Related is a lower bound result by Feigenbaum et al.~\cite{GraphDistancesDataStreaming}, which says that any BFS procedure that explores $k$ layers of the BFS tree must use at least $k/2$ passes or super-linear memory. This indicates that memory- and pass-efficient implementations of BFS, as in Lemma~\ref{lemma:BFSinVCgraph}, are hard to come by.

We can now use Lemma~\ref{lemma:BFSinVCgraph} to construct an algorithm for finding the diameter of a graph parameterized by vertex cover, essentially by executing Lemma~\ref{lemma:BFSinVCgraph} for every twin class, which considers all options for vertices in the graph.

\begin{restatable}{theorem}{DiameterVC}\label{thm:DiameterVC}
    Given a graph $G$ as an AL stream with vertex cover $X$ of size $k$, we can solve \textsc{Diameter [$k$]} in $\O(2^kk)$ passes and $\O(k \log n)$ bits of memory.
\end{restatable}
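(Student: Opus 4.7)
The plan is to reduce the \textsc{Diameter} computation to running the BFS procedure of Lemma~\ref{lemma:BFSinVCgraph} from a small number of carefully chosen source vertices, then taking the maximum distance observed. The key observation is that any two vertices in the same twin class have identical neighborhoods in $G$, hence identical distance profiles, so BFS from one representative suffices for the entire class. Since every vertex outside $X$ has its neighborhood entirely inside $X$, each such vertex is labeled by a subset of $X$, giving at most $2^k$ twin classes among independent-set vertices; together with the $k$ vertices of $X$ themselves, there are $\O(2^k)$ twin classes total.

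The algorithm iterates over all $2^k$ subsets $S \subseteq X$ in a canonical order. For each $S$, the algorithm first performs one pass to locate (if it exists) a vertex $v \notin X$ with $N(v) = S$: during the pass, whenever a non-$X$ vertex arrives in the AL stream, we read its adjacency list (which lies in $X$), compare it against $S$ using $\O(k)$ bits, and store the first matching vertex as our representative. We then invoke Lemma~\ref{lemma:BFSinVCgraph} to compute the distance from $v$ to every other vertex in $\O(k)$ passes and $\O(k\log n)$ bits. During the final output pass of that BFS, instead of writing out distances, we maintain a running maximum $D$ of the distances encountered. We repeat the same procedure separately with each of the $k$ vertices of $X$ as source. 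At the end, $D$ is returned as the diameter (or $\infty$ if some distance remained $\infty$, signaling a disconnected graph, in which case we can either output $\infty$ or handle it as desired).

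For correctness, observe that the diameter is realized by some pair $(s,t)$ of vertices. Both $s$ and $t$ belong to some twin class, and our iteration chooses a representative from every twin class that is non-empty; BFS from the representative of $s$'s class yields the correct distance to $t$ (which equals $d(s,t)$ by the twin property), and this distance is captured by $D$. By Lemma~\ref{lemma:VCpathlength} all finite distances are at most $2k$, so $D$ fits in $\O(\log k)$ bits. For the resource bounds, the outer loop has $2^k + k = \O(2^k)$ iterations, each consisting of one locator pass plus $\O(k)$ BFS passes, giving $\O(2^k k)$ passes in total. At any moment we store only the current subset $S$ ($k$ bits), at most one representative vertex ($\O(\log n)$ bits), the BFS state of Lemma~\ref{lemma:BFSinVCgraph} ($\O(k\log n)$ bits), and the running maximum $D$ ($\O(\log k)$ bits), all within $\O(k\log n)$ bits.

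The main obstacle I anticipate is exactly what the twin-class argument lets us sidestep: we cannot afford to remember the set of already-processed classes (which could require $2^k$ bits), nor can we afford to pick and store $\O(2^k)$ representatives simultaneously. Iterating over subsets of $X$ in a canonical order, and discovering a representative on the fly with a single extra pass per subset, keeps the memory at $\O(k\log n)$ while adding only a constant-factor overhead to the pass count.
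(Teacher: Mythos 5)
Your proposal is correct and takes essentially the same approach as the paper: enumerate the $2^k$ possible neighbourhoods in $X$ using $k$ bits, find a representative for each twin class with one extra pass, run the BFS of Lemma~\ref{lemma:BFSinVCgraph} from each representative and from each vertex of $X$, and keep a running maximum. Your write-up spells out the twin-class correctness argument and the pass/memory accounting somewhat more explicitly, but the algorithm and its analysis are the same.
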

\begin{proof}
    We enumerate all the twin classes of the neighbourhood in $X$ of vertices (which we can do with $k$ bits), and for each such a class, we find if there is a vertex realizing this class in a pass. Then, we call the algorithm of Lemma~\ref{lemma:BFSinVCgraph} with this vertex as $v$. Instead of outputting all distances, we are only interested in the largest distance found (which may be $+\infty$). We also call Lemma~\ref{lemma:BFSinVCgraph} for every $x\in X$ with $x$ as the vertex $v$. We keep track of the largest distance found over all calls to Lemma~\ref{lemma:BFSinVCgraph}, and output this value as the diameter.
    
    The correctness follows from the correctness of Lemma~\ref{lemma:BFSinVCgraph}, together with the fact that considering each twin class of the neighbourhood in $X$ combined with all vertices in $X$ actually considers all possible vertices that may occur in $G$, and so we also consider one of the vertices of the diametric pair in one of these iterations.
\end{proof}

We show an alternative one-pass algorithm, by saving the graph as a representation by its twin classes, thereby completing the proof of Theorem~\ref{thm:VCDiameterConnSummary}.

\begin{restatable}{theorem}{VConepass}\label{thm:VC1pass}
    Given a graph $G$ as an AL stream, we can solve \textsc{Diameter [$k$]} in one pass and $\O(4^k + k \log n)$ bits of memory, or correctly report that a vertex cover of size $k$ does not exist. When a vertex cover of size $k$ is given, the memory use is $\O(2^k + k \log n)$.
\end{restatable}
\begin{proof}
    In our pass, we greedily construct a vertex cover of size $2k$ by maintaining a maximal matching. If at any point the matching exceeds $2k$ vertices, we report that no vertex cover of size $k$ exists. We can characterize vertices not in the vertex cover by their adjacencies towards the vertex cover, i.e.\ the binary string of at most $2k$ bits with a 1 if the vertex is adjacent. We call this binary string the characterization of a vertex. In the pass, we also keep track of the adjacency matrix of edges within the vertex cover, and the characterization of vertices not in the vertex cover. Seeing a vertex, we either add it to the vertex cover if it has a neighbour that is not in the vertex cover (add that edge to the matching), in which case we can update the edges within the vertex cover. Otherwise, a vertex has only neighbours in the vertex cover, which means we can save its characterization. Any edge in the vertex cover will be registered, as when the second of its two vertices is added to the vertex cover, we will register the presence of this edge. Any edge with one endpoint $v$ not in the vertex cover will be registered when saving the characterization of $v$, which, at that point in the stream, can only have neighbours in the vertex cover (otherwise it would have been added too).
    
    There are only $\O(2^{2k})$ different characterizations of adjacencies to the vertex cover of size $2k$, and hence, for each we can save one or two bits whether there is a vertex with this neighbourhood and whether there is more than one. The procedure above can decide such properties locally using $\O(k \log n)$ bits. The adjacency matrix of the vertex cover takes $\O(k^2)$ bits. So, we can save all this information and the vertex cover itself using $\O(4^k + k \log n)$ bits. When a vertex cover is given, there are only $\O(2^{k})$ different characterizations and so we use only $\O(2^k + k \log n)$ bits.
    
    Next we argue that this information is enough to decide on \textsc{Diameter}. We can use a simple enumeration technique to find the diameter of the graph. To do this, for every pair, we find the distance between them, and keep track of the largest distance found. For a given pair of vertices (given by their adjacencies towards the vertex cover, or a vertex in the vertex cover itself), we can decide on the distance between them using a procedure similar to Lemma~\ref{lemma:BFSinVCgraph} but internally instead of making actual passes over the stream.
\end{proof}

We have seen all the elements of Theorem~\ref{thm:VCDiameterConnSummary}.
\begin{proof}[Proof of Theorem~\ref{thm:VCDiameterConnSummary}]
	This follows immediately from Theorem~\ref{thm:DiameterVC} and \ref{thm:VC1pass}.
\end{proof}

Next, we show that the idea of simulating a BFS extends to another similar setting, where instead of a bounded vertex cover we have a bounded deletion distance to $\ell$ cliques. The good thing about cliques is that we need not search in them, the distances in a clique are known if we know the smallest distance to some vertex in the clique. However, we will need to save the smallest distance to each clique to propagate distances in the network as different vertices in a single clique can have many different adjacencies to the deletion set. This is the reason we require a bounded number of cliques.

\begin{restatable}{lemma}{DistLCliquepathlength}\label{lemma:DistLCliquepathlength}
    In a graph $G$ with deletion distance $k$ to $\ell$ cliques, any shortest path between two vertices is of length at most $3k + 1$, if it exists.
\end{restatable}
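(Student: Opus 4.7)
My plan is to bound the length of a shortest path by analyzing how it can interact with the $\ell$ cliques of $G - X$ and the modulator $X$ itself. I will show that each clique contributes at most $2$ consecutive vertices to the path, and each such block must be separated from the next by a vertex of $X$, yielding the total bound.

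First, I would establish the key structural lemma: if $P = v_0, v_1, \ldots, v_t$ is a shortest path in $G$, then for every clique $C$ of $G-X$, the vertices of $P$ belonging to $C$ form a consecutive block of at most two vertices. The argument is a standard shortcut: if $v_i, v_j \in C$ with $i < j$, then $v_iv_j \in E(G)$ (since $C$ is a clique), so the sub-path $v_i, v_{i+1}, \ldots, v_j$ could be replaced by the edge $v_iv_j$, contradicting minimality of $P$ unless $j = i+1$.

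Next I would use that $G - X$ has no edges between distinct cliques. Thus any two consecutive vertices on $P$ that lie outside $X$ must be in the same clique. Consequently, if we remove the vertices of $P$ lying in $X$ from $P$, the remaining maximal runs of clique-vertices each come from a single clique, and by the previous step each such run has length at most $2$. Let $x \le k$ denote the number of vertices of $P$ that lie in $X$; these partition $P$ into at most $x+1$ gaps (before the first $X$-vertex, between consecutive ones, and after the last), and each gap contains at most $2$ clique-vertices.

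Putting this together, the total number of vertices on $P$ is at most $x + 2(x+1) \le k + 2(k+1) = 3k+2$, so the length of $P$ is at most $3k+1$. The only delicate point, and the one I would write out carefully, is the shortcut argument justifying the "at most $2$ per clique and consecutive" claim; everything else is a direct counting argument exploiting the absence of edges between distinct cliques of $G-X$.
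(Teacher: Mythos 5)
Your proof is correct and takes essentially the same approach as the paper's: a shortcut argument bounds the number of consecutive clique vertices on the path by two, adjacent non-modulator vertices must lie in the same clique, and counting the at most $k$ modulator vertices yields at most $k + 2(k+1) = 3k+2$ vertices, hence length at most $3k+1$. The paper phrases this more tersely (each interior clique vertex has a path-neighbour in $X$, and at most one path edge lies inside any single clique), but the underlying ideas are the same; you simply spell out the counting more explicitly.
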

\begin{proof}
    Let $G$ be a graph with deletion distance $k$ to $\ell$ cliques. Consider some shortest path between two vertices $v,w$. Any vertex on this path that is not $v,w$ in one of the $\ell$ cliques must have as one of its neighbours on the path a vertex in the deletion set. If the path contains more than one edge from a single clique, it is not a shortest path. Hence, the path has length at most $3k + 1$.
\end{proof}

Lemma~\ref{lemma:DistLCliquepathlength} indicates that we can use a similar approach in simulating a BFS of bounded depth to find distances.

\begin{restatable}{lemma}{BFSinDistCliquegraph}\label{lemma:BFSinDistCliquegraph}
    Given a graph $G$ as an AL stream with deletion distance $k$ to $\ell$ cliques, with the given deletion set $X$, we can decide on the distance from one vertex $v$ to all others using $\O(k)$ passes and $\O((k+\ell) \log n)$ bits of memory.
\end{restatable}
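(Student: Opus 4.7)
The plan is to simulate a BFS from $v$ in the AL stream, mirroring Lemma~\ref{lemma:BFSinVCgraph} but adapted to a clique modulator. By Lemma~\ref{lemma:DistLCliquepathlength} any shortest path has length at most $3k+1$, so $3k+1$ rounds (one pass each), plus an initial pass and one final output pass, yield a total of $\O(k)$ passes. The structural observation driving the memory bound is that within each clique $C_j$, every vertex is at distance $d_j$ or $d_j + 1$ from $v$, where $d_j := \min_{w \in C_j} d(v,w)$, since a minimum-distance vertex reaches every other clique member via a single clique edge. Hence we only store $d(x)$ for each $x \in X$ and $d_j$ for each clique $C_j$, together with a representative vertex per clique for identification; this uses $\O((k+\ell)\log n)$ bits. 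An initial pass identifies the $\ell$ cliques and picks a representative per clique (for each $u \notin X$, the smallest-indexed vertex of $\{u\} \cup (N(u) \setminus X)$, which is common to all members of the clique) from the local AL view of each non-$X$ vertex.

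In each subsequent pass (one BFS round), on seeing $x \in X$ we update $d(y) \leftarrow \min(d(y), d(x)+1)$ for $y \in N(x) \cap X$ and $d_j \leftarrow \min(d_j, d(x)+1)$ for every clique $C_j$ that $x$ has a neighbour in. On seeing $u \in C_j \setminus X$ we first compute $T := \min_{x' \in N(u) \cap X}(d(x') + 1)$, update $d_j \leftarrow \min(d_j, T)$, and then update $d(x) \leftarrow \min(d(x), \min(T, d_j + 1) + 1)$ for each $x \in N(u) \cap X$. The term $\min(T, d_j + 1)$ correctly upper-bounds the tentative distance of $u$ whether or not $u$ attains the minimum in $C_j$: when it does, $T = d_j$ and propagation gives $T + 1$; when it does not, the clique edge from a minimum-distance vertex gives $d_j + 1$. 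One final pass outputs distances: $d(x)$ for $x \in X$, and $\min(T(u), d_j + 1)$ for $u \in C_j \setminus X$ computed locally from its $X$-neighbours.

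Correctness proceeds by induction on the round $r$: after round $r$, $d(x) = d(v, x)$ for every $x \in X$ with $d(v, x) \leq r$, and $d_j = \min_{w \in C_j} d(v, w)$ whenever this minimum is at most $r$. The inductive step analyses a shortest path from $v$ to a newly-discovered vertex, splitting by whether the penultimate vertex lies in $X$ or in some clique $C_{j'}$, and in the latter sub-case by whether it is a minimum-distance vertex of $C_{j'}$ (handled by the $T$ update from its $X$-neighbour) or not (handled by the $d_{j'} + 1$ update propagating via the clique edge from a minimum-distance vertex). The main subtle point I anticipate justifying is that performing updates in-place during a pass, rather than against a snapshot from the previous round, is safe: since tentative values only decrease monotonically and never drop below the true distance, each update is a valid upper-bound propagation and the invariant is preserved.
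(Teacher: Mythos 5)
Your proposal is correct and follows essentially the same approach as the paper's proof: a depth-bounded BFS simulation (bounded via Lemma~\ref{lemma:DistLCliquepathlength}) that stores only the distances of the $k$ modulator vertices and one ``minimum distance so far'' value per clique, updating via neighbourhoods read locally from the AL stream. Your update rule $\min(T, d_j+1)+1$ is just a cleaner reformulation of the paper's two-case analysis (whether the current clique vertex attains the clique minimum or not), and your added remarks about clique representatives and the safety of in-place updates within a pass fill in details the paper leaves implicit.
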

\begin{proof}
    Similar to Lemma~\ref{lemma:BFSinVCgraph}, we simulate a BFS originating at $v$ of at most $3k+1$ rounds, using a pass for each round. We only remember distances for vertices in the deletion set, and the smallest distance in each clique. This way the memory complexity remains small.
    
    The setup of the algorithm is as follows. For every vertex in the deletion set $X$, and for every clique, we save its distance from $v$, if this is not yet decided this field has value $+\infty$. For a clique, this value means the smallest distance from $v$ to \emph{some} vertex of the clique. Let us denote $d(w)$ as the value of this field for a vertex $w\in X$ or a clique with which we associate a (non-existent) vertex $w$. Our claim is that after round $i$, the fields $d(w)$ for $w$ within distance $i$ from $v$ are correct. We initialize $d(v) = 0$, and if it is contained in a clique, then set $d(w) = 0$ for the associated vertex $w$.
    
    Let us describe the workings of a round, say round $i\geq 1$. We make a pass over the stream, and for each vertex we do the following. If we see a vertex $w\in X$ with distance $d(w)$, we update the distances of its neighbours $u\in N(w)$ as $\min(d(u), d(w)+1)$. If we see a vertex $c$ contained in some clique with associated vertex $w$, we look at its neighbours in $N(c) \cap X$. If a neighbour $u$ of $c$ has $d(u) + 1 \leq d(w)$, we update $d(w)$, as $c$ realizes this distance in the clique. Therefore, we update the neighbours $u \in N(c) \cap X$ with $d(u) = \min(d(u), d(w)+1)$. Otherwise, $d(w)$ is not realized by $c$, but by another vertex of the clique, and so, we can update the neighbours $u$ of $c$ in $X$ with $d(u) = \min(d(u), d(w) + 2)$. This concludes what we do in a round.
    
    Notice that we can always identify which clique a vertex not in $X$ belongs to, as the AL stream provides all its neighbours.
    
    The correctness of this algorithm quickly follows from the correctness of Lemma~\ref{lemma:BFSinVCgraph} combined with how we handle cliques here, and Lemma~\ref{lemma:DistLCliquepathlength}. Let it also be clear that we use $\O(k)$ passes and use $\O((k+\ell)\log n)$ bits of memory.
\end{proof}

Using Lemma~\ref{lemma:BFSinDistCliquegraph} we can decide the diameter of the graph by calling it many times for the possible vertices in the graph.

\begin{restatable}{theorem}{DiameterDistCliques}\label{thm:DiameterDistCliques}
    Given a graph $G$ as an AL stream with deletion distance $k$ to $\ell$ cliques, with the given deletion set $X$, we can solve \textsc{Diameter [$k,\ell$]} using $\O(2^k\ell k)$ passes and $\O((k+\ell) \log n)$ bits of memory.
\end{restatable}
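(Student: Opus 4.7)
The plan is to mimic the structure of the proof of Theorem~\ref{thm:DiameterVC}, replacing the BFS subroutine of Lemma~\ref{lemma:BFSinVCgraph} by its clique analogue from Lemma~\ref{lemma:BFSinDistCliquegraph}. The diameter is realized by some pair of vertices $(s,t)$; if we can iterate over a set of candidate sources $s$ that covers at least one member of every relevant equivalence class of vertices, then running the BFS of Lemma~\ref{lemma:BFSinDistCliquegraph} from each such $s$ and taking the maximum distance ever returned yields the diameter.

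First I would identify the relevant candidate sources. Every vertex outside $X$ lies in exactly one of the $\ell$ cliques and is characterized, for the purpose of distance computation, by the pair (index of its clique, its neighbourhood in $X$). There are at most $2^k$ possible neighbourhoods in $X$, so at most $\ell \cdot 2^k$ such pair-types, and any two vertices sharing a type are twins in $G$ and hence have identical distances to all other vertices. Together with the $k$ vertices of $X$ itself, this gives $\O(2^k \ell + k)$ twin classes. For each clique index $c \in [\ell]$ and each subset $S \subseteq X$, I would use one pass (or fold this detection into the same pass as the BFS) to find, if it exists, a representative vertex $v$ in clique $c$ with $N(v) \cap X = S$; the AL model makes this check local and requires only $\O(k)$ bits of working space.

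Next, for each representative $v$ found, and for each $x \in X$, I would invoke Lemma~\ref{lemma:BFSinDistCliquegraph} with $v$ (respectively $x$) as source. Rather than output all distances, I only retain the largest distance seen (possibly $+\infty$, which suffices to detect disconnection or set the answer to $+\infty$). The global maximum over all such invocations is returned as the diameter.

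For the bookkeeping: enumerating twin classes costs $\O(k + \log \ell)$ bits for the loop counter, each detection pass fits in $\O((k+\ell)\log n)$ bits, and each BFS invocation uses $\O(k)$ passes and $\O((k+\ell)\log n)$ bits by Lemma~\ref{lemma:BFSinDistCliquegraph}. Multiplying $\O(2^k \ell + k)$ invocations by $\O(k)$ passes each gives the advertised $\O(2^k \ell k)$ total passes, and the memory bound is dominated by a single BFS. The only subtle point, and the main thing to verify carefully, is that the twin classes I enumerate genuinely cover a diametric pair: since two vertices in the same clique with identical neighbourhoods in $X$ have identical distances to every other vertex of $G$, choosing one representative per (clique, $X$-neighbourhood) pair is enough, and this combined with all of $X$ exhausts the candidate sources. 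Correctness of the distances themselves is inherited directly from Lemma~\ref{lemma:BFSinDistCliquegraph}.
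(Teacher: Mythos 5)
Your proposal is correct and follows essentially the same approach as the paper: enumerate $\O(2^k\ell)$ equivalence classes given by (clique index, $X$-neighbourhood), find one representative per class plus all of $X$, run the bounded-depth BFS of Lemma~\ref{lemma:BFSinDistCliquegraph} from each, and take the maximum distance. The only cosmetic difference is that the paper uses one detection pass per $X$-neighbourhood (finding up to $\ell$ representatives at once) while you use one per (clique, subset) pair, but this does not affect the asymptotic pass count, which is dominated by the BFS invocations.
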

\begin{proof}
    We can enumerate all the twin classes of neighbourhoods in $X$ (with $k$ bits), and for each class use a pass to find at most $\ell$ vertices which realize this neighbourhood in $X$ (at most one from each clique, two vertices from the same clique with the same neighbourhood in $X$ are equivalent). We call the algorithm of Lemma~\ref{lemma:BFSinDistCliquegraph} for each of these vertices as $v$, and also for each of the vertices of $X$ as $v$. We keep track of the largest distance found, and output it as the diameter after all calls. This results in an algorithm using $\O(2^k\ell \cdot k)$ passes and $\O((k+ \ell) \log n)$ bits of memory.
    
    The correctness follows from the correctness of Lemma~\ref{lemma:BFSinDistCliquegraph}, together with the fact that we consider all vertices as a start node, up to equivalence. This is because we can characterize each vertex by its adjacencies towards $X$ together with the clique it is contained in, which identifies the vertex up to equivalence on the closed neighbourhood.
\end{proof}

The performance of the algorithm in Theorem~\ref{thm:DiameterDistCliques} is distinctly worse than that of Theorem~\ref{thm:DiameterVC}, however, it does allow for more flexibility in the input in some specific cases. Note that the number of passes especially is exponential in $k$, but only linear in $\ell$, so a graph that is very close to a number of big cliques is well suited to apply this algorithm to.

For this setting, there also is a one pass but high memory approach.

\begin{restatable}{theorem}{Cliqueonepass}\label{thm:Clique1pass}
    Given a graph $G$ as an AL stream with deletion distance $k$ to $\ell$ cliques, with the given deletion set $X$, we can solve \textsc{Diameter [$k,\ell$]} using one pass and $\O(2^k\ell + (k + \ell) \log n)$ bits of memory.
\end{restatable}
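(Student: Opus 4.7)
The plan is to mirror the one-pass approach of Theorem~\ref{thm:VC1pass}: I would use the single pass to build a compact graph representation, then compute the diameter internally by simulating the BFS of Lemma~\ref{lemma:BFSinDistCliquegraph} without any further passes. The representation will consist of the set $X$ together with the edges among its vertices (costing $\O(k\log n)$ bits), a representative vertex $r_j$ for each clique $C_j$ encountered so far ($\O(\ell\log n)$ bits), and a table of $\O(2^k\ell)$ bits indicating, for each pair (clique $C_j$, neighborhood $N\subseteq X$), whether some vertex of $C_j$ realizes exactly $N$ as its neighborhood into $X$.

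During the pass, whenever a vertex $v\notin X$ arrives in the AL stream we see all of its neighbours at once. Its $X$-neighbours determine a twin class $N\subseteq X$, while its non-$X$-neighbours all lie in the same clique as $v$. I would then check whether any of these non-$X$-neighbours equals an already-stored representative $r_j$: if so, $v$ is assigned to $C_j$; otherwise $v$ itself becomes the representative of a new clique. In either case, I set the bit indexed by (the identified clique, $N$). The AL model is essential here, since detecting a previously seen representative requires all neighbours of $v$ to be available simultaneously.

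To compute the diameter I would then enumerate sources internally, without further passes: every $x\in X$, and every realized (twin class, clique) pair, as read off from the stored bits. For each source I simulate the bounded-depth BFS of Lemma~\ref{lemma:BFSinDistCliquegraph}, using the stored information as a graph oracle; notably, the edges between $X$ and the cliques are fully determined by the twin-class bits, since $x\in X$ is adjacent to a non-$X$ vertex with $X$-neighborhood $N$ iff $x\in N$, and two vertices in the same clique are at distance $1$. I keep track of the largest distance found over all source/target pairs and output it as the diameter. Correctness follows from Lemma~\ref{lemma:BFSinDistCliquegraph} together with the observation that two non-$X$ vertices lying in the same clique with the same $X$-neighborhood are equivalent for distance purposes, so enumerating sources up to this equivalence suffices. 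The total space is $\O(2^k\ell + (k+\ell)\log n)$, as claimed.

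The main obstacle I expect is the online clique-assignment step: we cannot afford to maintain a full vertex-to-clique map, so the representative-based scheme must be justified as correctly identifying each of the $\ell$ cliques using only $\O(\ell\log n)$ bits, even though the first vertex of a clique to appear in the stream may have no previously seen neighbour at all. Once that is in place, the remainder is a faithful internal emulation of Lemma~\ref{lemma:BFSinDistCliquegraph} on the compressed representation.
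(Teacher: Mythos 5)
Your proposal follows the same route as the paper: store $X$ and its internal edges, one representative per clique, and a table over (clique, $X$-neighbourhood) pairs, then simulate the bounded-depth BFS of Lemma~\ref{lemma:BFSinDistCliquegraph} internally. You even flesh out the online clique-identification step (representative-matching in the AL model), which the paper glosses over with a single sentence, and your argument for it is correct.

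There is one small but real omission. You record, per (clique, $X$-neighbourhood) pair, only a single bit for \emph{existence}. The paper stores two bits: whether there are $0$, $1$, or \emph{at least two} vertices realizing that class. The extra information is needed because a diametral pair can consist of two vertices in the \emph{same} class. With only a one-bit indicator you cannot tell whether a second such vertex exists, and you would return the wrong answer in, e.g., the degenerate case $X=\emptyset$, $\ell=1$, $|C_1|\ge 2$: the correct diameter is $1$, but every potential source and target collapses to the single known representative, so your enumeration over source/target pairs finds only distance $0$. This is easy to fix (use the paper's two-bit counter per class, which does not change the $\O(2^k\ell)$ bound), but as written the proposal would fail there. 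A similar two-bit counter is also what Theorem~\ref{thm:VC1pass} uses, and your write-up should mirror it.
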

\begin{proof}
    The approach here is similar to that of Theorem~\ref{thm:VC1pass}. In our pass, we save the deletion set $X$ and its internal edges. Next to this, every vertex not in $X$ can be characterized by its adjacencies towards $X$, together with what clique it is in. Therefore, each vertex can be characterized by $k + \log \ell$ bits, and for each option, we save whether we have 0, 1, or more of this vertex (this takes two bits). We need $\O(\ell \log n)$ bits to be able to identify which clique a vertex is in, as we can save a representative for each clique and identify a vertex by its adjacency to one of the representatives. We can find this information in our pass because it is an AL stream, and this takes $\O(2 \cdot 2^{k + \log \ell} + (k + \ell) \log n) = \O(2^k\ell + (k + \ell) \log n)$ bits of memory.
    
    To solve \textsc{Diameter}, we now only need to execute a procedure like that in Lemma~\ref{lemma:BFSinDistCliquegraph} for every pair of vertices, deciding on the distance between them. The theorem follows.
\end{proof}

We have now seen all the elements of Theorem~\ref{thm:DistlCliqueSummary}. We are not aware of any algorithms to compute the parameter distance to $\ell$ cliques.
\DistlCliqueSummary*
\begin{proof}
This follows immediately from Theorem~\ref{thm:DiameterDistCliques} and~\ref{thm:Clique1pass}.
\end{proof}


\section{Lower Bounds for Diameter}\label{sec:LowerBoundsDiameter}

We work with reductions from \textsc{Disj}$_n$, and we construct graphs where Alice controls some of the edges, and Bob controls some of the edges, depending on their respective input of the \textsc{Disj}$_n$ problem, and some parts of the graph are fixed. The aim is to create a gap in the diameter of the graph, that is, the answer to \textsc{Disj}$_n$ is YES if and only if the diameter is above or below a certain value. The lower bound then follows from Proposition~\ref{prop:DisjReduction}. Here $n$ may be the number of vertices in the graph construction, but may also be different (possibly forming a different lower bound). Our lower bounds hold for connected graphs. An overview of all hardness results for \textsc{Diameter} is given in Table~\ref{table:DisjDiameterAL}. 

\begin{table}[bt]
\centering
\begin{tabular}{|ll|l|l|l|}
\hline
 Parameter ($k$) / Graph class & Size & Bound & Theorem\\ \hline
 General and Bipartite Graphs & & \cellcolor{LightPink}$(\textsc{AL},n^2/p,p)$-hard & \ref{thm:QuadraticDiameter}, \ref{cor:QuadraticDiameterBipartite}\\ \hline
 \textsc{Vertex Cover} & $\geq 3$ & \cellcolor{LightPink}$(\textsc{VA}, n/p, p)$-hard & \ref{thm:SimpleVA}\\ \hline
 \textsc{Distance to $\ell$ cliques} & $k \geq 2, \ell \geq 1$ & \cellcolor{LightPink}$(\textsc{VA}, n/p, p)$-hard & \ref{thm:CliqueVA}\\\hline
 \textsc{FVS}, \textsc{FES} & $\geq 0$ & \cellcolor{LightPink}$(\textsc{AL},n/p,p)$-hard & \ref{thm:WindmillResults}\\ 
 & $\geq 0$ & \cellcolor{LightPink}$(\textsc{AL},n \log n,1)$-hard & \ref{cor:WindmillPermResults}\\\hline
 \textsc{Distance to matching} & $\geq 3$ & \cellcolor{LightPink}$(\textsc{AL},n/p,p)$-hard & \ref{cor:SimpleALResults}\\ \hline
 \textsc{Distance to path} & $\geq 2$ & \cellcolor{LightPink}$(\textsc{AL},n/p,p)$-hard & \ref{thm:DiamondResults}\\
 & $\geq 2$ & \cellcolor{LightPink}$(\textsc{AL},n \log n,1)$-hard & \ref{thm:DiamondPermResults}\\\hline
 \textsc{Distance to depth $\ell$ tree} & $k \geq 3, \ell \geq 2$ & \cellcolor{LightPink}$(\textsc{AL},n/p,p)$-hard & \ref{cor:SimpleALResults}\\ 
 & $k \geq 0, \ell \geq 5$ & \cellcolor{LightPink}$(\textsc{AL},n/p,p)$-hard & \ref{thm:WindmillResults}\\
 & $k \geq 0, \ell \geq 7$& \cellcolor{LightPink}$(\textsc{AL},n \log n,1)$-hard & \ref{cor:WindmillPermResults}\\\hline
 \textsc{Dist. to $\ell$ comps. of diam. $x$} & $k,x \geq 2$ & \cellcolor{LightPink}$(\textsc{AL},n/p,p)$-hard & \ref{cor:SimpleALResults}\\ \hline
 \textsc{Domination Number} & $\geq 3$ & \cellcolor{LightPink}$(\textsc{AL},n/p,p)$-hard & \ref{cor:SimpleALResults}\\ \hline
 \textsc{Maximum Degree} & $\geq 3$ & \cellcolor{LightPink}$(\textsc{AL},n/p,p)$-hard & \ref{thm:WindmillResults}\\
 & $\geq 3$ & \cellcolor{LightPink}$(\textsc{AL},n \log n,1)$-hard & \ref{cor:WindmillPermResults}\\\hline
 Split graphs & & \cellcolor{LightPink}$(\textsc{AL},n/p,p)$-hard & \ref{thm:Split}\\ \hline
\end{tabular}
\caption{An overview of the lower bounds for \textsc{Diameter}, with the parameter $(k)$ on the left. These results hold for connected graphs. $(\M, m, p)$-hard means that any algorithm using $p$ passes in model $\M$ (or weaker) requires $\Omega(m)$ bits of memory. \textsc{FVS} stands for Feedback Vertex Set number, \textsc{FEN} for Feedback Edge Set number.}
\label{table:DisjDiameterAL}
\end{table}

We start by proving simple lower bounds for the VA model when our problem is parameterized by the vertex cover number, and when our problem is parameterized by the distance to $\ell$ cliques. This shows that we actually need the AL model to achieve the upper bounds in Section~\ref{sec:upperboundsdiameter}. The constructions are illustrated in Figure~\ref{fig:SimpleVA} and Figure~\ref{fig:CliqueVA}. Generally, $a$-vertices ($b$-vertices) and their incident edges are controlled by Alice (Bob).

\begin{figure}
\centering
\begin{minipage}{.48\textwidth}
  \centering
  \includegraphics[width=.6\linewidth]{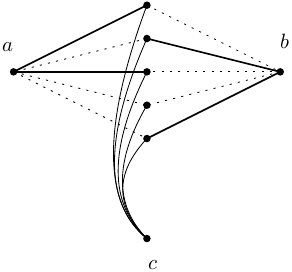}
  \captionof{figure}{VA lower bound for diameter with vertex cover size 3, called `Simple VA'. The vertices in the middle are indexed $1,\ldots,n$. An edge incident to $a$ ($b$) is present when the entry of Alice (Bob) at the corresponding index is 1. The vertex $c$ ensures the graph is connected.}
  \label{fig:SimpleVA}
\end{minipage}\quad
\begin{minipage}{.48\textwidth}
  \centering
  \includegraphics[width=.8\linewidth]{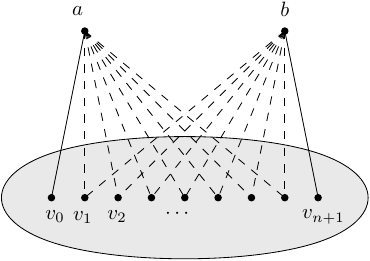}
  \captionof{figure}{VA lower bound for diameter with distance 2 to $1$ clique, called `Clique VA'. A dashed edge is present when the entry at the corresponding index is 1. The vertices inside the grey area form a clique. Hence, deletion distance to a clique is~$2$ (remove $a$ and $b$).}
  \label{fig:CliqueVA}
\end{minipage}
\end{figure}

\begin{restatable}{theorem}{SimpleVA}\label{thm:SimpleVA}
    Any streaming algorithm for \textsc{Diameter} on graphs of vertex cover number at least 3 in the VA model that uses $p$ passes over the stream requires $\Omega(n/p)$ bits of memory.
\end{restatable}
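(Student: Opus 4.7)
The plan is to reduce from $\textsc{Disj}_n$ using the construction illustrated in Figure~\ref{fig:SimpleVA} and then apply Proposition~\ref{prop:DisjReduction}. Specifically, I would take $n$ ``middle'' vertices $v_1,\ldots,v_n$ together with three special vertices $a,b,c$; every edge $cv_i$ is fixed, Alice inserts the edge $av_i$ iff $x_i=1$, and Bob inserts $bv_i$ iff $y_i=1$. To keep the graph connected regardless of $x,y$ (so that the diameter stays finite), I would also attach two private padding vertices $v_a,v_b$ with fixed edges to $\{a,c\}$ and $\{b,c\}$ respectively. The set $\{a,b,c\}$ is then a vertex cover of size~$3$.

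The first step is to verify the intended diameter gap. If some index $i$ satisfies $x_i=y_i=1$, then $v_i$ is a common neighbour of $a$ and $b$, so $d(a,b)=2$; every other pair of vertices is connected in at most three steps via $c$, so the diameter is exactly $3$. If no such index exists, then $a$ and $b$ share no middle neighbour and every $a$--$b$ walk must detour through $c$, giving $d(a,b)=4$ while all other pairs remain at distance at most~$3$; the diameter becomes $4$. Hence the value of the diameter separates YES- from NO-instances of $\textsc{Disj}_n$.

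Next I would argue that Alice and Bob can jointly construct a single VA-model pass using only one round of communication. With arrival order $v_1,\ldots,v_n,v_a,v_b,c,a,b$, the VA rule reveals each vertex's edges to already-arrived vertices at the moment it arrives. Alice knows the fixed edges and her string $x$, so she can simulate the prefix of the stream up to and including the arrival of $a$; she then communicates the current memory state of the algorithm to Bob, who produces $b$'s adjacencies from $y$ to complete the pass. For $p$ passes this costs $O(p)$ communication rounds, so Proposition~\ref{prop:DisjReduction} immediately yields the claimed $\Omega(n/p)$ memory lower bound for graphs of vertex cover exactly~$3$. To cover the case of vertex cover number $k>3$, I would attach $k-3$ additional pendant edges whose centres enlarge the cover without affecting the diameter computation.

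The main subtlety I anticipate is guaranteeing the diameter gap under all inputs: corner cases such as all-zero $x$ or $y$, or ``accidental'' shortcuts through $v_a,v_b$ in the Disj-YES case, could in principle collapse the gap. These are discharged by a short case analysis that checks $d(a,b)$, $d(a,\cdot)$, and $d(b,\cdot)$ for every vertex category; the padding vertices only add the detour $v_a$--$c$--$v_b$ of length $2$ and never shortcut the crucial $a$--$b$ distance. This combinatorial check is routine once the construction is fixed, and it preserves the $\Omega(n/p)$ bound up to constants.
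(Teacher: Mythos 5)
Your proof is correct and essentially identical to the paper's: the same middle-vertex gadget with apex $c$ and input-controlled $a,b$, the same $3$-vs-$4$ diameter gap, and the same invocation of Proposition~\ref{prop:DisjReduction}. The only difference is that you guarantee connectivity with two padding vertices $v_a,v_b$ while the paper relies on the standard convention that $\textsc{Disj}_n$ stays hard when neither string is all-zeros; also note that your pendant-edge extension for vertex cover $k>3$ is unnecessary (a VC-exactly-$3$ construction already witnesses ``at least $3$'') and, if realized naively as cherries hanging off $c$, it would put some vertex at distance $4$ from $a$ even when $x$ and $y$ intersect, collapsing the gap.
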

\begin{proof}
    Let $x,y$ be the input to $\textsc{Disj}_n$ of Alice and Bob, respectively. Assume we have a streaming algorithm for \textsc{Diameter} in the VA model. We construct a graph as illustrated in Figure~\ref{fig:SimpleVA}. First Alice reveals to the stream $n$ vertices $v_1, \ldots, v_n$ with no edges, then she reveals the vertex $c$ which is connected to all the $n$ vertices. Now we associate the $n$ vertices with the indices of $x$ and $y$, associate vertex $v_i$ with index $i$. Alice reveals a vertex $a$, and for each index $i$ she reveals an edge between $a$ and $v_i$ when the entry at the index $i$ in $x$ is a 1. After this, she passes the memory state of the algorithm to Bob. Bob now reveals a vertex $b$ and similar to Alice, reveals an edge between $b$ and $v_i$ when the entry at index $i$ in $y$ is a 1. This completes the construction of the graph, and thus the stream. Let it be clear that this is a VA stream that Alice and Bob can construct without knowing input of the other. The graph is always connected because if either Alice or Bob has an all-zeroes input, the problem of $\textsc{Disj}_n$ is trivially solvable (so $\textsc{Disj}_n$ is equally hard ignoring this case)\footnote{In fact, there are stronger assumptions that one can make while the complexity of $\textsc{Disj}_n$ remains the same, see \cite{ChakrabartiDisjAssumptions} or \cite{AgarwalSpatialScan}.}.
    
    We now claim that the diameter of this graph is at most 3 when the answer to $\textsc{Disj}_n$ is NO, and otherwise the diameter is 4.
    
    Let us assume that the answer to $\textsc{Disj}_n$ is NO, that is, there is an index $i$ such that $x_i = y_i = 1$. Then clearly, the distances between $a, b$ and $c$ are all 2, by viewing the paths using $v_i$ as an intermediate vertex. Hence the diameter is at most 3, which can be formed by some path from e.g. $a$ to $v_i$ to $c$ to some $v_j$ that is non-adjacent to $a$.
    
    Now assume the answer to $\textsc{Disj}_n$ is YES, that is, there is no index $i$ such that $x_i = y_i = 1$. Now consider the distance between $a$ and $b$. To get from $a$ to $b$, we need to go from $a$ to some $v_i$ (which is non-adjacent to $b$ by the assumption), then go to $c$ and to another $v_j$ which is adjacent to $b$ (but not $a$), to go to $b$. This path has length 4, and must exist by the non-all-zeroes input assumption, and forms a shortest path from $a$ to $b$ in this graph. So the diameter of the graph is 4.
    
    In conclusion, we constructed a connected graph of size $n+3$ with a vertex cover number of 3 (taking $\{a,b,c\}$ suffices) that can be given as a VA stream to a streaming algorithm for \textsc{Diameter} to solve the $\textsc{Disj}_n$ problem. The theorem follows from Proposition~\ref{prop:DisjReduction}.
\end{proof}

\begin{restatable}{theorem}{CliqueVA}\label{thm:CliqueVA}
    Any streaming algorithm for \textsc{Diameter} on graphs of distance 2 to $\ell = 1$ clique in the VA model that uses $p$ passes over the stream requires $\Omega(n/p)$ bits of memory.
\end{restatable}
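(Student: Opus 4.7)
The plan is to mirror the reduction used in Theorem~\ref{thm:SimpleVA}, adapted to the structure depicted in Figure~\ref{fig:CliqueVA}: Alice and Bob jointly build a graph in which $n$ middle vertices $v_1,\ldots,v_n$ form a clique, and two additional vertices $a$ (controlled by Alice) and $b$ (controlled by Bob) attach to the clique through edges determined by the coordinates of $x$ and $y$ respectively. Since removing $\{a,b\}$ leaves a clique, the graph has distance $2$ to $\ell=1$ clique, matching the parameter. The goal is once again to produce a constant-factor gap in the diameter between the YES and NO cases of $\textsc{Disj}_n$.

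First, I would describe the VA stream. Alice reveals $v_1,\ldots,v_n$ in order; in the VA model each new $v_i$ announces edges to all previously arrived vertices, so Alice can build the complete clique on these $n$ vertices without needing Bob's input. Then Alice reveals $a$, announcing the edge $a v_i$ precisely when $x_i=1$. She transmits the memory state of the hypothetical algorithm to Bob, who then reveals $b$, announcing the edge $b v_i$ precisely when $y_i=1$. This is a legal VA stream and uses only a single transfer of the algorithm's state per pass. As in the previous proof, we may assume that $x$ and $y$ are not all-zero vectors (the hard instances of $\textsc{Disj}_n$ can be restricted this way), so $a$ and $b$ each have at least one neighbour in the clique and the graph is connected.

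Next I would analyse the diameter. Any two clique vertices are at distance $1$, and since $a$ has a neighbour in the clique we have $d(a,v_i)\le 2$ for every $i$, and symmetrically for $b$. Hence the only pair whose distance might exceed $2$ is $(a,b)$. If $\textsc{Disj}_n$ answers NO, some index $i$ satisfies $x_i=y_i=1$, so $v_i$ is a common neighbour of $a$ and $b$, giving $d(a,b)=2$ and diameter $2$. If $\textsc{Disj}_n$ answers YES, then $a$ and $b$ share no clique neighbour, and any shortest $a$-$b$ path has the form $a\to v_i\to v_j\to b$ with $x_i=1$ and $y_j=1$, yielding $d(a,b)=3$ and diameter $3$. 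Thus the diameter distinguishes the two cases, and Proposition~\ref{prop:DisjReduction} yields the claimed $\Omega(n/p)$ lower bound on any $p$-pass VA algorithm.

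The only real obstacle is keeping the stream legal in the VA model: all clique edges must be broadcast before the input-dependent vertices $a$ and $b$ arrive, which is why the ordering described above (clique first, then $a$, then $b$) is important. Beyond this, the analysis is routine, and the resulting construction has size $n+2$ with constant distance to a single clique, matching the hypothesis of the theorem.
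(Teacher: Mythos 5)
Your proof is correct and follows essentially the same reduction as the paper's. The one small difference is cosmetic: the paper adds two extra clique vertices $v_0,v_{n+1}$ with fixed edges $(a,v_0)$ and $(b,v_{n+1})$ so that the graph is connected regardless of the input, whereas you invoke the standard ``no all-zero inputs'' restriction on $\textsc{Disj}_n$ (which the paper itself uses in the neighbouring Theorem~\ref{thm:SimpleVA}). Both are legitimate, and the gap analysis (diameter $2$ vs.\ $3$) and the VA-legality of the stream ordering (clique first, then $a$, then $b$) match the paper's argument.
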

\begin{proof}
    Let $x,y$ be the input to $\textsc{Disj}_n$ of Alice and Bob, respectively. Assume we have a streaming algorithm for \textsc{Diameter} in the VA model. We construct a graph as illustrated in Figure~\ref{fig:CliqueVA}. Start with a clique on $n+2$ vertices, $v_0, \ldots, v_{n+1}$. Let $a,b$ be two vertices not in the clique, and add the edges $(a,v_0)$ and $(b,v_{n+1})$. Then, for any $i$, Alice adds the edge $(a,v_i)$ when $x_i=1$ and Bob adds the edge $(b,v_i)$ when $y_i = 1$. This completes the construction. Alice and Bob construct the VA stream as follows. First Alice reveals $v_0, \ldots, v_{n+1}$ and then reveals $a$ (for which she knows what edges should be present). Then Alice passes the memory of the algorithm to Bob, who reveals $b$, which completes the stream. Notice that the graph is always connected by the fixed edges $(a,v_0)$ and $(b,v_{n+1})$.
    
    We now claim that the diameter of this graph is 2 when the answer to $\textsc{Disj}_n$ is NO, and otherwise the diameter is at least 3.
    
    Let us assume that the answer to $\textsc{Disj}_n$ is NO, that is, there is an index $i$ such that $x_i = y_i = 1$. Notice how the distance between $a$ and $b$ is now 2, because both are connected to $v_i$. The distance between any other pair of vertices is also at most 2, because all vertices except $a$ and $b$ form a clique.
    
    Now assume the answer to $\textsc{Disj}_n$ is YES, that is, there is no index $i$ such that $x_i = y_i = 1$. The shortest path between $a$ and $b$ in this instance must use some edge in the clique, as these vertices do not have a common neighbour. Hence, the distance between $a$ and $b$ is at least 3.
    
    In conclusion, we constructed a connected graph of size $n+4$ with a distance 2 to $\ell = 1$ clique (taking $\{a,b\}$ suffices) that can be given as a VA stream to a streaming algorithm for \textsc{Diameter} to solve the $\textsc{Disj}_n$ problem. The theorem follows from Proposition~\ref{prop:DisjReduction}.
\end{proof}

The lower bounds in Figure~\ref{fig:SimpleVA} and Figure~\ref{fig:CliqueVA} do not work for the AL model because there are vertices that may be adjacent to both $a$ and $b$, so neither Alice nor Bob can produce the adjacency list of such a vertex alone. For the `Simple VA' construction, we can `fix' this by extending these vertices to edges but this is destructive to the small vertex cover number of the construction, see Figure~\ref{fig:SimpleAL}. It should be clear that AL reductions require care: the set of edges incident to a vertex has to be fully determined when Alice or Bob wants to reveal it.

\begin{figure}
  \centering
  \includegraphics[width=.6\linewidth]{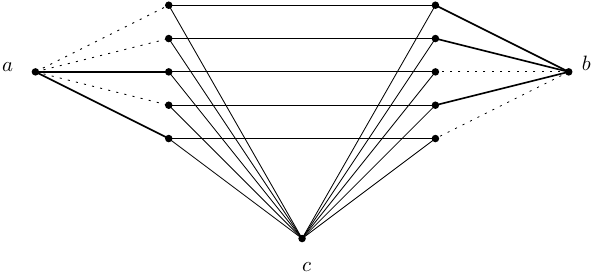}
  \captionof{figure}{AL lower bound for diameter, called `Simple AL'. The edges in the middle are indexed $1,\ldots,n$. An edge incident on $a$ ($b$) is present when the entry of Alice (Bob) at the corresponding index is 1. The vertex $c$ ensures the graph is connected.}
  \label{fig:SimpleAL}
\end{figure}

\begin{restatable}{theorem}{SimpleAL}\label{thm:SimpleAL}
    Any streaming algorithm for \textsc{Diameter} that works on the `Simple AL' construction in the AL model using $p$ passes over the stream requires $\Omega(n/p)$ bits of memory.
\end{restatable}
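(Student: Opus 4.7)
The plan is to adapt the reduction from \textsc{Disj}$_n$ used in Theorem~\ref{thm:SimpleVA}, but modify the construction so that no vertex has its adjacency list depending on both players' inputs. The obstruction for the VA proof to carry over is that each middle vertex $v_i$ sits between variable edges $(a,v_i)$ (controlled by Alice) and $(b,v_i)$ (controlled by Bob); in the AL model, whoever reveals $v_i$ must list all of its neighbours at once, which requires knowledge of the other player's input. I would resolve this by replacing every middle vertex $v_i$ with a fixed edge $(v_i^A, v_i^B)$, giving Alice control of $v_i^A$ and Bob control of $v_i^B$. The hub vertex $c$ is retained (with a fixed adjacency to every middle vertex on both sides) in order to guarantee connectivity and to enforce a diameter gap.

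Concretely, given inputs $x, y \in \{0,1\}^n$, I would build the graph on vertex set $\{a,b,c\} \cup \{v_i^A, v_i^B : i \in [n]\}$ with fixed edges $(v_i^A, v_i^B)$, $(c, v_i^A)$, and $(c, v_i^B)$ for every $i$, plus variable edges $(a, v_i^A)$ iff $x_i = 1$ and $(b, v_i^B)$ iff $y_i = 1$. The AL stream is produced by Alice first revealing the adjacency lists of $c$ (fixed), $a$, and every $v_i^A$ (all depending only on $x$ and on fixed edges), then handing the algorithm's memory state to Bob, who reveals $b$ and every $v_i^B$ (all depending only on $y$ and on fixed edges). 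Each edge appears in both endpoints' lists: the variable edges $(a, v_i^A)$ live entirely in Alice's portion and $(b, v_i^B)$ entirely in Bob's, while the fixed edges $(v_i^A, v_i^B)$ and $(c, v_i^B)$ are listed consistently by both players. Only a single exchange of the memory state occurs, meeting the hypothesis of Proposition~\ref{prop:DisjReduction}.

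For the diameter gap, I would argue as follows. If \textsc{Disj}$_n$ is NO (some $i$ has $x_i = y_i = 1$), then $a\text{-}v_i^A\text{-}v_i^B\text{-}b$ is a path of length $3$, and for any vertex $u$ the hub $c$ provides a path of length at most $3$ from any of $\{a,b\}$ to $u$ (e.g.\ $a\text{-}v_i^A\text{-}c\text{-}u$), so the diameter is at most $3$. If \textsc{Disj}$_n$ is YES, then no length-$3$ path from $a$ to $b$ exists: any such path must begin with an $a$-edge to some $v_i^A$ (so $x_i=1$), end with a $b$-edge from some $v_j^B$ (so $y_j=1$), and traverse one edge in between, which forces $i=j$ and contradicts disjointness. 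The shortest $a$-$b$ path must therefore route through $c$, giving length $4$. Applying Proposition~\ref{prop:DisjReduction} then yields the $\Omega(n/p)$ bound.

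The main point requiring care, though straightforward, is verifying that the distance upper bound of $3$ in the NO case holds for \emph{every} pair of vertices (not just $a,b$), and that no path of length $3$ between $a$ and $b$ can sneak around via $c$ in the YES case; both reduce to short case checks on the small number of vertex types in the construction. Beyond that, the argument is a direct adaptation of Theorem~\ref{thm:SimpleVA} with the subdivision trick outlined above.
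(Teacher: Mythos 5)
Your construction is exactly the paper's `Simple AL' construction: the matching $M$ on $2n$ vertices is precisely your $\{(v_i^A,v_i^B)\}$, the hub $c$ is the same, and the variable edges and stream partition between Alice and Bob are identical, as is the diameter gap of $3$ versus $4$ and the appeal to Proposition~\ref{prop:DisjReduction}. The proposal is correct and takes essentially the same approach as the paper (modulo the standard assumption, which the paper makes explicit, that neither input is all-zeroes so the graph is connected).
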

\begin{proof}
    Let $x,y$ be the input to $\textsc{Disj}_n$ of Alice and Bob, respectively. We construct a graph as illustrated in Figure~\ref{fig:SimpleAL}. The graph consist of $2n + 3$ vertices. This is a matching $M$ on $2n$ vertices. We make a vertex $c$ adjacent to all vertices of $M$. Next to this, we have vertices $a$ and $b$ of which the adjacencies towards one end of $M$ are dependent on the input of Alice and Bob, respectively. An edge between $a$ and $i$th edge on the `left' side of $M$ is present when Alice has a 1 on index $i$ in $x$. An edge between $b$ and $i$th edge the `right' side of $M$ is present when Bob has a 1 on index $i$ in $y$. Assuming we have an algorithm that can work on this graph, to construct an AL stream for the algorithm, Alice can first reveal $a, c$, and all vertices on the left side of $M$, then pass the memory of the algorithm to Bob, who reveals $b$ and all the vertices on the right side of $M$. This completes one pass of the stream. The graph is always connected because if either Alice or Bob has an all-zeroes input, the problem of $\textsc{Disj}_n$ is trivially solvable (so $\textsc{Disj}_n$ is equally hard ignoring this case).
    
    We now claim that the diameter of the `Simple AL' graph is at most 3 when the answer to $\textsc{Disj}_n$ is NO, and otherwise the diameter is at least 4.
    
    Let us assume that the answer to $\textsc{Disj}_n$ is NO, that is, there is an index $i$ such that $x_i = y_i = 1$. Notice that the distance from $c$ to any vertex is at most 2. The distance from $a$ to $b$ is 3 by taking the $i$th matching edge. The distances from $a$ to any other vertex is at most 3 by going through $c$, and the same holds for $b$. So the diameter is at most 3.
    
    Now assume the answer to $\textsc{Disj}_n$ is YES, that is, there is no index $i$ such that $x_i = y_i = 1$. Consider the distance between $a$ and $b$. As there is no index where both have a 1, the shortest path from $a$ to $b$ must use $c$ as an intermediate vertex. But then this path has length at least 4. So the diameter is at least 4.
    
    We conclude that any algorithm that can solve the \textsc{Diameter} problem on the graph construction `Simple AL' in the AL model in $p$ passes, must use $\Omega(n/p)$ bits of memory by Proposition~\ref{prop:DisjReduction}.
\end{proof}

The following follows from Theorem~\ref{thm:SimpleAL} by observing some properties of the `Simple AL' construction in Figure~\ref{fig:SimpleAL}.

\begin{restatable}{corollary}{SimpleALResults}\label{cor:SimpleALResults}
    Any streaming algorithm for \textsc{Diameter} in the AL model that uses $p$ passes over the stream must use $\Omega(n/p)$ bits of memory, even on graphs for which the algorithm is given a 
    \begin{enumerate}[noitemsep]
        \item Deletion Set to Matching of size at least 3,
        \item Deletion Set to $\ell$ components of diameter $x$ of size at least 2, $x\geq 2$,
        \item Dominating Set of size at least 3,
        \item Deletion Set to a depth $\ell$ tree of size at least 3, $\ell \geq 2$.
    \end{enumerate}
\end{restatable}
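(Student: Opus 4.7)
The plan is to observe that the very same `Simple AL' construction of Figure~\ref{fig:SimpleAL} that underlies Theorem~\ref{thm:SimpleAL} simultaneously satisfies each of the four listed parameter restrictions. Since Theorem~\ref{thm:SimpleAL} already gives the $\Omega(n/p)$ lower bound on this family in the AL model, the corollary follows once I exhibit, for each parameter, a small witness set establishing the claimed structural bound, independently of Alice's and Bob's inputs to $\textsc{Disj}_n$. No new reduction is needed; the work is entirely in reading off the right subsets of $\{a,b,c\}$.

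Next I would verify the four items in order. For item~1, the set $\{a,b,c\}$ is a deletion set of size~$3$, and the remaining graph is exactly the matching $M$ on $2n$ vertices. For item~2, deleting $\{a,b\}$ leaves $c$ adjacent to every matching vertex together with the $n$ matching edges; this is a single connected component ($\ell = 1$) of diameter~$2$ (via $c$), so the parameter setting $k = 2$, $\ell = 1$, $x = 2$ is realised. For item~3, $c$ dominates all $2n$ matching vertices by construction, while $a$ and $b$ dominate themselves, so $\{a,b,c\}$ is a dominating set of size~$3$ regardless of the inputs; in fact the domination number is exactly~$3$, since no two-vertex set can simultaneously dominate $a$, $b$, and $c$ (none of these three is adjacent to another). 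For item~4, deleting $\{a,b,c\}$ again leaves $M$, a disjoint union of edges, i.e.\ a rooted forest in which every tree has depth~$1$, which is at most $\ell$ for every $\ell \geq 2$.

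The only subtlety is extending each result from the exact parameter value realised by the construction to the ``at least'' ranges claimed in the statement. This is routine: one can pad the construction with isolated copies of $K_1$ (for the deletion-set parameters) or attach extra private neighbours to $c$ (for the dominating set parameter) without affecting the diameter gap of $3$ versus $4$ that drives the reduction in Theorem~\ref{thm:SimpleAL}. The point that deserves care, rather than being a genuine obstacle, is the interpretation of ``depth $\ell$ tree'' in item~4: the argument above adopts the natural reading of a rooted forest in which every component tree has depth at most $\ell$, consistent with the convention used for the same parameter elsewhere in Table~\ref{table:DisjDiameterAL}.
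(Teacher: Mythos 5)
Your verification of items 1--3 matches the paper's proof exactly: $\{a,b,c\}$ witnesses the matching deletion set and the dominating set, and $\{a,b\}$ witnesses the deletion set to a single component of diameter~2. The extra justification that the domination number is in fact exactly~3 is not needed but is sound.

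Item~4 is where you diverge, and the divergence is a real gap rather than a matter of convention. The paper does \emph{not} treat the remainder after removing $\{a,b,c\}$ (which is the matching $M$, a forest) as acceptable; instead it modifies the construction by adding a new vertex~$d$ adjacent to every left-side endpoint of~$M$, and observes that removing $\{a,b,c\}$ then leaves a \emph{single} tree of depth~2 rooted at~$d$ (adding $d$ does not change any distances among the original vertices, so the reduction is unaffected). Your appeal to ``the convention used for the same parameter elsewhere in Table~\ref{table:DisjDiameterAL}'' actually cuts the other way: the other entries for ``Distance to depth $\ell$ tree'' have $k\ge 0$ and come from the Windmill constructions, which are literally single connected trees, so the parameter unambiguously refers to a connected tree, not a forest. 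Without the added vertex $d$, a disjoint union of $n$ edges is a forest, not a tree, and your item~4 argument does not establish the stated parameter bound. The fix is exactly the paper's: add $d$ adjacent to one side of $M$ before deleting $\{a,b,c\}$.

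Your closing remarks on padding to realize the ``at least'' ranges are unnecessary (the paper does not pad, and hardness for parameter value $k_0$ is routinely inherited by larger fixed values), but they do not introduce errors.
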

\begin{proof}
    The corollary follows from Theorem~\ref{thm:SimpleAL}, together with observing that the construction of `Simple AL' is
    \begin{enumerate}[noitemsep]
        \item a matching when removing $\{a, b, c\}$,
        \item one component of diameter 2 when removing $\{a,b\}$,
        \item dominated by the set of vertices $\{a,b,c\}$,
        \item a tree of depth 2 when we add a vertex $d$ adjacent to the left part of $M$\footnote{Notice that adding $d$ does not change any of the distances.}, and remove $\{a, b, c\}$.
    \end{enumerate}
\end{proof}

Next, we construct a lower bound for a special case, when the input graph is a tree, see Figure~\ref{fig:windmill}.

\begin{figure}[!b]
    \centering
    \includegraphics[width=.7\textwidth]{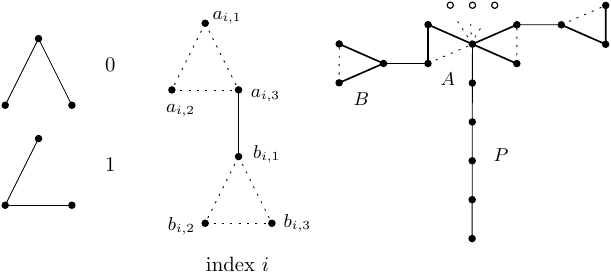}
    \caption{AL lower bound for diameter consisting of a tree, called `Windmill'. The difference in an entry 1 or 0 is shown on the left. The gadget for index $i$ combines a 0/1-gadget for Alice and a 0/1-gadget for Bob. It makes two 1 entries at this index a path of length 5, and a tree structure of depth at most 4 otherwise. These $n$ gadgets are then identified at $a_{i,1}$ and a tail is added.}
    \label{fig:windmill}
\end{figure}

\begin{restatable}{theorem}{Windmill}\label{thm:Windmill}
    Any streaming algorithm for \textsc{Diameter} that works on the `Windmill' construction in the AL model using $p$ passes over the stream requires $\Omega(n/p)$ bits of memory.
\end{restatable}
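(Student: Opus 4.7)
I plan to reduce $\textsc{Disj}_n$ to \textsc{Diameter} on the ``Windmill'' graph of Figure~\ref{fig:windmill} and then invoke Proposition~\ref{prop:DisjReduction}. The graph is built by gluing $n$ index gadgets (one per coordinate of Alice's string $x$ and Bob's string $y$) at a common hub vertex, obtained by identifying the $n$ copies of $a_{i,1}$, and then attaching a fixed tail at that hub; the overall graph is a tree. For each $i$, the gadget splits cleanly into an ``Alice half'' depending only on $x_i$ and a ``Bob half'' depending only on $y_i$, joined only at the hub, and crucially no vertex of the gadget is incident to variable edges of both players, which is exactly what an AL reduction requires. By the design of the figure, whenever $x_i = y_i = 1$ the two halves together form a simple path of length~$5$ hanging off the hub, and for every other value of $(x_i,y_i)$ the gadget is a rooted subtree at the hub of depth at most~$4$.

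To produce one AL pass, Alice streams the hub, the tail vertices, and every vertex in her half of every gadget, each with its full adjacency list, all of which she can compute from $x$ alone. She then hands the memory state of the streaming algorithm over to Bob, who analogously streams his own vertices with their adjacency lists, computable from $y$ alone. This is a valid AL stream produced with only a single handover of the memory state, which is what Proposition~\ref{prop:DisjReduction} requires. The whole graph is connected since every gadget touches the hub and the tail is fixed, so we obtain a single connected tree.

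The diameter gap is then immediate from case analysis. In a NO instance of $\textsc{Disj}_n$ some gadget has $x_i=y_i=1$ and therefore contributes a length-$5$ pendant at the hub, so either two such pendant tips in distinct gadgets, or the tail endpoint together with such a tip, witness a diameter that is strictly larger than anything achievable in a YES instance, where every pendant at the hub has depth at most~$4$. Choosing the fixed tail length so that the diametric pair is realised through the tail (or even leaving the tail effectively empty, in which case the diameter is $10$ vs.\ $8$) gives a constant additive gap that lets Alice and Bob decide $\textsc{Disj}_n$ from the value output by the streaming algorithm, and Proposition~\ref{prop:DisjReduction} then yields the claimed $\Omega(n/p)$ bound.

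The main obstacle is the finite but careful structural case analysis dictated by the gadget in Figure~\ref{fig:windmill}: verifying the length-$5$ path in the $(1,1)$ case, verifying a depth-$\leq 4$ rooted tree in each of the three remaining $(x_i,y_i)$ combinations, checking that Alice's and Bob's vertex sets are truly disjoint so that no single vertex needs edges from both players inside one AL step, and confirming that no pair of vertices in the overall graph---especially pairs involving the tail endpoint or leaves of two distinct gadgets---can realise a distance that erases the YES/NO gap. Once these checks are settled the reduction is complete.
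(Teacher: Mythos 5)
Your proposal follows essentially the same route as the paper: reduce from $\textsc{Disj}_n$, build the Windmill tree with $n$ index gadgets glued at a hub, have Alice stream her gadget halves and the fixed tail and Bob stream his halves (no vertex is incident to variable edges of both, so AL passes split cleanly with one handover), and separate the diameter to $10$ (NO) versus at most $9$ (YES), concluding via Proposition~\ref{prop:DisjReduction}. One correction, however: the parenthetical remark that the tail could be ``effectively empty, in which case the diameter is $10$ vs.\ $8$'' is wrong. With no tail and only the single $(1,1)$ gadget that $\textsc{Disj}_n$ guarantees, the NO diameter can be as low as $8$ (pendant tip at distance~$5$ paired with a $(0,0)$-gadget vertex at distance~$3$), while the YES diameter can also reach $8$ (two gadgets each with a vertex at distance~$4$), so the gap collapses. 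You also cannot count on two $(1,1)$ gadgets to realise a length-$10$ pair, since Disjointness only promises one common index. The fixed-length tail is therefore essential: it forces the diametric pair to go through the tail endpoint, which is what makes the $10$-vs.-$9$ gap hold unconditionally. Your primary argument already uses the tail, so the overall proof stands once this side remark is dropped.
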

\begin{proof}
    Let $x,y$ be the input to $\textsc{Disj}_n$ of Alice and Bob, respectively. We construct a graph as illustrated in Figure~\ref{fig:windmill}. The graph consist of $5n + 6$ vertices. We start with a path $P$ on $6$ vertices, call the vertex on one of the ends the \emph{center}. To this center, we will `glue' $n$ gadgets, which may vary depending on the input of Alice and Bob, so associate an index $i$ with each gadget. Each gadget adds 5 vertices to the graph. Let us describe one such gadget. Consider two triplets of vertices $a_{i,1},a_{i,2},a_{i,3}$ (Alice) and $b_{i,1},b_{i,2},b_{i,3}$ (Bob), and connect $a_{i,3}$ to $b_{i,1}$ with an edge. For Alice, if the entry at index $i$ is a 0, she inserts the edges $(a_{i,1},a_{i,2})$ and $(a_{i,1},a_{i,3})$, and if it is a 1, she inserts $(a_{i,1},a_{i,2})$ and $(a_{i,2},a_{i,3})$. Bob does the same for his triplet. We `glue' this gadget into the graph by identifying $a_{i,1}$ to be the same vertex as the center vertex in the graph. Assuming we have an algorithm that works on this graph, to construct an AL stream containing this graph, Alice first reveals the path $P$, and all her own vertices $a_{i,2}, a_{i,3}$ for all $i$, with all the incident edges. Then she passes the memory of the algorithm to Bob who reveals the vertices $b_{i,1}, b_{i,2}, b_{i,3}$ for all $i$, with all the incident edges. This completes one pass of the stream. Notice that the graph is connected.
    
    We now claim that the diameter of the `Windmill' graph is at least 10 when the answer to $\textsc{Disj}_n$ is NO, and otherwise the diameter is at most 9.
    
    Let us assume that the answer to $\textsc{Disj}_n$ is NO, that is, there is an index $i$ such that $x_i = y_i = 1$. Then, the distance from the end of the path $P$ to $b_{i,3}$ is exactly 10, and this is the only simple path between these vertices, so it is the shortest path. Hence, the diameter is at least 10.
    
    Now assume the answer to $\textsc{Disj}_n$ is YES, that is, there is no index $i$ such that $x_i = y_i = 1$. Then, the distance from the center vertex to any other vertex $a_{i,j}$ or $b_{i,j}$ is at most 4, as at least one of the triplets for each index forms a tree-like shape and not a path. Therefore, the diameter is at most 9, formed by the shortest path from the end of the path $P$ to some $b_{i,3}$.
    
    We conclude that any algorithm that can solve the \textsc{Diameter} problem on the graph construction `Windmill' in the AL model in $p$ passes, must use $\Omega(n/p)$ bits of memory by Proposition~\ref{prop:DisjReduction}.
\end{proof}

The following follows from Theorem~\ref{thm:Windmill} by observing properties of the `Windmill' construction.

\begin{restatable}{corollary}{WindmillResults}\label{thm:WindmillResults}
    Any streaming algorithm for \textsc{Diameter} in the AL model that uses $p$ passes over the stream must use $\Omega(n/p)$ bits of memory, even on graphs for which the algorithm is given
    \begin{enumerate}[noitemsep]
        \item that the input is a bounded depth tree,
        \item that the Maximum Degree is a constant of at least 3.
    \end{enumerate}
\end{restatable}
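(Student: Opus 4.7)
The plan is to derive both items as direct observations about the Windmill construction, with a modest variant for the second one.

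For item~(1), I want to observe that the Windmill is visibly a tree. It consists of the $6$-vertex path $P$ plus, for each index $i$, five extra vertices forming either a shallow tree (when Alice's or Bob's bit is~$0$) or a simple path (when both bits are~$1$), all glued at the shared center $a_{i,1}$. Rooted at the center, the deepest possible vertex is $b_{i,3}$ in the bit-$1$/bit-$1$ case, at depth exactly~$5$. So the Windmill is a rooted tree of constant depth, and Theorem~\ref{thm:Windmill} applied to this class yields the bound immediately.

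For item~(2) the bottleneck is that the original center has degree $\Theta(n)$, so a cosmetic variant is needed. The plan is to replace the single center by a balanced binary tree of depth $\lceil \log_2 n \rceil$, identifying each $a_{i,1}$ with a distinct leaf and attaching $P$ at the root. Every internal tree node has one parent and two children, the root gains the first edge of $P$, and each leaf picks up its gadget attachments, so all degrees become at most~$3$. The diameter analysis shifts by a uniform $\lceil \log_2 n \rceil$ on the tree-internal side, giving $2\lceil \log_2 n \rceil + 8$ in the YES case versus at least $2\lceil \log_2 n \rceil + 9$ in the NO case; the gap of~$1$ is preserved, so an algorithm that computes the exact diameter still decides $\textsc{Disj}_n$. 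The AL stream still needs only a single handoff: Alice reveals $P$, the binary tree, and her gadget triplets, then Bob reveals his triplets. Proposition~\ref{prop:DisjReduction} closes the argument.

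The main obstacle I anticipate for item~(2) is checking that after the binary-tree replacement the diameter is still attained by the leaf-to-leaf pair inside the gadgets, rather than accidentally by a path-$P$-to-leaf pair. A short case analysis on the three candidate pairs (root-to-leaf, leaf-to-leaf, and $P$-to-leaf) suffices: as long as $\lceil \log_2 n \rceil \ge 2$, the leaf-to-leaf pair dominates on both the YES and NO sides, so the gap of~$1$ remains genuine and the reduction goes through.
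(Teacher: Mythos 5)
Your item~(1) matches the paper. For item~(2), however, you omitted the crucial second adjustment made in the paper: \emph{extending the path $P$ to length roughly $5 + \log n$}. Replacing the center by a binary tree of depth $\lceil\log_2 n\rceil$ while leaving $P$ at length~$5$ destroys the diameter gap. The point of the tail $P$ in the original Windmill is that it anchors one endpoint of the diametral pair at a fixed distance from the center, so that the diameter equals (tail length) $+$ (deepest gadget depth), and only the gadget depth ($4$ in the YES case, $5$ in the NO case) varies with the input. After the binary-tree replacement the two legs of a gadget-leaf-to-gadget-leaf path each pick up $\log n$, so such pairs sit at distance up to $2\log n + d_1 + d_2$, while the $P$-end-to-leaf distance is only $\log n + 5 + d$. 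Once $\log n \ge 2$, the leaf-to-leaf pairs dominate --- precisely the outcome you said you wanted --- but they no longer separate YES from NO: on a YES input with $x_1 = y_n = 1$ (and all else zero) two depth-$4$ gadgets at maximally separated leaves give diameter $2\log n + 8$, while on the NO input with $x_1 = y_1 = 1$ (and all else zero) the single depth-$5$ gadget is surrounded by depth-$3$ gadgets, and the maximum leaf-to-leaf distance is $2\log n + 5 + 3 = 2\log n + 8$ as well. The two cases coincide, so the reduction no longer decides \textsc{Disj}$_n$.

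The fix in the paper's proof is to lengthen the path to $5 + \log n$ so that the $P$-end-to-leaf pair again provably dominates every leaf-to-leaf pair: then the diameter is $2\log n + 9$ in the YES case versus at least $2\log n + 10$ in the NO case, and the gap of~$1$ is genuinely preserved. Your anticipated case analysis actually reveals this problem rather than resolves it --- the pair you want to dominate is the $P$-to-leaf pair, not the leaf-to-leaf pair, and making that happen requires the path extension you left out.
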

\begin{proof}
    The corollary follows from Theorem~\ref{thm:Windmill}, together with observing that `Windmill' is
    \begin{enumerate}[noitemsep]
        \item a bounded depth tree,
        \item a lower bound that still works when we convert the center vertex into a binary tree of depth $c = \O(\log n)$ and extend the path to size $5 + c$ accordingly (this makes the diameter distinction to be $9 + 2c$ or $10 + 2c$).
    \end{enumerate}
\end{proof}

Next we look at the case when the input graph is close to a path.

\begin{figure}[!htp]
    \centering
    \includegraphics[width=\textwidth]{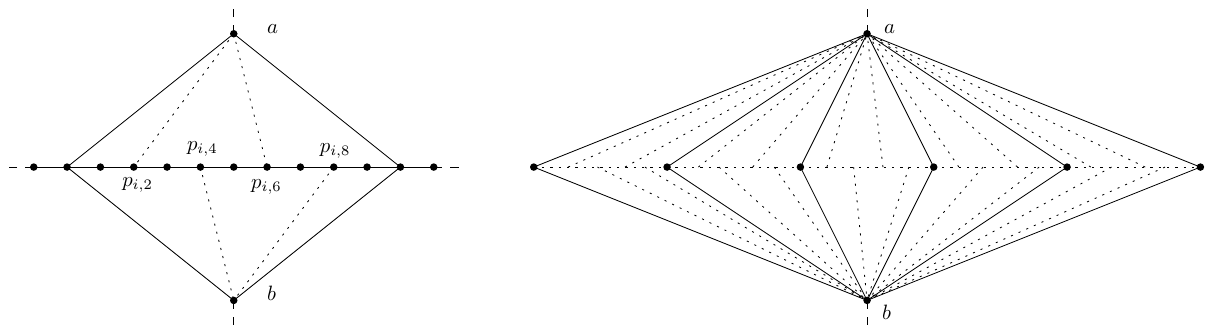}
    \caption{AL lower bound for diameter consisting of a path and 2 vertices, called `Diamond'. Note that $a$ is connected to $b$ with an edge (indicated with a dashed line here). On the left the gadget for a single index $i$ is shown, where the dotted edges are present when the entry at index $i$ is 0 (for Alice incident on $a$, for Bob incident on $b$). On the right, the construction is sketched in full.}
    \label{fig:diamond}
\end{figure}

\begin{restatable}{theorem}{Diamond}\label{thm:Diamond}
    Any streaming algorithm for \textsc{Diameter} that works on the `Diamond' construction in the AL model using $p$ passes over the stream requires $\Omega(n/p)$ bits of memory.
\end{restatable}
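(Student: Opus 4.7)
The plan is to reduce from $\textsc{Disj}_n$ and invoke Proposition~\ref{prop:DisjReduction}. Given Alice's input $x\in\{0,1\}^n$ and Bob's input $y\in\{0,1\}^n$, we realise the `Diamond' graph of Figure~\ref{fig:diamond}: a long path backbone, two extra vertices $a$ and $b$ joined by the fixed edge $ab$, and one small gadget per index inserted into the backbone whose incidences to $a$ depend only on $x_i$ and whose incidences to $b$ depend only on $y_i$. The layout is such that within every gadget the vertices that are variable-adjacent to $a$ are disjoint from those variable-adjacent to $b$, so every vertex other than $a$ and $b$ has its full adjacency determined by one party alone, which is what the AL model demands.

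To produce one AL pass, Alice streams the backbone together with all gadget vertices on her side of each gadget (with their full neighbourhoods, which she knows), followed by $a$ with all its incidences: the dotted edges chosen by $x$ together with the fixed $ab$. She then transfers the memory state to Bob, who symmetrically streams the remaining gadget vertices and then $b$ with all its incidences (the dotted edges chosen by $y$ together with $ab$). This uses $\O(1)$ memory transfers per pass, matching the hypothesis of Proposition~\ref{prop:DisjReduction}.

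Next I will exhibit a diameter gap realised by two designated endpoints of the backbone (if necessary, short tails are appended so that these endpoints do realise the diameter). In the NO case of $\textsc{Disj}_n$ there exists an index $i$ with $x_i=y_i=1$: this ensures that at the gadget of index $i$ the dotted edges to both $a$ and $b$ are absent, forcing any path from one endpoint to the other to traverse that gadget along the backbone in full, yielding distance at least some threshold $D$. In the YES case, every gadget has at least one of its dotted $a$- or $b$-sides present, so the trip through every gadget can be shortcutted by hopping onto $a$ or $b$ and, if needed, using the $ab$-edge once to switch sides; this yields distance strictly less than $D$ between any two backbone vertices.

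The main obstacle is the YES direction: I must ensure that when $x$ and $y$ each have many $0$-entries but never at the same index, the combined $a$-shortcut at some index $i$ and $b$-shortcut at some other index $j$, spliced together via the single $ab$-edge, truly suffices to bypass \emph{every} gadget at cost $<D$. This is handled by calibrating the intra-gadget backbone length so that a ``crossed'' route $p\to a\to b\to q$ of total hop-cost $3$ plus the backbone distance between the $a$-attachment at $i$ and the $b$-attachment at $j$ is never longer than the fully matched shortcut in the NO analysis; the constant $3$ arising from the use of $a$, $ab$, and $b$ then pins down $D$ exactly. Once the gap is verified, Proposition~\ref{prop:DisjReduction} delivers the claimed $\Omega(n/p)$ memory lower bound.
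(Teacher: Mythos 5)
Your reduction framework (from $\textsc{Disj}_n$ via Proposition~\ref{prop:DisjReduction}), the description of the `Diamond' layout, and the one-pass simulation by Alice and Bob are all fine. However, the analysis of where the diameter gap lives is wrong, and this is a genuine gap in the argument, not a cosmetic one.

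You claim that the NO-instance gap is witnessed by the two endpoints of the backbone, because the uncut gadget ``forces any path from one endpoint to the other to traverse that gadget along the backbone in full.'' This is false for the `Diamond' construction. In that construction, the gadgets are inserted between ``joint'' vertices $c_0,\ldots,c_n$, and every $c_i$ is adjacent to both $a$ and $b$ \emph{unconditionally}. Hence $c_0$ and $c_n$ (and any tail vertices you append to them) are always at distance $2$ from each other via $c_0$--$a$--$c_n$, independent of the input. The single uncut gadget never blocks the endpoint-to-endpoint distance, because the $c_i$'s provide input-independent shortcuts around every gadget. Appending tails to $c_0$ or $c_n$ does not rescue this: those tails inherit the proximity of $c_0,c_n$ to $a$ and $b$, so the distances that would realise your intended threshold $D$ are input-independent.

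The correct witness for the NO gap is a vertex in the \emph{interior} of the uncut gadget, not a backbone endpoint. Concretely, with the paper's gadget $P_i=p_{i,1},\ldots,p_{i,9}$ and dotted edges to $p_{i,2},p_{i,6}$ (Alice) and $p_{i,4},p_{i,8}$ (Bob), the midpoint $p_{i,5}$ of an uncut gadget is at distance exactly $6$ from both $a$ and $b$; any other gadget's midpoint $p_{j,5}$ is never adjacent to $a$ or $b$, so $\dist(p_{i,5},p_{j,5}) \geq 8$. In the YES case, every gadget has at least one dotted side, so every $p_{i,j}$ is within distance $3$ of $a$ or of $b$; since $ab$ is an edge, the diameter is at most $7$. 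So the gap is $\geq 8$ versus $\leq 7$, witnessed by interior midpoints, and the YES direction is the straightforward one (every vertex sits within a ball of radius $3$ around $\{a,b\}$), not an obstacle. You flagged the YES direction as the hard part, but the actual subtlety is correctly identifying the diametral pair in the NO case, which your proposal gets wrong.
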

\begin{proof}
    Let $x,y$ be the input to $\textsc{Disj}_n$ of Alice and Bob, respectively. We construct a graph as illustrated in Figure~\ref{fig:diamond}. We construct a graph on $10n+3$ vertices. We start with two vertices $a$ and $b$, connected by an edge. Create $n + 1$ vertices connected to $a$ and $b$ with an edge and label them $c_0, \ldots c_n$. We create a gadget for index $i$ between $c_{i-1}$ and $c_i$ for every $1\leq i \leq n$. For index $i$, we insert a path $P_i = p_{i,1}, \ldots, p_{i,9}$ on 9 vertices with $p_{i,1}$ connected to $c_{i-1}$ and $p_{i,9}$ to $c_i$ with an edge. The edges $(a,p_{i,2})$ and $(a,p_{i,6})$ are present if and only if $x_i = 0$ for Alice, and the edges $(b, p_{i,4})$ and $(b,p_{i,8})$ are present if and only if $y_i = 0$ for Bob. Assuming we have an algorithm that works on this construction, to construct an AL stream containing this graph, Alice first reveals all vertices except $b$ and $p_{i,4}$ and $p_{i,8}$ for all $i$ (the edges are fixed, or the input of Alice decides the edges, for these vertices), then passes the memory state of the algorithm to Bob who reveals exactly $b$ and $p_{i,4}$ and $p_{i,8}$ for all $i$. This completes one pass of the stream. Notice that the graph is connected.
    
    We now claim that the diameter of the `Diamond' graph is at least 8 when the answer to $\textsc{Disj}_n$ is NO, and otherwise the diameter is at most 7.
    
    Let us assume that the answer to $\textsc{Disj}_n$ is NO, that is, there is an index $i$ such that $x_i = y_i = 1$. Then, consider $p_{i,5}$. The distance from this vertex to $a$ or $b$ is exactly 6. So, the distance from $p_{i,5}$ to some other $p_{j,5}$ for $j\neq i$ is at least 8.
    
    Now assume the answer to $\textsc{Disj}_n$ is YES, that is, there is no index $i$ such that $x_i = y_i = 1$. Then for any $1\leq i \leq n$ and $1\leq j\leq 9$ the distance from $p_{i,j}$ to one of either $a$ or $b$ is at most 3. So every vertex in the graph has distance at most 3 to either $a$ or $b$. But then, as $a$ and $b$ are connected with an edge, the diameter must be at most 7.
    
    We conclude that any algorithm that can solve the \textsc{Diameter} problem on the graph construction `Diamond' in the AL model in $p$ passes, must use $\Omega(n/p)$ bits of memory by Proposition~\ref{prop:DisjReduction}.
\end{proof}

The following follows from Theorem~\ref{thm:Diamond} by observing properties of the `Diamond' construction.

\begin{restatable}{corollary}{DiamondResults}\label{thm:DiamondResults}
    Any streaming algorithm for \textsc{Diameter} in the AL model that uses $p$ passes over the stream must use $\Omega(n/p)$ bits of memory, even on graphs for which the algorithm is given a Deletion Set to a path of size at least 2.
\end{restatable}
\begin{proof}
    The corollary follows from Theorem~\ref{thm:Diamond}, together with observing that `Diamond' is a path when we remove $\{a,b\}$.
\end{proof}

\begin{figure}[!htp]
    \centering
    \includegraphics[width=.6\textwidth]{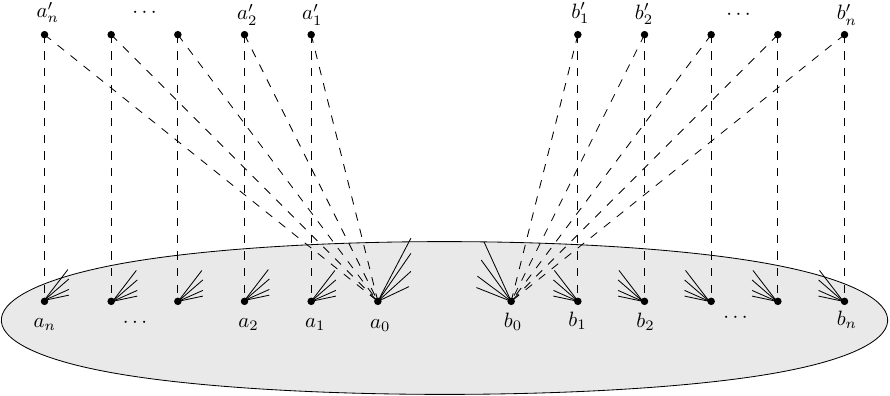}
    \caption{AL lower bound for diameter on split graphs, called `Split'. Depending on the input, some $a'_i$ either has an edge to $a_0$ or $a_i$ when $x_i = 0$ or 1. The same holds for $b'_i$ with $y_i$. The grey area forms a clique, and each $a_i$ is connected to all $b'_j$ where $i\neq j$, and the same holds for $b_i$ and $a'_j$.}
    \label{fig:split}
\end{figure}

Next we show that the \textsc{Diameter} problem in the AL model is hard on split graphs.

\begin{restatable}{theorem}{Split}\label{thm:Split}
    Any streaming algorithm for \textsc{Diameter} that works on split graphs in the AL model using $p$ passes over the stream requires $\Omega(n/p)$ bits of memory.
\end{restatable}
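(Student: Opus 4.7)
The approach is a reduction from $\textsc{Disj}_n$ to \textsc{Diameter} on the `Split' construction of Figure~\ref{fig:split}, followed by an appeal to Proposition~\ref{prop:DisjReduction}. First I would describe the graph precisely: form a clique $K=\{a_0,a_1,\ldots,a_n,b_0,b_1,\ldots,b_n\}$ and an independent set $I=\{a'_1,\ldots,a'_n,b'_1,\ldots,b'_n\}$, with \emph{fixed} bipartite edges $a_kb'_j$ for all $k\neq j$ and $b_ka'_j$ for all $k\neq j$, and \emph{variable} edges: Alice joins $a'_i$ to $a_0$ if $x_i=0$ and to $a_i$ if $x_i=1$, while Bob joins $b'_i$ to $b_0$ if $y_i=0$ and to $b_i$ if $y_i=1$. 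The resulting graph is a split graph on $\Theta(n)$ vertices and is connected under the usual non-all-zero assumption on the $\textsc{Disj}_n$ instance.

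For the AL stream, Alice reveals all vertices in $\{a_0,\ldots,a_n,a'_1,\ldots,a'_n\}$ with their full adjacency lists and then hands the memory state to Bob, who reveals $\{b_0,\ldots,b_n,b'_1,\ldots,b'_n\}$. The key observation is that every edge incident to a vertex in Alice's batch is either fixed (hence known to both) or determined by $x$ alone, and symmetrically for Bob; thus neither player needs the other's input to produce a consistent AL adjacency list. A $p$-pass algorithm is simulated by exchanging the memory state a constant number of times per pass, as required by Proposition~\ref{prop:DisjReduction}.

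The heart of the proof is a diameter gap: the diameter is $2$ when $\textsc{Disj}_n(x,y)$ is YES and at least $3$ otherwise. For pairs involving a clique vertex, the distance is $\leq 2$ since every primed vertex has at least one neighbour in $K$. For two primed vertices on the same side, any $b_k$ (resp.\ $a_k$) with $k\notin\{i,j\}$ is a common neighbour. For $(a'_i,b'_j)$ with $i\neq j$, the vertex $a_i$ works if $x_i=1$, else $a_0$ works (using $j\geq 1$ so $a_0\in N(b'_j)$). The critical pairs are $(a'_i,b'_i)$: if $x_i=0$ then $a_0\in N(a'_i)\cap N(b'_i)$ (since $i\geq 1$), and symmetrically $b_0$ witnesses distance~$2$ if $y_i=0$; hence whenever $(x_i,y_i)\neq(1,1)$ the distance is~$2$. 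If instead $x_{i^*}=y_{i^*}=1$ for some $i^*$, then $N(a'_{i^*})=\{b_j:j\neq i^*\}\cup\{a_{i^*}\}$ and $N(b'_{i^*})=\{a_j:j\neq i^*\}\cup\{b_{i^*}\}$, and one checks these sets are disjoint: the fixed bipartite pattern deliberately skips index $i^*$, so $a_{i^*}\notin N(b'_{i^*})$ and $b_{i^*}\notin N(a'_{i^*})$. Thus $d(a'_{i^*},b'_{i^*})\geq 3$.

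The main obstacle is engineering the construction so that the diameter gap and the AL stream split hold simultaneously. The ``skip at index $i^*$'' in the fixed bipartite edges is what forces distance~$3$ in the NO case, while the universal fallback anchors $a_0,b_0$ ensure distance~$2$ in the YES case; at the same time, no vertex can have edges depending on both $x$ and $y$, which is why the variable edges from $a_0$ and $a_i$ go only to $a'$-vertices (all owned by Alice), and symmetrically for Bob. Given all this, Proposition~\ref{prop:DisjReduction} delivers the claimed $\Omega(n/p)$ lower bound.
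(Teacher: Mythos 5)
Your proof is correct and follows essentially the same route as the paper: you reconstruct the `Split' gadget of Figure~\ref{fig:split} with the same fixed/variable edge partition, the same Alice/Bob ordering for the AL stream, and the same diameter-$2$ versus diameter-$\geq 3$ gap hinging on the skipped index $i^*$, then apply Proposition~\ref{prop:DisjReduction}. One small remark: the non-all-zeroes assumption you invoke for connectedness is not actually needed here, since every primed vertex has fixed edges into the clique (e.g.\ $a'_i$ is always adjacent to $b_0$), so the graph is connected for every input.
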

\begin{proof}
    Let $x,y$ be the input to $\textsc{Disj}_n$ of Alice and Bob, respectively. We construct a graph as illustrated in Figure~\ref{fig:split}. We construct a graph on $4n+2$ vertices. The split graph we construct has as a clique on $2n+2$ vertices, let this be the vertices $a_0, \ldots, a_n$, $b_0, \ldots, b_n$. The independent set consists of the vertices $a_1',\ldots, a_n'$, $b_1',\ldots, b_n'$. The following edges are present regardless of the input: we connect $a_0$ to all $b_i'$, $1\leq i \leq n$, and similarly $b_0$ to all $a_i'$, $1\leq i \leq n$. We also connect each $a_i$ to all $b_j'$ where $j\neq i$ for $1 \leq i \leq n$. Similarly, we connect each $b_i$ to all $a_j'$ where $j\neq i$ for $1 \leq i \leq n$. The other edges are input dependent. For each index $i$, the edge $(a_i, a_i')$ is inserted when $x_i = 1$ and otherwise the edge $(a_0, a_i')$ is inserted. Similarly, for each $i$, the edge $(b_i, b_i')$ is inserted when $y_i = 1$ and otherwise the edge $(b_0, b_i')$ is inserted. This completes the construction. Note that it is a split graph as the vertices $a_0, \ldots, a_n$, $b_0, \ldots, b_n$ form a clique and $a_1',\ldots, a_n'$, $b_1',\ldots, b_n'$ form an independent set. Assuming we have an algorithm that works on split graphs, to construct an AL stream containing this graph, Alice reveals all the $a$-vertices. Then she passes the memory of the algorithm to Bob, who reveals all the $b$-vertices. This completes one pass of the stream. Notice that Alice and Bob do not require information on the input of the other, as only input-independent edges connect $a$-vertices to $b$-vertices.
    
    We claim that the diameter of this graph is at most 2 if the answer to $\textsc{Disj}_n$ is YES and otherwise the diameter is at least 3.
    
    Let us assume the answer to $\textsc{Disj}_n$ is NO, that is, there is an index $i$ such that $x_i = y_i = 1$. Consider the distance between $a_i'$ and $b_i'$ in this instance. Notice that because $x_i = y_i = 1$, there is no vertex in the clique connected to both vertices. As both vertices are connected to some vertex in the clique, the distance between them must be 3.
    
    Now assume the answer to $\textsc{Disj}_n$ is YES, that is, there is no index $i$ such that $x_i = y_i = 1$. We show that the diameter is at most 2. The distance between vertices in the clique is at most 1. The distance from any $a_i'$ or $b_i'$ to some vertex in the clique is at most 2, because each $a_i'$ or $b_i'$ is always connected to at least one vertex in the clique. The distance from any $a_i'$ to another $a_j'$ is 2 because of $b_0$, similarly, the distance from any $b_i'$ to another $b_j'$ is 2 because of $a_0$. Let $1 \leq i,j \leq n$ be two (possibly the same) indices, and consider the distance between $a_i'$ and $b_j'$. If either $x_i = 0$ or $y_j = 0$ then the distance is 2 because of $a_0$ or $b_0$. Otherwise, both have a 1 at the corresponding index. But then we know that $i \neq j$, and so $a_i', a_i, b_j'$ is a path in the graph of length 2. Hence, the diameter of the graph is at most 2.
    
    We conclude that any algorithm that can solve the \textsc{Diameter} problem on split graphs in the AL model in $p$ passes, must use $\Omega(n/p)$ bits of memory by Proposition~\ref{prop:DisjReduction}.
\end{proof}

We can now prove Theorem~\ref{thm:HfreeOverview} and Theorem~\ref{thm:DistHfreeOverview}. Intuitively, if $H$ contains a cycle or a vertex of degree~$3$, a modification of `Windmill' is $H$-free; if $H$ is a linear forest, a modification of `Split' is (almost) $H$-free.

\HfreeOverview*
\begin{proof}
If $H$ contains a cycle as a subgraph, then the result follows from Theorem \ref{thm:Windmill}. Hence, we may assume that $H$ does not contain a cycle as a subgraph, and thus is a forest.

If $H$ contains a tree with a vertex of degree at least three, then the result follows from a slight modification of Theorem \ref{thm:Windmill}. Start from the version of the construction where each vertex has degree at most~$3$ (per Corollary \ref{thm:WindmillResults}) and let $c$ denote the center vertex. Note that the diametral pair in the construction is the other end $v\not=c$ of the path $P$ (recall that $c$ is one of its ends) and another leaf of the tree. Hence, we can add edges (which shorten distances) as long as this property is preserved. Consider the tree to be rooted at $v$. Make the two children of $c$ (which are not on $P$ by the choice of the root) adjacent, and recurse down the tree, consistently making children adjacent if there are two. The resulting graph has no $K_{1,3}$ as an induced subgraph, and thus is $H$-free. Hence, we may assume that $H$ also does not contain a vertex of degree at least three, and thus is a linear forest.

We now reduce the open cases to just $H=4P_1$ and $H=P_4+P_1$ and later show hardness for those cases. If $H$ contains a $2P_2$ as an induced subgraph, then the result follows from Theorem \ref{thm:Split}, as split graphs are $2P_2$-free. Hence, $H$ does not contain $2P_2$ as an induced subgraph. In particular, we may assume that all paths in $H$ are of length at most~$3$. If $H$ is the union of a $P_4$ and either at least two other paths or another path of length at least~$2$, then it contains a $4P_1$. In the other cases when $H$ is the union of a $P_4$ and other paths, it is $P_4+P_2$ (which contains a $2P_2$, and thus was already excluded) or $P_4+P_1$.
If $H$ is the union of a $P_3$ and at least two other paths, it contains a $4P_1$. If $H$ is the union of a $P_3$ and another path of length at least~$1$, then it contains a $2P_2$ and thus was already excluded. The case when $H$ is $P_3+P_1$ has been excluded by assumption. Hence, we may assume that $H$ contains only paths of length at most~$1$.
If $H$ contains a $P_2$, then it cannot contain another $P_2$, as $2P_2$ would be an induced subgraph, nor can it be $P_2+P_1$ which is an induced subgraph of $P_4$, nor can it be $P_2+sP_1$ for $s \geq 2$ which is $P_2+2P_1$ (excluded by assumption) or contains a $4P_1$. If $H$ does not contain a $P_2$, then it is $sP_1$ for some $s$. However, $P_1$ and $2P_1$ are induced subgraphs of $P_4$, $3P_1$ is excluded by assumption, and $sP_1$ for $s \geq 4$ contains $4P_1$ as an induced subgraph. Hence, the open cases have been successfully reduced.

To tackle the remaining cases, $H=4P_1$ or $H=P_4+P_1$,  we modify the construction of Theorem \ref{thm:Split}. Let $(C,I)$ be the split partition implied by the construction. In that construction, it can be readily seen that the vertices $a'_1,\ldots,a'_n$ can be turned into a clique $A'$ and the vertices $b'_1,\ldots,b'_n$ can be turned into a clique $B'$ without affecting the correctness of the reduction. Observe that the resulting graph is $4P_1$-free, as it is a union of three cliques. To see that it is also $P_4+P_1$-free, note that in any induced subgraph isomorphic to $P_4+P_1$, the $P_4$ must contain two consecutive vertices in $A'$, say $a'_i$ and $a'_j$ for some $i \not= j$, and two consecutive vertices $c,c'$ in $C$ (the case when it contains two consecutive vertices in $B'$ and $C$ is symmetric). Note that $c,c' \not\in \{b_0,b_1,\ldots,b_n\}$ or $c'$ (the end of the $P_4$) would be adjacent to $a'_i$ or $a'_j$. Moreover, $c,c' \not\in \{a_0,a_1,\ldots,a_n\}$, as they would jointly cover $B'$, leaving no room for the $P_1$. The theorem follows.
\end{proof}

\DistHfreeOverview*
\begin{proof}
We start with the case when $G-X$ has to be a connected graph. By Theorem \ref{thm:HfreeOverview}, only the cases when $H \subseteq_i P_4$ and $H= 3P_1$, $H =P_3+P_1$, $H=P_2+2P_1$ still need to be proven.

If $H$ is a $P_4$, then the result follows from Theorem \ref{thm:SimpleAL}, because in that construction for some $X$ of size~$2$, $G-X$ is a union of triangles where the triangles have a single common vertex $c$. Hence, we may assume $H$ has only paths of length at most $2$. If $H$ is $P_3+P_1$, then the result follows again from Theorem \ref{thm:SimpleAL}, because in that construction the vertex $c$ is dominating, yet must be in any induced $P_3$. We also note that by assumption, $H \not= P_3$. Hence, we may assume $H$ only has paths of length at most $1$. These cases are all resolved by assumption.

In case $G-X$ does not have to be connected, the only relevant case is when $H = P_3$. In that case, the result follows from Theorem \ref{thm:SimpleAL}, because in that construction for some $X$ of size~$3$, $G-X$ is a matching.
\end{proof}

We can also prove a quadratic bound for general graphs; see Figure~\ref{fig:quadratic} for the construction.

\begin{figure}
    \centering
    \includegraphics[width=.6\textwidth]{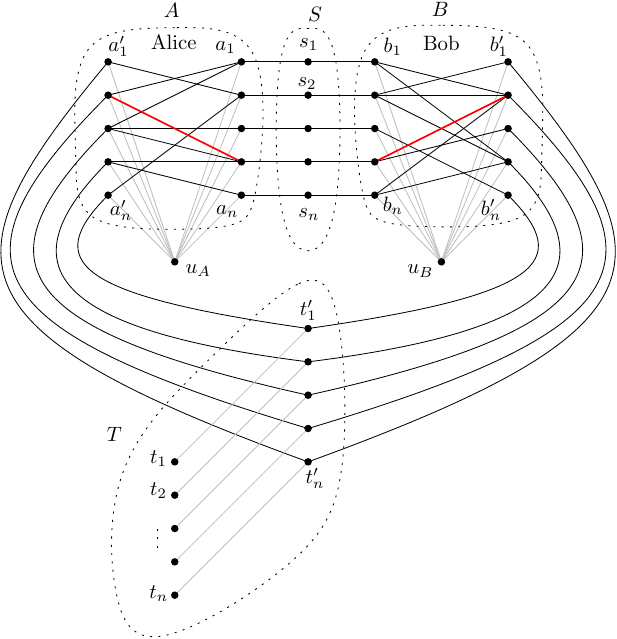}
    \caption{AL lower bound for diameter where Alice and Bob have $n^2$ bits but the graph has $\O(n)$ vertices. The bits are seen as an adjacency matrix in the bipartite graphs $A$ and $B$, identically: the red edge $a_i'$ to $a_j$ in $A$ is the same index as the red edge $b_j$ to $b_i'$ in $B$. Edges are present when the entry is a 0. Then, each $s_i, t_j$ pair can discern whether or not at least one of the edges $a_i$ to $a'_j$ or $b_i$ to $b'_j$ is present, hence deciding whether or not both Alice and Bob have a 1 at that entry.}
    \label{fig:quadratic}
\end{figure}

\begin{restatable}{theorem}{QuadraticDiameter}\label{thm:QuadraticDiameter}
    Any streaming algorithm for \textsc{Diameter} on general (dense) graphs in the AL model using $p$ passes over the stream requires $\Omega(n^2/p)$ bits of memory.
\end{restatable}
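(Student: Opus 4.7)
The plan is to reduce from $\textsc{Disj}_{n^2}$, which packs $n^2$ bits of joint input into a graph on only $\O(n)$ vertices, so that Proposition~\ref{prop:DisjReduction} applied with input length $n^2$ yields the desired $\Omega(n^2/p)$ memory lower bound in terms of the number of vertices.

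First, I would realise the construction sketched in Figure~\ref{fig:quadratic}. Let Alice hold a matrix $X \in \{0,1\}^{n \times n}$ and Bob a matrix $Y \in \{0,1\}^{n \times n}$, encoding an instance of $\textsc{Disj}_{n^2}$. Introduce six groups of $n$ vertices each, $\{a_j\}, \{a'_i\}, \{b_i\}, \{b'_j\}, \{s_i\}, \{t_j\}$, and add the fixed edges $s_i \sim a'_i$, $s_i \sim b_i$, $t_j \sim a_j$, $t_j \sim b'_j$ for every $i,j$, together with the input-dependent edges $a'_i \sim a_j$ iff $X_{ij}=0$ (placed by Alice) and $b_i \sim b'_j$ iff $Y_{ij}=0$ (placed by Bob). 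A direct check shows the graph is bipartite with parts $\{a'_i\} \cup \{b_i\} \cup \{t_j\}$ and $\{a_j\} \cup \{b'_j\} \cup \{s_i\}$, which will also yield Corollary~\ref{cor:QuadraticDiameterBipartite} for free.

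Next, I would carry out the diameter analysis. The only two canonical length-$3$ paths between $s_i$ and $t_j$ are $(s_i, a'_i, a_j, t_j)$, present iff $X_{ij}=0$, and $(s_i, b_i, b'_j, t_j)$, present iff $Y_{ij}=0$; hence $d(s_i, t_j) \le 3$ iff at least one of $X_{ij}, Y_{ij}$ equals $0$. If $\textsc{Disj}_{n^2}$ answers YES, every such pair satisfies $d(s_i, t_j) \le 3$, and by a case analysis on the six vertex groups (exploiting the fixed edges, and assuming w.l.o.g.\ that neither input is all-zero, since that case is trivial as in Theorem~\ref{thm:SimpleVA}) every other pair also lies at distance at most $3$; if necessary, a constant number of auxiliary vertices can be inserted to pin down these other distances without shortcutting the $s$-$t$ test. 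If the answer is NO, a witness index $(i^*,j^*)$ with $X_{i^*j^*}=Y_{i^*j^*}=1$ admits no length-$3$ path between $s_{i^*}$ and $t_{j^*}$, and since the graph is bipartite the distance jumps to at least $5$, producing a strict diameter gap.

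Finally, I would verify the reduction is valid in the AL model. Crucially, the adjacency list of every vertex is fully determined by \emph{only one} of the two players: the vertices $a_j, a'_i, s_i, t_j$ have neighbourhoods that depend only on $X$ (with fixed edges to some $b$- or $b'$-vertices whose names Alice also knows), while $b_i, b'_j$ depend only on $Y$. Alice therefore first reveals her vertices together with their complete adjacency lists, communicates the algorithm's memory state once to Bob, and Bob then reveals the remaining vertices $b_i, b'_j$, completing one AL pass with a constant number of memory-state exchanges. Proposition~\ref{prop:DisjReduction} applied to $\textsc{Disj}_{n^2}$ then yields $\Omega(n^2/p)$ bits of memory. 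The main obstacle is the second step: I must ensure that \emph{every} pair not of the form $(s_i, t_j)$ already has distance at most $3$ in the YES case so that the diameter is governed solely by the $s$-$t$ distances, while simultaneously \emph{not} introducing any auxiliary shortcut that would collapse $d(s_{i^*}, t_{j^*})$ in the NO case; this balance is what drives the precise choice of fixed edges and any auxiliary vertices.
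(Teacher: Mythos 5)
Your overall plan — pack $\textsc{Disj}_{n^2}$ into two bipartite adjacency gadgets (one for Alice, one for Bob) plus selector vertices $s_i,t_j$, arrange an AL pass with a single handoff, and invoke Proposition~\ref{prop:DisjReduction} with input length $n^2$ — is the same as the paper's. However, you modify the paper's construction in a way that breaks the diameter gap, and this is a genuine gap in the argument, not a detail you can defer.

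Concretely, your construction drops two things the paper relies on: the universal vertices $u_A,u_B$ adjacent to the $A$- and $B$-gadget respectively, and the two-vertex tails $t'_j,t_j$ hanging off the gadgets. Without any auxiliary vertices, the claim ``every other pair also lies at distance at most $3$ in the YES case'' is simply false: for instance $N(s_i)=\{a'_i,b_i\}$ and $N(s_k)=\{a'_k,b_k\}$ have no common neighbour and none of $s_i$'s second-neighbours ($a_j$'s and $b'_j$'s) is adjacent to $s_k$, so $d(s_i,s_k)\geq 4$ regardless of the input; similarly $d(a'_i,t_j)$ can be $4$ when $X_{ij}=1$ and $Y_{ij}=0$. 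So the YES-case diameter is not $3$, and the gap disappears. If instead you try to repair this with universal vertices (as you suggest ``if necessary''), you run into the opposite problem: a vertex $u_A$ adjacent to all of $\{a_j\}\cup\{a'_i\}$ creates the path $s_i - a'_i - u_A - a_j - t_j$ of length $4$ for \emph{every} $(i,j)$, so in the NO case $d(s_{i^*},t_{j^*})$ is still at most $4$, and the claimed jump to $\geq 5$ never happens. The paper escapes precisely by making $t_j$ the end of a length-$2$ tail $t_j - t'_j$: then the same $u_A$-path becomes length $5$, so the YES/NO separation is $4$ versus $\geq 5$, and $u_A,u_B$ can safely control all the non-$(s,t)$ distances. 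Your sentence acknowledging the need for auxiliary vertices ``without shortcutting the $s$-$t$ test'' is exactly where the proof lives, and with your construction there is no placement of a constant number of vertices that achieves both sides simultaneously. Relatedly, the bipartite corollary does not come ``for free'': the paper has to split $u_A,u_B$ and lengthen the tails further to preserve both bipartiteness and the gap, so bipartiteness of your intermediate graph by itself does not yield Corollary~\ref{cor:QuadraticDiameterBipartite}.

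The AL-handoff part of your argument is correct (Alice can reveal $a_j,a'_i,s_i,t_j$ since the cross-edges to $b$-vertices are all fixed), and reading the $n^2$ bits as two inverted adjacency matrices and aiming for a $d(s_i,t_j)$ gap is the right high-level idea. What is missing is the paper's mechanism — tails on the $t$-side plus per-gadget universal vertices — to make the $s$-$t$ distances the ones that determine the diameter.
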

\begin{proof}
    Let $N = n^2$. Let $x,y$ be the input to $\textsc{Disj}_N$ of Alice and Bob, respectively. We construct a graph as illustrated in Figure~\ref{fig:quadratic}. We construct a graph on $7n + 2$ vertices, but $\O(n^2)$ edges. The input of Alice and Bob will control the edges in a bipartite graph each. Let $a_1, \ldots a_n$ and $a_1', \ldots, a_n'$ be the bipartite graph $A$ for Alice. Alice now views her input $x$ as an adjacency matrix for the $n^2$ potential edges in $A$, but inverse, so an edge is present if and only if the corresponding entry is a 0. We also add a (universal) vertex $u_A$ which we connect to all vertices in $A$. For Bob, do the same with vertices $b_1,\ldots,b_n$ and $b_1',\ldots,b_n'$ forming a bipartite graph $B$. Bob also views his input $y$ as an adjacency matrix (in exactly the same order as Alice!) for the $n^2$ potential edges in $B$, but inverse, so an edge is present if and only if the corresponding entry is a 0. We also add a vertex $u_B$ which we connect to all vertices in $B$. To complete the construction, we create a set $S$ of $n$ vertices $s_1,\ldots,s_n$ and we connect $s_i$ to $a_i$ and $b_i$ for each $1\leq i\leq n$. Then, we also create a set $T$ of $2n$ vertices $t_1,\ldots,t_n$ and $t_1',\ldots,t_n'$, where we connect $t_i'$ with $t_i$ and $a_i'$ and $b_i'$. This completes the construction. Given an algorithm which works on such a graph, Alice and Bob can construct an AL stream by having Alice first reveal all vertices in $S,A, T$ and $u_A$ with their incident edges, then passing the memory state to Bob who reveals all vertices in $B$ and $u_B$ with their incident edges. This completes one pass of the stream. Notice that the graph is connected.
    
    We now claim that the diameter of this construction is at least 5 when the answer to $\textsc{Disj}_N$ is NO, and otherwise the diameter is at most 4.
    
    Let us assume that the answer to $\textsc{Disj}_n$ is NO, that is, there is an index $i'$ such that $x_{i'} = y_{i'} = 1$. Let $i, j$ be the pair of indices such that the edges $(a_i,a_j')$ and $(b_i,b_j')$ are decided by $x_{i'}$ and $y_{i'}$ respectively. In this case, both these edges are not present in the graph. Hence, the shortest path from $s_i$ to $t_j$ must use either $u_A$ or $u_B$, or use at least 3 edges in $A$ or $B$ (because $A$ and $B$ are bipartite graphs). Hence, the distance from $s_i$ to $t_j$ must be at least 5.
    
    Now assume the answer to $\textsc{Disj}_n$ is YES, that is, there is no index $i'$ such that $x_{i'} = y_{i'} = 1$. Then, for every $1\leq i,j\leq n$ pair there exists a path from $s_i$ to $t_j$ of length 4, because either or both of the edges $(a_i,a_j')$, $(b_i,b_j')$ are present in the graph. One can check that all other distances are at most 4 as well\footnote{If the reader is not convinced, notice that we could always extend the tails $t_i,t_i'$ to consist of a longer path, making paths other that those that originate from $T$ negligible.}.
    
    We conclude that any algorithm that can solve the \textsc{Diameter} problem on general (dense) graphs in the AL model in $p$ passes, must use $\Omega(N/p) = \Omega(n^2/p)$ bits of memory by Proposition~\ref{prop:DisjReduction}.
\end{proof}

Splitting up $u_A$ and $u_B$ into two vertices each, and making the tails from $t'_i$ to $t_i$ at least three edges longer for each $i$ makes the lower bound work for bipartite graphs.

\begin{restatable}{corollary}{QuadraticDiameterBipartite}\label{cor:QuadraticDiameterBipartite}
    Any streaming algorithm for \textsc{Diameter} on bipartite graphs in the AL model using $p$ passes over the stream requires $\Omega(n^2/p)$ bits of memory.
\end{restatable}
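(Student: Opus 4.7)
The plan is to modify the \emph{Quadratic} construction of Theorem~\ref{thm:QuadraticDiameter} to be bipartite while preserving a gap between the $s_i$--$t_j$ distances in YES and NO instances of $\textsc{Disj}_N$ for $N=n^2$. The only obstacles to bipartiteness in the original construction are the two universal vertices $u_A$ and $u_B$, each of which is adjacent to both sides of a bipartite subgraph. All other vertices admit the consistent two-colouring that places $\{a_i,b_i,t'_i\}$ in one colour class and $\{a'_i,b'_i,s_i,t_i\}$ in the other.

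Following the hint in the statement, I would replace $u_A$ by two vertices $u_A^1$ (adjacent to every $a_i$) and $u_A^2$ (adjacent to every $a'_j$), joined by a single edge, and perform the analogous split on $u_B$ to obtain $u_B^1,u_B^2$. Placing $u_A^1,u_B^1$ in the colour class of $\{a'_i,b'_i\}$ and $u_A^2,u_B^2$ in the class of $\{a_i,b_i\}$ makes every incident edge cross the bipartition. I would then subdivide each tail edge $(t'_i,t_i)$ by inserting three internal vertices, yielding a tail path of length $L=4$ that respects the two-colouring. The resulting graph is bipartite.

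Next, I would redo the distance analysis with the updated constants. In the YES case, for every pair $(i,j)$, at least one of the edges $(a_i,a'_j)$ and $(b_i,b'_j)$ is present, giving an $s_i$--$t_j$ path of length $3+L$ along its $A$- or $B$-branch. In the NO case, for the pair $(i,j)$ corresponding to the common $1$-entry, both direct edges are absent and any $s_i$--$t_j$ route is forced to traverse either $u_A^1 u_A^2$ or $u_B^1 u_B^2$, yielding length at least $5+L$. The three additional tail edges are chosen precisely so that the $s_i$--$t_j$ pairs dominate the diameter despite the universal-vertex split enlarging within-$A$ and within-$B$ distances. The AL stream is then constructed exactly as in Theorem~\ref{thm:QuadraticDiameter}: Alice reveals all vertices on the $A$-side together with $u_A^1,u_A^2$, $S$, $T$, and the tail interiors, together with her input-dependent $A$-edges; she passes the memory state to Bob, who reveals the $B$-side with $u_B^1,u_B^2$ and his input-dependent $B$-edges. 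Proposition~\ref{prop:DisjReduction} then gives the claimed $\Omega(n^2/p)$ bound.

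The main obstacle is certifying that the tail lengthening suffices to keep the diameter realised by an $s_i$--$t_j$ pair: after the split, the NO-case detour is two edges longer than the direct route rather than one (as in the original), and within-$A$/within-$B$ distances also grow, which in turn lengthens distances between two tail endpoints $t_i,t_j$. Showing that three added edges per tail still place the diameter-realising pair in $S\times T$ amounts to a case analysis over vertex types analogous to the footnote in the proof of Theorem~\ref{thm:QuadraticDiameter}, which is the most delicate piece of the argument.
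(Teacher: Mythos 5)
Your proposal tracks the paper's own proof closely: both start from the construction of Theorem~\ref{thm:QuadraticDiameter}, split each universal vertex $u_A$ (resp.\ $u_B$) into two copies, one per side of the bipartite gadget $A$ (resp.\ $B$), and then lengthen the tails $(t'_i,t_i)$ so that $t'_i$ to $t_i$ has distance at least~$4$. The one place you depart from the paper is the added edge $u_A^1u_A^2$ (and $u_B^1u_B^2$); the paper's terse description splits the universal vertex without joining the two copies. Your extra edge is compatible with bipartiteness, as you correctly colour the copies, and it has the advantage of giving a fixed-length ($5+L$) detour in the NO case; in the paper's version, with the copies unjoined, the $a_i$--$a'_j$ detour for a missing data edge is governed only by the remaining data edges (or by routing through $S$ and $B$), so the NO-case distance need not be pinned down exactly --- one only needs it to exceed the YES-case bound, which is fine but slightly less clean. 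So the modification is a reasonable simplification, not a genuinely different argument.

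Where I would push back is on the paragraph you flag as ``the most delicate piece.'' You are right to flag it, and I would go further: the assertion (made in both the paper's footnote and its proof of this corollary) that lengthening the tails makes the diameter be realized by an $S\times T$ pair deserves scrutiny. A path from $t_i$ to $t_j$ traverses two tails and picks up roughly $2L + d(t'_i,t'_j)$, whereas an $s_i$--$t_j$ path traverses only one tail and picks up roughly $L + d(s_i,t'_j)$; since $d(t'_i,t'_j)$ stays bounded (each $t'_\ell$ reaches $u_A^2$ in two steps), increasing $L$ makes the tail-to-tail pair grow \emph{faster}, not slower. So ``extend the tails'' alone does not obviously force the diametral pair into $S\times T$, and a careful case analysis over all pair types (as you anticipate) is genuinely required, and may in fact require rebalancing the construction (e.g.\ symmetrizing by also attaching tails to $S$, or collapsing the $t'_i$'s) rather than merely lengthening $T$'s tails. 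This is a gap you inherit from the paper, but since you explicitly isolate it as the open issue rather than claiming it closed, your write-up is honest about exactly where the work remains.
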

\begin{proof}
    The proof follows from adjusting the construction in Theorem~\ref{thm:QuadraticDiameter}. If we split up $u_A$ and $u_B$ into two vertices $u_A, u_A'$ and $u_B,u_B'$ each and have each only connect to one side of $A,B$ respectively, the graph construction forms a bipartite graph. To make the diameter distinction still work, we have to extend the paths in $T$ such that the distance from $t_i$ to $t_i'$ is at least 4 (this makes the diameter always be formed by a $s_i, t_j$-path and not some other path).
\end{proof}

\subsection{Permutation lower bounds}\label{sec:DiameterPerm}

In this section, we extend the list of our lower bounds by showing some reductions from $\textsc{Perm}_n$, which prove lower bounds for 1-pass algorithms, showing that they must use $\Omega(n\log n)$ bits. In particular, we show that there are constructions similar to the `Windmill' and `Diamond' constructions from the previous section that work for the \textsc{Permutation} problem.

\begin{figure}[!htp]
    \centering
    \includegraphics[width=.6\textwidth]{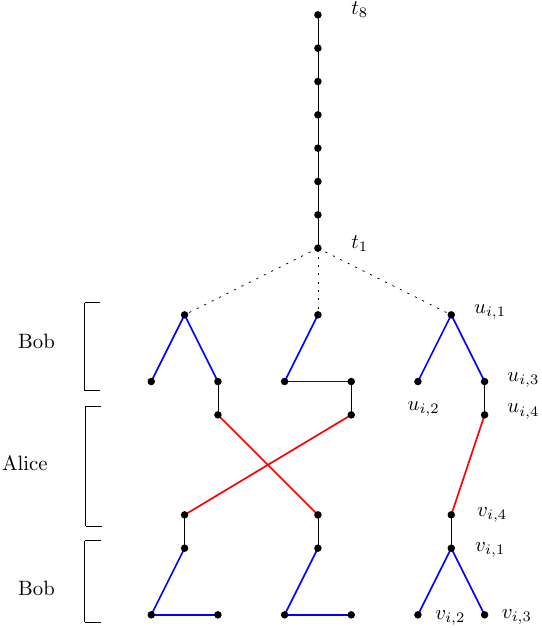}
    \caption{Construction we call `Windmill-Perm'. First we have a tail $t_1 \ldots t_8$. On the top Bob places the 1-construction only for the $\psi$-th index (blue edges, top). On the bottom Bob places the 1-construction when the $\gamma$-th bit of $i$ is 1 (blue edges, bottom). Alice attaches $u_{i,4}$ to $v_{\pi(i),4}$ (red edges, middle). The dotted segments are not edges, this is for ease of representation only, i.e. $u_{i,1} = t_1$ for every $i$.}
    \label{fig:perm_windmill}
\end{figure}

\begin{theorem}\label{thm:WindmillPerm}
    Any streaming algorithm for \textsc{Diameter} that works on the `Windmill-Perm' construction in the AL model using 1 pass over the stream requires $\Omega(n \log n)$ bits of memory.
\end{theorem}
\begin{proof}
    Let $\pi, j$ be the input to $\textsc{Perm}_n$, and let $\gamma$ and $\psi$ be the values associated with $j$ (see the definition of $\textsc{Perm}_n$). We create a graph construction called `Windmill-Perm' on $7n + 8$ vertices, see Figure~\ref{fig:perm_windmill}. Start with a path $t = t_1, \ldots, t_8$ on 8 vertices. For each index $i$, let $u_{i,1}, u_{i,2}, u_{i,3}$ and $v_{i,1}, v_{i,2}, v_{i,3}$ be two triplets. If $i$ is the $\psi$-th index, Bob inserts the edges $(u_{i,1},u_{i,2})$ and $(u_{i,2},u_{i,3})$, otherwise, Bob inserts the edges $(u_{i,1},u_{i,2})$ and $(u_{i,1},u_{i,3})$. For the other triplet, if the $\gamma$-th bit of $i$ is a 1, Bob inserts the edges $(v_{i,1},v_{i,2})$ and $(v_{i,2},v_{i,3})$, and otherwise Bob inserts the edges $(v_{i,1},v_{i,2})$ and $(v_{i,1},v_{i,3})$. Now add two vertices $u_{i,4}$ and $v_{i,4}$ and the edges $(u_{i,3}, u_{i,4})$ and $(v_{i,4}, v_{i,1})$. Now Alice inserts edges depending on her permutation $\pi$. For each $i$, Alice inserts the edge $(u_{i,4}, v_{\pi(i),4})$, i.e. the first triplet of index $i$ gets connected to the second triplet of index $\pi(i)$. We complete the construction by `glueing' the vertices $u_{i,1}$ onto $t_1$, these are the same vertex. Notice that this is a connected graph because a permutation is a bijective function. Given a streaming algorithm that works on such a graph, Alice and Bob can construct an AL stream corresponding to this graph by having Alice first reveal the vertices $u_{i,4},v_{i,4}$ for each $i$ (and their edges, which Alice knows), and then passing the memory state of the algorithm to Bob who reveals the rest of the vertices and their edges. This completes one pass of the stream.
    
    We now claim that the diameter of this graph is at least 14 when the answer to $\textsc{Perm}_n$ is YES, and otherwise the diameter is at most 13.
    
    Assume the answer to $\textsc{Perm}_n$ is YES, that is, the $\gamma$-th bit of the image of the $\psi$-th index under $\pi$ is 1. The triplet $u_{\psi,1}, u_{\psi,2}, u_{\psi,3}$ must have the edges $(u_{\psi,1},u_{\psi,2})$ and $(u_{\psi,2},u_{\psi,3})$ by construction. Then the edge $(u_{\psi,4}, u_{\pi(\psi),4})$ is present because of Alice, and this leads to a triplet with the edges $(v_{\pi(\psi),1},v_{\pi(\psi),2})$ and $(v_{\pi(\psi),2},v_{\pi(\psi),3})$ by the assumption that this is a YES instance. Therefore, the distance from $v_{\pi(\psi),3}$ to $t_8$ is 14, and so the diameter is at least 14.
    
    Now assume that the answer to $\textsc{Perm}_n$ is NO, that is, the $\gamma$-th bit of the image of the $\psi$-th index under $\pi$ is not 1. As the $\psi$-th index is the only $u$-triplet that forms a path and not a tree, and by the assumption the $v$-triplet it is connected to does not have the shape of a path, the distance from all vertices (except $t_8$) to $t_1$ is at most 6. Hence, the diameter is at most 13 because $t_8$ lies at distance 7 from $t_1$.
    
    We conclude that any 1-pass algorithm in the AL model that can solve the \textsc{Diameter} problem on the `Windmill-Perm' construction, must use $\Omega(n \log n)$ bits of memory by Proposition~\ref{prop:PermReduction}.
\end{proof}

\begin{corollary}\label{cor:WindmillPermResults}
    Any streaming algorithm for \textsc{Diameter} in the AL model that uses 1 pass over the stream must use $\Omega(n \log n)$ bits of memory, even on graphs for which the algorithm is given
    \begin{enumerate}[noitemsep]
        \item that the input is a bounded depth tree,
        \item that the Maximum Degree is a constant of at least 3.
    \end{enumerate}
\end{corollary}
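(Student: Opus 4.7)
The plan is to mirror the proof of Corollary~\ref{thm:WindmillResults}, using the `Windmill-Perm' construction from Theorem~\ref{thm:WindmillPerm} in place of `Windmill'. That is, I would observe structural properties of `Windmill-Perm' that (possibly after a small modification) place it in each of the two listed graph classes, and then invoke Theorem~\ref{thm:WindmillPerm}.

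For item (1), I first want to verify that the `Windmill-Perm' construction is a tree of bounded depth. The graph consists of the path $t_1,\ldots,t_8$, together with, for each index $i$, a branch from $t_1 = u_{i,1}$ through $u_{i,2},u_{i,3},u_{i,4}$ (with a small branch depending on Bob's bit), then the edge $(u_{i,4}, v_{\pi(i),4})$ inserted by Alice, then a path through $v_{\pi(i),4}, v_{\pi(i),1}, v_{\pi(i),2}, v_{\pi(i),3}$ (again possibly branching). Because $\pi$ is a bijection, each $v_{i,4}$ is reached from exactly one $u_{i',4}$, so there are no cycles and the graph is a tree. Its depth is bounded by a small constant (roughly $14$), so it qualifies as a bounded-depth tree.

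For item (2), the only vertex of large degree is $t_1$, which has degree $n+1$. Following the same trick as in Corollary~\ref{thm:WindmillResults}, I would replace $t_1$ by a complete binary tree with $n$ leaves, attach the $n$ $u$-branches one per leaf, and extend the path $t$ by about $\log n$ vertices at the other end to compensate. The resulting graph has maximum degree $3$ and is still a tree, and distances from the two relevant endpoints (the end of the extended tail, and any deep $v$-leaf) both increase by the same $\Theta(\log n)$ additive amount. Hence the YES/NO gap in diameter is preserved (becoming roughly $14+\Theta(\log n)$ versus $13+\Theta(\log n)$), and the diametral pair remains a tail-to-$v$-leaf pair, giving the same reduction from $\textsc{Perm}_n$.

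The main obstacle is ensuring that after extending the tail, the diameter is still realized by the tail-to-$v$-triplet pair rather than by a pair of $v$-leaves on different branches or by some other pair created by the binary-tree gadget at $t_1$. Since all $v$-leaves lie at distance $\O(\log n)$ from the gadget center and the tail is explicitly chosen longer than twice the gadget depth, the tail endpoint dominates, exactly as in the analogous argument for Corollary~\ref{thm:WindmillResults}. The claim then follows from Theorem~\ref{thm:WindmillPerm} via Proposition~\ref{prop:PermReduction}.
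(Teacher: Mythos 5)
Your proposal is correct and follows essentially the same approach as the paper: item (1) is by direct inspection that `Windmill-Perm' is a constant-depth tree, and item (2) is by the same modification the paper uses (replacing the high-degree vertex $t_1$ with a binary tree and lengthening the tail by $\Theta(\log n)$ so the diameter gap, now between $13 + 2\log n$ and $14 + 2\log n$, is preserved). Your additional check that the diametral pair remains a tail-to-$v$-leaf pair is a useful explicit verification that the paper leaves implicit, but it is the same underlying argument.
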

\begin{proof}
    The corollary follows from Theorem~\ref{thm:WindmillPerm} together with observing that `Windmill-Perm' is
    \begin{enumerate}[noitemsep]
        \item a tree of constant depth,
        \item a lower bound construction that still works if we extend $t_1$ to a binary tree and extend the tail $t$ to consist of $8 + \log n$ vertices (this makes the diameter distinction to be between $13 + 2\log n$ and $14 + 2\log n$).
    \end{enumerate}
\end{proof}

Next we show a similar adaptation of the `Diamond' construction.

\begin{figure}[!htp]
    \centering
    \includegraphics[width=.7\textwidth]{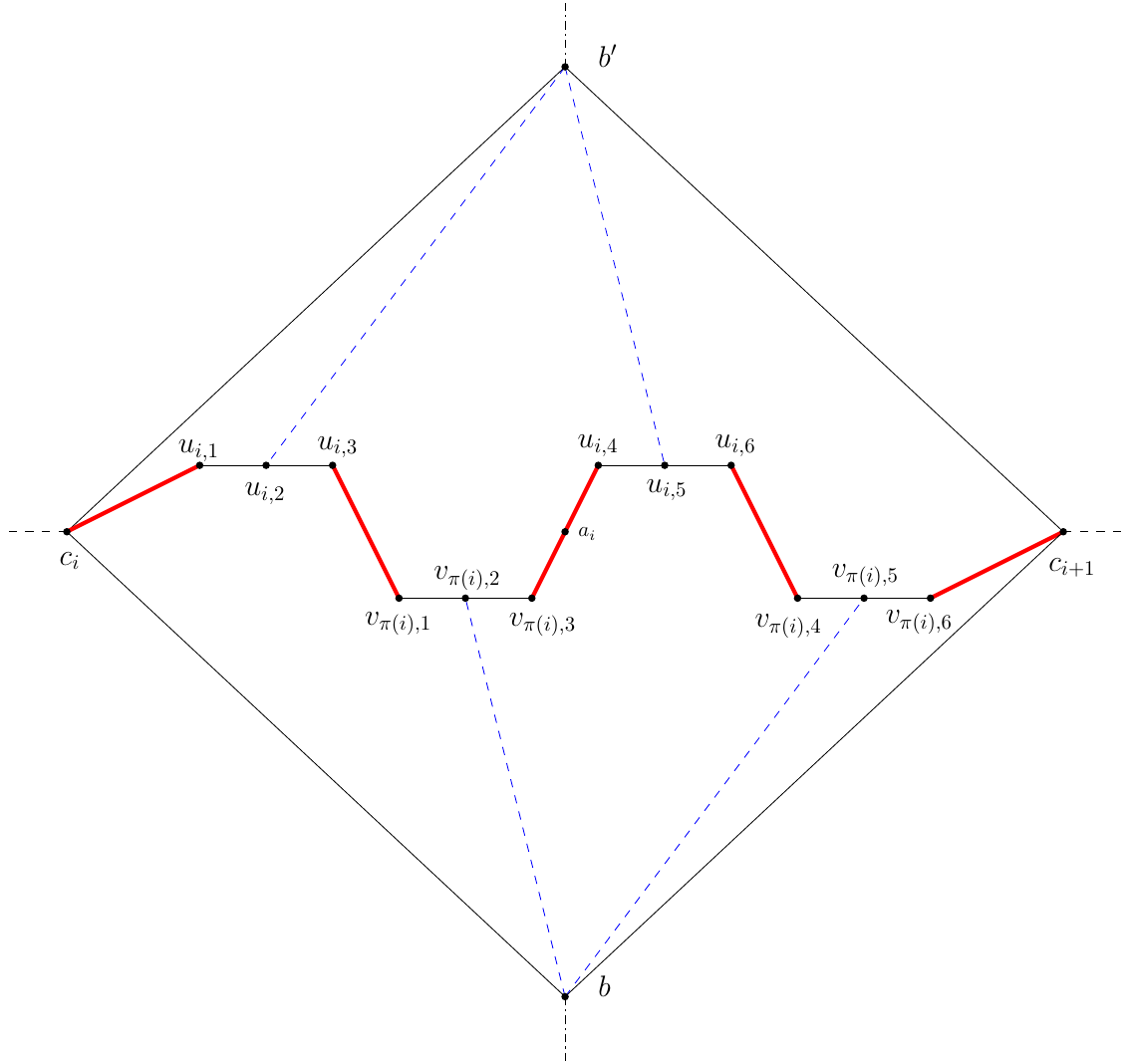}
    \caption{An adaptation for permutation of the `Diamond' construction, called `Diamond-Perm'. Blue dashed edges are present in the 0-case for Bob. Alice attaches top and bottom halves to each other with the bold red edges, depending on $\pi$. The vertices $b$ and $b'$ are connected with an edge, as indicated by the dashed-dotted lines.}
    \label{fig:perm_diamond}
\end{figure}

\begin{theorem}\label{thm:DiamondPerm}
    Any streaming algorithm for \textsc{Diameter} that works on the `Diamond-Perm' construction in the AL model using 1 pass over the stream requires $\Omega(n \log n)$ bits of memory.
\end{theorem}
\begin{proof}
    Let $\pi, j$ be the input to $\textsc{Perm}_n$, and let $\gamma$ and $\psi$ be the values associated with $j$ (see the definition of $\textsc{Perm}_n$). We create a graph construction called `Diamond-Perm' on $14n + 3$ vertices, see Figure~\ref{fig:perm_diamond}. The construction starts of with two vertices $b,b'$ connected with an edge. We then add $n+1$ vertices $c_0,\ldots,c_n$ each of which we connect to both $b$ and $b'$. For each index $i$ we do the following. We create four (disjoint) paths $(u_{i,1}, u_{i,2}, u_{i,3})$, $(u_{i,4}, u_{i,5}, u_{i,6})$, $(v_{i,1}, v_{i,2}, v_{i,3})$, and $(v_{i,4}, v_{i,5}, v_{i,6})$. If $i$ is not the $\psi$-th index, Bob inserts the edges $(b',u_{i,2})$ and $(b',u_{i,5})$. If the $\gamma$-th bit of $i$ is a 0, Bob inserts the edges $(b,v_{i,2})$ and $(b,v_{i,5})$. The rest of the edges depend on the permutation of Alice. For each $i$, Alice creates the vertex $a_i$ and inserts the edges $(c_{i-1}, u_{i,1})$, $(u_{i,3}, v_{\pi(i),1})$, $(v_{\pi(i),3}, a_i)$, $(a_i, u_{i,4})$, $(u_{i,6}, v_{\pi(i),4})$, and $(v_{\pi(i),6}, c_{i+1})$. Notice that, ignoring $b,b'$, this graph forms a path, as a permutation is a bijective function.  This completes the construction. Given a streaming algorithm that works on this construction, Alice and Bob can produce an AL stream by having Alice reveal the vertices $c_0, \ldots, c_n$, and $a_i, u_{i,1}, u_{i,3}, v_{i,1}, v_{i,3}, u_{i,4}, u_{i,6}, v_{i,4}, v_{i,6}$ for every $1 \leq i \leq n$. Then she passes the memory state of the algorithm to Bob who reveals the rest of the vertices including $b,b'$. This completes one pass of the stream.
    
    We now claim that the diameter of this graph is at least 10 when the answer to $\textsc{Perm}_n$ is YES, and otherwise the diameter is at most 9.
    
    Assume the answer to $\textsc{Perm}_n$ is YES, that is, the $\gamma$-th bit of the image of the $\psi$-th index under $\pi$ is 1. Consider $a_\psi$. The edges $(b',u_{\psi,2})$ and $(b',u_{\psi,5})$ are not present by construction. Also, the edges $(b,v_{\pi(\psi),2})$ and $(b,v_{\pi(\psi),5})$ are not present by construction. Therefore, the distance from $a_\psi$ to $b$ and $b'$ is 8. Then, the diameter is at least 10 because there are always vertices at distance at least 2 from both $b$ and $b'$.
    
    Now assume that the answer to $\textsc{Perm}_n$ is NO, that is, the $\gamma$-th bit of the image of the $\psi$-th index under $\pi$ is not 1. Then, for every $1\leq i\leq n$ between $c_{i-1}$ and $c_i$ either or both of the edge pairs $(b',u_{i,2}),(b',u_{i,5})$ and $(b,v_{\pi(i),2}), (b,v_{\pi(i),5})$ must be present, because the edges $(b',u_{i,2}),(b',u_{i,5})$ are only absent for the $\psi$-th index, and the $\gamma$-th bit under the image of $\pi$ is not a 1. Hence, the distance from any vertex to one of $b$ or $b'$ is at most 4. As $b$ and $b'$ are connected with an edge, the diameter is at most 9.
    
    We conclude that any 1-pass algorithm in the AL model that can solve the \textsc{Diameter} problem on the `Diamond-Perm' construction, must use $\Omega(n \log n)$ bits of memory by Proposition~\ref{prop:PermReduction}.
\end{proof}

\begin{corollary}\label{thm:DiamondPermResults}
    Any streaming algorithm for \textsc{Diameter} in the AL model that uses 1 pass over the stream must use $\Omega(n \log n)$ bits of memory, even on graphs for which the algorithm is given a Deletion Set to a path of size at least 2.
\end{corollary}
\begin{proof}
    The corollary follows from Theorem~\ref{thm:DiamondPerm}, together with observing that `Diamond-Perm' is a path when we remove $\{b, b'\}$.
\end{proof}


\section{Connectivity}\label{sec:Connectivity}

In this section, we show results for \textsc{Connectivity}. \textsc{Connectivity} is an easier problem than \textsc{Diameter}, that is, solving \textsc{Diameter} solves \textsc{Connectivity} as well, but not the other way around. Hence, lower bounds in this section also imply lower bounds for \textsc{Diameter} (in non-connected graphs).
In general graphs, a single pass, $\O(n \log n)$ bits of memory algorithm exists by maintaining connected components in a Disjoint Set data structure~\cite{McGregorSurveyStream}, which is optimal in general graphs~\cite{SunWoodruffPermBounds}.
The interesting part about \textsc{Connectivity} is that some graph classes admit fairly trivial algorithms by a counting argument. For example, if the input is a forest, we can decide on \textsc{Connectivity} by counting the number of edges, which is a 1-pass, $\O(\log n)$ bits of memory, algorithm. An overview of the results in this section is given in Table~\ref{table:Connectivity}.

\begin{table}[ht]
\begin{tabular}{|lr|l|l|l|}
\hline
 Parameter ($k$) / Graph class & Size & Bound & Thm\\ \hline
 General Graphs & & \cellcolor{LightGreen}$(\textsc{EA}, n \log n, 1)$-str. by Disjoint Set data structure~\cite{McGregorSurveyStream} & - \\ 
                & & \cellcolor{LightPink}$(\textsc{EA}, n \log n, 1)$-hard by Sun and Woodruff~\cite{SunWoodruffPermBounds} & - \\\hline
 \textsc{Vertex Cover Number} & $\geq 0$ & \cellcolor{LightGreen}$(\textsc{AL},k\log n,1)$-str.: Disjoint Set on Vertex Cover & \ref{obs:VCcon} \\
 & $\geq 2$ & \cellcolor{LightPink}$(\textsc{VA}, n/p, p)$-hard by Henzinger et al.~\cite{HenzingerStreams} & - \\ \hline
 \textsc{Distance to $\ell$ cliques} & $\geq 0$ & \cellcolor{LightGreen}$(\textsc{AL},(k+\ell)\log n,1)$-str.: Disjoint Set data structure & \ref{obs:DistkCliquecon} \\ \hline
 \textsc{FVS} & $= 0$ & \cellcolor{LightGreen}$(\textsc{EA},\log n,1)$-str. by counting. & - \\
 & $\geq 1$ & \cellcolor{LightPink}$(\textsc{AL}, n/p, p)$-hard & \ref{cor:SimpleAL-ConnResults}\\ \hline
 \textsc{FES} & $\geq 0$ & \cellcolor{LightGreen}$(\textsc{EA},\log n,1)$-str. by counting. & - \\ \hline
 \textsc{Distance to matching} & $\geq 2$ & \cellcolor{LightPink}$(\textsc{AL}, n/p, p)$-hard & \ref{cor:SimpleAL-ConnResults}\\ \hline
 \textsc{Distance to path} & $\geq 0$ & \cellcolor{LightGreen}$(\textsc{EA},k \log n,1)$-str. by checking connection to path & - \\ \hline
 \textsc{Distance to depth $\ell$ tree} & $\geq 0$ & \cellcolor{LightGreen}$(\textsc{EA},k \log n,1)$-str. by checking connection to tree & - \\ \hline
 \textsc{Domination Number} & $\geq 2$ & \cellcolor{LightPink}$(\textsc{AL}, n/p, p)$-hard & \ref{cor:SimpleAL-ConnResults}\\ \hline
 \textsc{Distance to Chordal} & $\geq 1$ & \cellcolor{LightPink}$(\textsc{AL}, n/p, p)$-hard & \ref{cor:SimpleAL-ConnResults}\\ \hline
 \textsc{Maximum Degree} & $\geq 2$ & \cellcolor{LightPink}$(\textsc{AL}, n/p, p)$-hard, $(\textsc{AL}, n \log n, 1)$-hard & \ref{thm:Cycles-Conn}, \ref{thm:CyclesPermConn}\\\hline
 Bipartite Graphs & & \cellcolor{LightPink}$(\textsc{AL}, n/p, p)$-hard, $(\textsc{AL}, n \log n, 1)$-hard & \ref{thm:Cycles-Conn}, \ref{thm:CyclesPermConn}\\ \hline
 Interval Graphs & & \cellcolor{LightPink}$(\textsc{VA}, n/p, p)$-hard & \ref{thm:Interval-Conn}\\\hline
 Split graphs & & \cellcolor{LightGreen}$(\textsc{EA}, n/p, p)$-str. by finding degree 0 vertex & - \\
 & & \cellcolor{LightPink}$(\textsc{VA}, n/p, p)$-hard & \ref{thm:Split-Conn}\\ \hline
\end{tabular}
\caption{Overview of the results for \textsc{Connectivity}. All hardness results listed here are through reductions from \textsc{Disjointness}. $(\M, m, p)$-hard means that any algorithm using $p$ passes in model $\M$ (or weaker) requires $\Omega(m)$ bits of memory. $(\M, m, p)$-str. means that there is an algorithm that uses $p$ passes in model $\M$ (or stronger) using $\O(m)$ bits of memory. \textsc{FVS} stands for Feedback Vertex Set number, \textsc{FEN} for Feedback Edge Set number. We state most upper bounds only as observations.}
\label{table:Connectivity}
\end{table}

The following upper bounds follow from applications of the Disjoint Set data structure.

\begin{restatable}{observation}{VCconObs}\label{obs:VCcon}
    Given a graph $G$ as an \textsc{AL} stream with vertex cover number $k$, we can solve \textsc{Connectivity $[k]$} in $1$ pass and $\O(k \log n)$ bits of memory.
\end{restatable}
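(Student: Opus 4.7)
The plan is to maintain a Union-Find (disjoint set) data structure on the $k$ vertices of the given vertex cover $X$, which occupies $\O(k \log n)$ bits (storing a vertex label in $\O(\log n)$ bits for each of the $k$ elements, together with the usual parent pointers). The crucial structural observation is that $V(G) \setminus X$ is an independent set, so every edge of $G$ has at least one endpoint in $X$; moreover, in the AL model the entire neighborhood of any vertex arrives together with that vertex, so each edge can be fully processed the first time either endpoint is seen.

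In a single pass over the stream, I would process each arriving vertex $v$ with its adjacency list as follows. If $v \in X$, union $v$ with each of its neighbors that also lies in $X$ (neighbors outside $X$ will be handled when they arrive). If $v \notin X$, then every neighbor of $v$ lies in $X$ by the independent-set property, so scan the adjacency list and union all these neighbors together into a single component, remembering only the representative of the first one in $\O(\log n)$ bits of scratch space. In parallel, maintain two $\O(\log n)$-bit counters: the number of vertices in $V \setminus X$ seen with empty neighborhood (isolated external vertices), and the total number of vertices, to correctly handle the degenerate cases $|V| \le 1$ and $X = \emptyset$.

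At the end of the pass, declare $G$ connected iff the Union-Find on $X$ consists of a single component, no isolated vertex outside $X$ was detected, and (if $X = \emptyset$) $|V| \le 1$. For correctness, note that contracting $V \setminus X$ onto $X$ via the unions performed by the non-$X$ vertices yields a graph on $X$ with exactly the connected components of $G$ restricted to $X$, and every non-isolated vertex outside $X$ lies in the component of any of its neighbors; conversely, any isolated external vertex forces disconnection unless the graph has at most one vertex.

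There is no real obstacle: the algorithm is a direct application of Union-Find, made possible because the AL model lets us process each edge at the arrival of either endpoint, and because the independent-set structure of $V \setminus X$ means the entire edge set can be encoded into merges on a universe of size $k$. The only care needed is in the bookkeeping for isolated external vertices and for the edge cases when $X$ is empty.
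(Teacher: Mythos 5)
Your proposal is correct and takes essentially the same approach as the paper: a Union-Find structure on the vertex cover $X$, exploiting the AL model and the fact that $V\setminus X$ is independent to reduce all connectivity information to merges on a universe of size $k$, with a separate check for isolated vertices. One small point: the observation is stated for vertex cover \emph{number} $k$, and the paper's proof additionally covers the case where only the size (not the set $X$) is known, by greedily maintaining a maximal matching to obtain a $2k$-size approximate vertex cover on the fly and running the same Union-Find procedure on it; your write-up assumes $X$ itself is given, which is the main case but not quite all of it.
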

\begin{proof}
    When the vertex cover is known, we can keep track of a Disjoint Set data structure on the $k$ vertices of the vertex cover. Seeing any vertex that connects two or more vertices of the vertex cover in the stream translates directly to taking the union of the corresponding sets in the data structure. If at the end of the stream the data structure contains only one set and we have not seen a degree-0 vertex, the graph is connected.
    
    When the vertex cover is not given, and only its size, we can greedily maintain an approximate vertex cover of size at most $2k$ by maintaining a maximal matching, while executing the above procedure on this set. 
\end{proof}

\begin{restatable}{observation}{CliqueConObs}\label{obs:DistkCliquecon}
    Given a graph $G$ as an \textsc{AL} stream with a deletion set $X$ of size $k$ to $\ell$ cliques, we can solve \textsc{Connectivity $[k, \ell]$} in $1$ pass and $\O((k + \ell) \log n)$ bits of memory.
\end{restatable}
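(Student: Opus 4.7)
The plan is to adapt the Disjoint Set approach of Observation~\ref{obs:VCcon} by tracking one representative per clique in addition to the vertices of $X$. Since $G - X$ decomposes into $\ell$ cliques and any two vertices of the same clique are trivially connected, it suffices to maintain a Disjoint Set data structure $D$ whose elements are the $k$ vertices of $X$ together with at most $\ell$ clique representatives, introduced on the fly, for a total of at most $k + \ell$ elements each stored with $\O(\log n)$ bits.

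In the single AL pass, I would process each vertex $v$ together with its complete adjacency list as follows. If $v \in X$, add $v$ to $D$ (if absent) and union $v$ with every $X$-neighbor already present in $D$. If $v \notin X$, compute $r_v := \min(\{v\} \cup (N(v) \setminus X))$ as the canonical representative of the clique containing $v$; this minimum over vertex IDs is obtained locally while scanning $v$'s adjacency list using only $\O(\log n)$ auxiliary bits. Then add $r_v$ to $D$ as a singleton if not already present, and for every $X$-neighbor $u \in N(v) \cap X$ union $r_v$ with $u$ in $D$.

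For correctness, the edges of $G$ fall into three types: edges inside $X$ are handled when either endpoint is processed; edges inside a clique contribute nothing to connectivity beyond the clique representative and are collapsed by the use of $r_v$; edges between $X$ and a clique $C$ are captured because, whenever any $v \in C$ is processed, $r_C$ is unioned with each $X$-neighbor of $v$. Hence two cliques, or a clique and an $X$-vertex, end up in the same component of $G$ if and only if their representatives end up in the same set of $D$, and $G$ is connected iff $D$ contains exactly one set after the pass (every vertex is represented, directly for $v \in X$ or via $r_v$ otherwise). The only subtle point, rather than a real obstacle, is well-definedness of representatives across stream order: using the minimum-ID convention makes $r_v$ depend only on the clique of $v$, so independent encounters with the same clique consistently point at the same element of $D$.

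The memory cost is $\O(k \log n)$ bits to store and query membership in $X$ plus $\O((k + \ell) \log n)$ bits for the DSU and its element labels, for a total of $\O((k + \ell) \log n)$ bits, in a single AL pass.
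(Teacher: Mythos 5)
Your proposal is correct and follows essentially the same approach as the paper's: a Disjoint Set structure over the $k$ vertices of $X$ plus one representative per clique, where the representative is the lowest-numbered vertex of the clique (computable locally in the AL model from $\{v\} \cup (N(v) \setminus X)$), with unions taken for $X$--$X$ edges and for $X$--clique edges as each vertex is scanned. Your explicit remark on the well-definedness of the minimum-ID representative is a small clarification of a point the paper's proof leaves implicit, but it is not a different argument.
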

\begin{proof}
    We use a Disjoint Set data structure on all vertices in $X$ and a representative vertex for each clique, say the lowest numbered vertex of that clique. The space used by the data structure is $\O((k+\ell)\log n)$ bits. For a vertex in $X$ we only process the edges to other vertices in $X$ in the data structure. For a vertex in a clique ($\notin X$) we register the edges to vertices in $X$ and its lowest number neighbour $\notin X$. This takes at most $(k+1)\log n$ bits, and is enough to take unions in the data structure corresponding to the connections seen by this vertex.
    
    At the end of the stream, if the data structure contains only one set, the graph is connected.
\end{proof}

Next is a simple lower bound for the AL model.

\begin{figure}
  \centering
  \includegraphics[width=.6\linewidth]{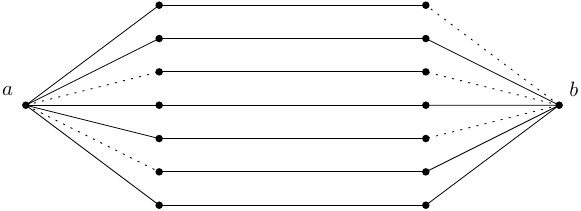}
  \captionof{figure}{AL lower bound for connectivity, called `Simple AL-Conn'. On contrary to the  similar diameter lower bound, edges are present when the corresponding entry is a 0.}
  \label{fig:SimpleAL-Conn}
\end{figure}

\begin{restatable}{theorem}{SimpleALConn}\label{thm:SimpleAL-Conn}
    Any streaming algorithm for \textsc{Connectivity} that works on the `Simple AL-Conn' construction in the AL model using $p$ passes over the stream requires $\Omega(n/p)$ bits of memory.
\end{restatable}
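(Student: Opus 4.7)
The plan is to mirror the reduction used for \textsc{Diameter} in Theorem~\ref{thm:SimpleAL}, adapting the graph so that the \textsc{Disjointness} answer is witnessed by (dis)connectivity rather than by a diameter jump. Let $x,y \in \{0,1\}^n$ be the inputs of Alice and Bob to $\textsc{Disj}_n$. I would start from a perfect matching $M$ with edges $m_i^L m_i^R$ for $i=1,\dots,n$, together with two extra vertices $a,b$ joined by the fixed edge $(a,b)$ (and possibly a fixed vertex $c$ adjacent to both, if the figure so demands, just to shepherd the `backbone' component). Alice inserts the edge $(a, m_i^L)$ precisely when $x_i = 0$, and Bob inserts $(b, m_i^R)$ precisely when $y_i = 0$. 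Because every variable edge touches only an $a$-side or only a $b$-side vertex, this can be produced as a valid AL stream in a single pass: Alice first reveals $a$ together with all left-endpoints $m_i^L$ (whose adjacency lists she knows completely), hands over the memory state, and Bob then reveals $b$ together with all right-endpoints $m_i^R$.

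Next I would prove the key equivalence: the graph is connected iff $\textsc{Disj}_n(x,y) = \text{YES}$, i.e., there is no index with $x_i=y_i=1$. In the YES direction, every matching edge $m_i^L m_i^R$ has at least one endpoint attached to $a$ or $b$ (since at least one of $x_i,y_i$ is $0$), so each such edge hangs off the backbone $\{a,b\}$ and the graph is connected. In the NO direction, pick any index $i^\ast$ with $x_{i^\ast} = y_{i^\ast} = 1$: then neither $(a, m_{i^\ast}^L)$ nor $(b, m_{i^\ast}^R)$ is inserted, so the two-vertex component $\{m_{i^\ast}^L, m_{i^\ast}^R\}$ is isolated from the rest, and the graph is disconnected. (The standard trick of excluding the all-zeroes inputs to $\textsc{Disj}_n$ keeps the graph well-defined, as in Theorem~\ref{thm:SimpleAL}.)

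With the equivalence in hand, a $p$-pass AL streaming algorithm for \textsc{Connectivity} lets Alice and Bob simulate that algorithm on the constructed stream, exchanging the memory state only $O(p)$ times in total (once per pass, at the handover point). By Proposition~\ref{prop:DisjReduction}, this forces $\Omega(n/p)$ bits of memory. The only place I expect to need a little care is arranging the fixed edges of the `backbone' so that (i) Alice and Bob each know which fixed edges touch their own vertices without learning anything about the other's input, and (ii) the resulting AL stream remains valid, i.e., no single vertex is incident to variable edges of both players. Both requirements are met by the construction above, since the variable edges incident to $a$ are Alice-only and those incident to $b$ are Bob-only.
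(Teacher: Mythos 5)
Your proof is correct and follows essentially the same route as the paper: a matching $M$ whose edges are attached to $a$ or $b$ depending on whether the corresponding bit is $0$, a fixed ``backbone'' ensuring $a$ and $b$ are always in the same component, and the observation that an isolated matching edge appears exactly when some index has $x_i=y_i=1$. The only cosmetic difference is that the paper connects $a$ and $b$ through an extra $(n{+}1)$-th matching edge rather than a direct edge $(a,b)$; both choices keep the AL stream valid (no vertex touches variable edges of both players) and make the backbone input-independent, so the reduction and the appeal to Proposition~\ref{prop:DisjReduction} go through identically. Your aside about excluding all-zeroes inputs is harmless but unnecessary here, since the backbone guarantees $a$ and $b$ are connected regardless of the inputs.
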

\begin{proof}
    Let $x,y$ be the input to $\textsc{Disj}_n$ of Alice and Bob, respectively. We construct a graph as illustrated in Figure~\ref{fig:SimpleAL-Conn}. Let $M$ be a matching on $n+1$ edges, and associate each edge with an index $1\leq i \leq n$. Now we add two vertices $a,b$ of Alice and Bob respectively, and connect $a$ and $b$ to the $(n+1)$-th edge. The edge from $a$ to the $i$-th edge in $M$ is present if and only if $x_i = 0$, for all $1 \leq i \leq n$. The same happens for $b$, where the edge from $b$ to the $i$-th edge in $M$ is present if and only if $y_i = 0$, for all $1 \leq i \leq n$. This completes the construction, it has $2n+4$ vertices. Given a streaming algorithm that works on a family including this construction, Alice and Bob construct the AL stream as follows. First, Alice reveals $a$ and the vertices on the left of $M$. Then she passes the memory state of the algorithm to Bob who reveals $b$ and the vertices on the right of $M$, which completes one pass of the stream.
    
    We now claim that the graph is connected if and only if the answer to $\textsc{Disj}_n$ is YES.
    
    Let us assume that the answer to $\textsc{Disj}_n$ is NO, that is, there is an index $i$ such that $x_i = y_i = 1$. Then clearly the $i$-th edge is not connected to the rest of the graph (which includes $a$ and $b$).
    
    Now assume the answer to $\textsc{Disj}_n$ is YES, that is, there is no index $i$ such that $x_i = y_i = 1$. Notice that there is always a path from $a$ to $b$ via the $(n+1)$-th edge of $M$. Furthermore, by the assumption, for each index $1\leq i \leq n$ either Alice or Bob (or both) has a 0, which means the $i$-th edge is connected to either $a$ or $b$. Hence, the graph is connected.
    
    We conclude that any algorithm that can solve the \textsc{Connectivity} problem on the graph construction `Simple AL-Conn' in the AL model in $p$ passes, must use $\Omega(n/p)$ bits of memory by Proposition~\ref{prop:DisjReduction}.
\end{proof}

The following follows from Theorem~\ref{thm:SimpleAL-Conn} by observing properties of the `Simple AL-Conn' construction.

\begin{restatable}{corollary}{SimpleALConnResults}\label{cor:SimpleAL-ConnResults}
    Any streaming algorithm for \textsc{Connectivity} in the AL model that uses $p$ passes over the stream must use $\Omega(n/p)$ bits of memory, even on graphs for which the algorithm is given a
    \begin{itemize}[noitemsep]
        \item Feedback Vertex Set of size at least 1,
        \item Deletion Set to Matching of size at least 2,
        \item Dominating Set of size at least 2.
    \end{itemize}
\end{restatable}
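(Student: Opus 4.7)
The plan is to mirror the proof of Corollary~\ref{cor:SimpleALResults}: invoke Theorem~\ref{thm:SimpleAL-Conn} and verify, by direct inspection of the `Simple AL-Conn' construction in Figure~\ref{fig:SimpleAL-Conn}, each of the three structural properties listed in the corollary. Since the hardness already holds on this graph family, it suffices to check each parameter bound as a routine graph-theoretic observation, much as was done for the analogous `Simple AL' diameter result.

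For the feedback vertex set claim, I would delete only the vertex $a$. What remains is the matching $M$ together with the variable edges from $b$ to right endpoints of matching edges. Because $b$ touches at most one endpoint of each matching edge and $a$ is absent, no cycle can close, so the residual graph is a forest, witnessing that $\{a\}$ is a feedback vertex set of size~$1$. For the deletion set to a matching, the observation is immediate from the construction: removing $\{a, b\}$ destroys every non-matching edge and leaves exactly $M$, giving deletion distance~$2$ to a matching.

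The dominating-set case is the only one requiring a little care: in the NO-instance of the reduction some matching-edge endpoints may fall outside the closed neighbourhood of $\{a, b\}$. I would handle this either by attaching a constant-size gadget whose only edges are incident on $a$ and $b$ to guarantee that $\{a,b\}$ together with a bounded number of dummy vertices forms a dominating set, or by observing that every matching-edge endpoint not dominated by $\{a,b\}$ already lies in a component that is disconnected in the NO-case, so augmenting $\{a,b\}$ with a bounded number of representatives of those components still yields a dominating set of constant size at least~$2$. The main obstacle here is to make sure that the augmentation used to secure domination does not accidentally introduce new paths that reconnect the graph when $\textsc{Disj}_n$ answers NO; I would resolve this by keeping all added edges incident only on $\{a,b\}$ or on vertices already attached to them via fixed edges, so that the YES/NO gap of the original reduction is preserved and the $\Omega(n/p)$ lower bound transfers.
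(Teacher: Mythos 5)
Your treatment of the feedback vertex set and deletion-to-matching items is correct and matches the paper exactly: the paper's proof simply observes that removing $\{a\}$ from `Simple AL-Conn' leaves a forest and removing $\{a,b\}$ leaves the matching $M$.

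For the dominating-set item, however, you have (correctly) sensed a problem, but you mislocate it and your proposed repairs do not work. The paper's own proof dismisses this item with the bare assertion that `Simple AL-Conn' is ``dominated by the set of vertices $\{a,b\}$.'' Taking the construction as described in the text (Alice's $a$ is adjacent to the left endpoint $l_i$ of the $i$-th matching edge exactly when $x_i=0$, Bob's $b$ to the right endpoint $r_i$ exactly when $y_i=0$), this assertion is false in general: whenever $x_i=1$, the vertex $l_i$ has only $r_i$ as a neighbour and is not in $N[a]\cup N[b]$, and symmetrically for $r_i$ whenever $y_i=1$. This happens in YES-instances of $\textsc{Disj}_n$ just as much as in NO-instances (e.g.\ $x_i=1$, $y_i=0$ is consistent with YES but leaves $l_i$ undominated), so the issue is not confined to the NO side as you suggest, and the number of undominated vertices can be $\Theta(n)$. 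Your first fix fails because a gadget whose edges are incident only on $a,b$ contributes nothing towards dominating the matching endpoints. Your second fix fails because the undominated endpoints need not lie in disconnected components (they do not in the YES case), and in the NO case there can be linearly many disconnected $P_2$-components, so ``a bounded number of representatives'' is not available. Note also that you cannot make $a$ (or $b$) adjacent to \emph{both} endpoints of a matching edge to rescue domination: then $r_i$'s adjacency list would depend on both players' inputs, which is disallowed in the AL model. So the gap you noticed is real, but it is a gap in the paper's third bullet as much as in your proposal, and neither your argument nor the paper's one-line claim establishes that the `Simple AL-Conn' family has dominating sets of constant size.
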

\begin{proof}
    The corollary follows from Theorem~\ref{thm:SimpleAL-Conn}, together with observing that the construction of `Simple AL-Conn' is
    \begin{itemize}[noitemsep]
        \item a forest when removing $\{a\}$,
        \item a matching when removing $\{a, b\}$,
        \item dominated by the set of vertices $\{a,b\}$.
    \end{itemize}
\end{proof}

An interesting lower bound is for a unique case: graphs of maximum degree 2. We mentioned that for a forest we have a simple counting algorithm for \textsc{Connectivity}, so the hardness must be for some graph which consists of one or more cycles. Although Theorem~\ref{thm:SimpleAL-Conn} implies \textsc{Connectivity} is hard for graphs with a Feedback Vertex Set of constant size, we now show that in the specific case of maximum degree 2-graphs, the problem is still hard, see Figure~\ref{fig:Cycles-Conn} for an illustration of the construction. We note that this reduction is similar to the problem tackled by Verbin and Yu~\cite{YuVerbinCycleCounting} and Assadi et al.~\cite{AssadiKSY20}, but our result is slightly stronger in this setting, as it concerns a distinction between 1 or 2 disjoint cycles.

\begin{figure}[th]
    \centering
    \includegraphics[width=\textwidth]{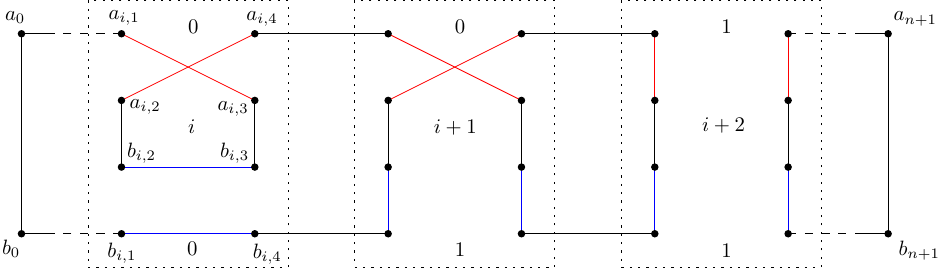}
    \caption{AL lower bound for connectivity, called `Cycles'. The graph consists of one or multiple cycles depending on the output of \textsc{Disj}$_n$. The black edges are always present. The red (blue) edges between $a$-vertices ($b$-vertices) are controlled by Alice (Bob) and are in a crossing (horizontal) or vertical configuration depending on whether the $i$-th entry of Alice (Bob) is $0$ or $1$.}
    \label{fig:Cycles-Conn}
\end{figure}

\begin{restatable}{theorem}{CyclesConn}\label{thm:Cycles-Conn}
    Any streaming algorithm for \textsc{Connectivity} that works on graphs of maximum degree 2 in the AL model using $p$ passes over the stream requires $\Omega(n/p)$ bits of memory.
\end{restatable}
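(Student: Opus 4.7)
The plan is to reduce $\textsc{Disj}_n$ to \textsc{Connectivity} on graphs of maximum degree~$2$ and then invoke Proposition~\ref{prop:DisjReduction}. Since every graph of maximum degree~$2$ is a disjoint union of paths and cycles, the construction must arrange for the number of connected components of the constructed graph to depend on whether there exists an index $i$ with $x_i = y_i = 1$. The obvious counting algorithm that resolves \textsc{Connectivity} on forests (by comparing the edge count to $n-1$) does not apply here precisely because cycles have the same edge count as the spanning tree minus one edge, so there is genuine room for a lower bound in this regime.

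The graph, shown in Figure~\ref{fig:Cycles-Conn}, would consist of $n$ local gadgets, one per index, chained together by a fixed backbone of black edges into a cyclic scaffold. In the $i$-th gadget, Alice places a pair of red edges in one of two patterns (crossing or horizontal) depending on whether $x_i = 0$ or $x_i = 1$, while Bob places a pair of blue edges in one of two patterns (crossing or vertical) depending on $y_i$. Crucially, the red edges would be incident only to Alice's vertices and the blue edges only to Bob's vertices, respecting the AL-model requirement that no single vertex's adjacency list depend on both parties' input. Alice would then stream her adjacency lists first, pass the memory state to Bob a constant number of times, and have Bob stream the remaining adjacency lists, yielding a single pass per communication round.

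The correctness claim to establish is that the graph forms a single cycle exactly when $\textsc{Disj}_n$ answers YES. I would design each gadget so that three of the four input combinations for $(x_i, y_i)$ act as a pass-through that routes the two ``wires'' of the backbone straight through the gadget, while the combination $x_i = y_i = 1$ pinches off a small side cycle and breaks the global cycle locally, producing an extra component. Three properties need to be verified: (i) every vertex has degree exactly~$2$ for every choice of $x$ and $y$; (ii) if no index has $x_i = y_i = 1$, the whole graph is one cycle; (iii) if some index has $x_i = y_i = 1$, the graph has at least two components. Items (i) and (iii) are local to a single gadget and follow from direct inspection, while item (ii) follows by tracing the backbone and applying induction on the chain of pass-through gadgets.

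The principal technical obstacle is the gadget design itself: it must simultaneously respect the AL bipartition of Alice's and Bob's variable edges, keep maximum degree exactly~$2$ under all four input combinations, and crisply distinguish the case $(x_i,y_i)=(1,1)$ from the other three pairs based solely on local connectivity. Once such a gadget is fixed, the stream construction is immediate, every vertex is revealed with its complete adjacency list by its owning party, and the lower bound $\Omega(n/p)$ for $p$-pass algorithms follows from Proposition~\ref{prop:DisjReduction}.
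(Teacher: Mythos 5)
Your high-level plan is exactly the paper's: reduce from $\textsc{Disj}_n$, build $n$ per-index gadgets chained into a cyclic scaffold, arrange for the graph to be a single cycle precisely when there is no index with $x_i=y_i=1$, ensure Alice's variable edges touch only her vertices and Bob's only his, and invoke Proposition~\ref{prop:DisjReduction}. However, you explicitly stop short of the one step that carries all the weight: you write that ``the principal technical obstacle is the gadget design itself'' and then assume it away. That is not a small omission. Designing a constant-size gadget that simultaneously (a) has all of its Alice-controlled edges between Alice-vertices and all of its Bob-controlled edges between Bob-vertices, (b) leaves every vertex with degree exactly~$2$ in each of the four input cases, and (c) severs the scaffold locally on input $(1,1)$ while routing both rails through on the other three inputs, is precisely the content of the theorem; without it you have restated the goal, not proved it.

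For comparison, the paper's gadget uses eight vertices per index, $a_{i,1},\dots,a_{i,4}$ and $b_{i,1},\dots,b_{i,4}$, with fixed cross-edges $(a_{i,2},b_{i,2})$ and $(a_{i,3},b_{i,3})$ plus backbone edges $(a_{i,1},a_{i-1,4})$, $(b_{i,1},b_{i-1,4})$, $(a_{i,4},a_{i+1,1})$, $(b_{i,4},b_{i+1,1})$, and caps $(a_0,b_0)$, $(a_{n+1},b_{n+1})$. Alice inserts $(a_{i,1},a_{i,3}),(a_{i,2},a_{i,4})$ on $x_i=0$ and $(a_{i,1},a_{i,2}),(a_{i,3},a_{i,4})$ on $x_i=1$; Bob inserts $(b_{i,1},b_{i,4}),(b_{i,2},b_{i,3})$ on $y_i=0$ and $(b_{i,1},b_{i,2}),(b_{i,3},b_{i,4})$ on $y_i=1$. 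On input $(1,1)$ the four degree-2 subpaths $a_{i,1}\!-\!a_{i,2}\!-\!b_{i,2}\!-\!b_{i,1}$ and $a_{i,4}\!-\!a_{i,3}\!-\!b_{i,3}\!-\!b_{i,4}$ isolate the left side from the right; on the other three inputs a short case analysis shows the two rails pass through (possibly swapping $a$-rail and $b$-rail, which is harmless). Your property (ii) then needs the argument that, when every gadget is pass-through, tracing from $a_0$ always reaches $\{a_{n+1},b_{n+1}\}$; the paper does a short two-case induction on whether you enter a gadget at $a_{i,1}$ or $b_{i,1}$. ``Follows by tracing the backbone and induction'' is the right shape, but again it cannot be carried out without a concrete gadget in hand.

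In short: the reduction framework, the use of Proposition~\ref{prop:DisjReduction}, the AL-ownership constraint, and the single-cycle-versus-two-cycles distinction are all correct and match the paper, but the proof is incomplete because the gadget, which is the entire technical content, is left unconstructed.
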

\begin{proof}
    Let $x,y$ be the input to $\textsc{Disj}_n$ of Alice and Bob, respectively. We create a construction as shown in Figure~\ref{fig:Cycles-Conn}. We create a graph on $8n + 4$ vertices. Associate 8 vertices $a_{i,1}, \ldots, a_{i,4}$ and $b_{i,1}, \ldots, b_{i,4}$ with each index $i$. Let us call the remaining four vertices $a_0, b_0, a_{n+1}, b_{n+1}$, and insert the edges $(a_0,b_0)$ and $(a_{n+1}, b_{n+1})$. Then for each index $i$, we do the following. Insert the edges $(a_{i,2}, b_{i,2})$, $(a_{i,3},b_{i,3})$, $(a_{i,1}, a_{i-1,4})$, $(b_{i,1}, b_{i-1,4})$, $(a_{i,4}, a_{i+1,1})$, $(b_{i,4}, b_{i+1,1})$, where $a_{i-1,4}, b_{i-1,4}$ are replaced with $a_0,b_0$ when $i=1$, and $a_{i+1,1}, b_{i+1,1}$ are replaced with $a_{n+1}, b_{n+1}$ when $i=n$. These are all the fixed edges. For each $i$, Alice also inserts $(a_{i,1},a_{i,3})$ and $(a_{i,2}, a_{i,4})$ when $x_i = 0$ or inserts $(a_{i,1},a_{i,2})$ and $(a_{i,3}, a_{i,4})$ when $x_i = 1$. Bob inserts $(b_{i,1}, b_{i,4})$ and $(b_{i,2}, b_{i,3})$ when $y_i = 0$, or inserts $(b_{i,1}, b_{i,2})$ and $(b_{i,3}, b_{i,4})$ when $y_i = 1$. This completes the construction. Given an algorithm that works on a family including this construction, Alice and Bob construct an AL stream as follows. First, Alice reveals the vertices $a_0, a_{n+1}$ and $a_{i,k}$ for all $1\leq i\leq n, 1\leq k\leq 4$, then passes the memory state to Bob who reveals the vertices $b_0, b_{n+1}$ and $b_{i,k}$ for all $1\leq i\leq n, 1\leq k\leq 4$. This completes one pass of the stream. Notice that Alice and Bob do not need information about the input of the other to do this, as there are only fixed edges between $a$- and $b$-vertices. Also notice that this graph always consists of (a disjoint union of) one or more cycles regardless of the input to $\textsc{Disj}_n$, as every vertex in the graph has degree 2.
    
    We now claim that the graph is connected if and only if the answer to $\textsc{Disj}_n$ is YES.
    
    Let us assume that the answer to $\textsc{Disj}_n$ is NO, that is, there is an index $i$ such that $x_i = y_i = 1$. It is easy to see that there is no path between $a_{i,1}$ and either $a_{i,4}$ or $b_{i,4}$, and similarly, there is no path between $b_{i,1}$ and either $a_{i,4}$ or $b_{i,4}$. Hence the graph is not connected.
    
    Now assume the answer to $\textsc{Disj}_n$ is YES, that is, there is no index $i$ such that $x_i = y_i = 1$. We will construct a simple path from $a_0$ to either $a_{n+1}$ or $b_{n+1}$. If this succeeds, then the graph must be a single cycle, as we can continue to the other of $a_{n+1}$ or $b_{n+1}$ and walk the other way to $b_0$, never crossing the first path because every vertex has degree 2. These two paths together with the edges $(a_0, b_0)$, $(a_{n+1}, b_{n+1})$ form a single cycle. Starting at $a_0$, we can view a path going `right', crossing each index $i$ step by step. At an $a_{i,1}$, there are only two possible cases: either we walk through $a_{i,2}, a_{i,3}, b_{i,2}, b_{i,3}$ in some order and end in $a_{i,4}$, or we have a path to $b_{i,4}$ (using only vertices of index $i$). In both cases, we can advance to the next $i$. At an $b_{i,1}$, there are also only two cases: either there is an edge to $b_{i,4}$ or there is a path through $b_{i,2}, a_{i,2}$ to $a_{i,4}$. In both cases, we can advance to the next $i$. Hence, we can find a path walking through each $i$ advancing to the next, which must mean we end up in either $a_{n+1}$ or $b_{n+1}$, and we are done.
    
    We conclude that any algorithm that can solve the \textsc{Connectivity} problem on the graph construction `Cycles' in the AL model in $p$ passes, must use $\Omega(n/p)$ bits of memory by Proposition~\ref{prop:DisjReduction}.
\end{proof}

We note that we can make the result of Theorem~\ref{thm:Cycles-Conn} (and \ref{thm:CyclesPermConn}) hold for bipartite graphs of maximum degree 2 by subdividing every edge, making the graph odd cycle-free, and thus bipartite.

The proofs of Theorems~\ref{thm:HfreeOverviewConn} and \ref{thm:DistHfreeOverviewConn} follow.
\HfreeOverviewConn*
\begin{proof}
If $H$ contains a cycle of length not equal to $6$ as a subgraph, then the result follows from Theorem~\ref{thm:SimpleAL-Conn}, because that construction is $C_\ell$-free for any $\ell \not= 6$. By subdividing the middle (matching) edges, the construction can be made $C_\ell$-free for any fixed $\ell > 2$. Hence, we may assume that $H$ does not contain a cycle and thus is a forest. If $H$ contains a vertex of degree at least~$3$, then the result follows from Theorem~\ref{thm:Cycles-Conn}, because that construction has maximum degree~$2$. Hence, we may assume that $H$ is a linear forest.
If $H$ contains a $P_7$, then the result follows from a slight adaptation of the construction of Theorem~\ref{thm:SimpleAL-Conn}. By making the vertices $a$ and $b$ of that construction adjacent, the resulting graph cannot have a $P_7$ as an induced subgraph, while not affecting the correctness of the construction.
\end{proof}

\DistHfreeOverviewConn*
\begin{proof}
This is an immediate corollary of Theorem~\ref{thm:SimpleAL-Conn}. In that construction, after removing vertices $a$ and $b$, the remainder is a disjoint union of $P_2$'s.
\end{proof}

Interval and split graphs are hard in the VA model, see Figures~\ref{fig:IntervalConn} and \ref{fig:SplitConn}.

\begin{figure}
\centering
\begin{minipage}{.58\textwidth}
  \centering
  \includegraphics[width=\linewidth]{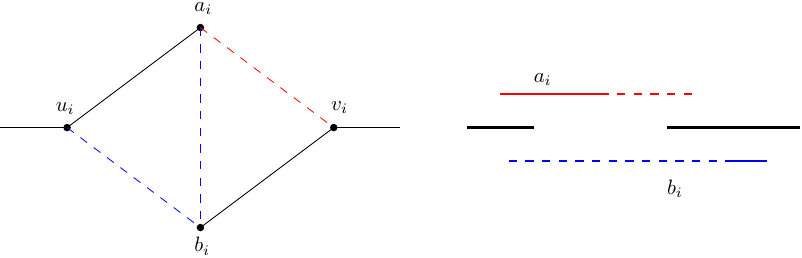}
  \captionof{figure}{VA lower bound for connectivity on interval graphs, called `Interval'. We see the gadget for index $i$, where the dotted lines are present when the corresponding value is 0. In essence, the intervals $a_i$ and $b_i$ are short or long depending on the input of Alice and Bob. The $n$ gadgets are placed consecutively.}
  \label{fig:IntervalConn}
\end{minipage}\quad
\begin{minipage}{.38\textwidth}
  \centering
  \includegraphics[width=.55\linewidth]{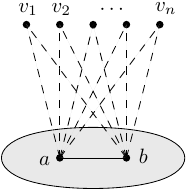}
  \captionof{figure}{VA lower bound for connectivity on split graphs, called `Split-Conn'. The dashed edges towards $v_i$ are present when there is a 0 at index $i$.
  }
  \label{fig:SplitConn}
\end{minipage}
\end{figure}

\begin{restatable}{theorem}{IntervalConn}\label{thm:Interval-Conn}
    Any streaming algorithm for \textsc{Connectivity} that works on interval graphs in the VA model using $p$ passes over the stream requires $\Omega(n/p)$ bits of memory.
\end{restatable}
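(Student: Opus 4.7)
The plan is to reduce from $\textsc{Disj}_n$ via Proposition~\ref{prop:DisjReduction}. Following the construction depicted in Figure~\ref{fig:IntervalConn}, I would build a chain of $n$ gadgets, one per index, where each gadget has a left ``entry'' anchor $\ell_i$ and a right ``exit'' anchor $r_i$, with $r_i$ identified with (or joined to) $\ell_{i+1}$ through a fixed black edge. The gadget for index $i$ is designed so that $\ell_i$ and $r_i$ are internally connected iff $x_i = 0$ or $y_i = 0$, and are cut apart iff $x_i = y_i = 1$. Chaining these, the whole graph is connected iff every index has at least one zero, i.e., iff $\textsc{Disj}_n(x,y) = \text{YES}$.

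To realize the gadget, I would install two parallel routes through it: one along Alice-controlled (red) edges that is completed iff $x_i = 0$, and one along Bob-controlled (blue) edges that is completed iff $y_i = 0$. The fixed black edges supply the common skeleton that both routes plug into. As long as at least one route survives, $\ell_i$ and $r_i$ lie in the same component; if both fail, the gadget breaks. For the VA stream, Alice arrives first and reveals all of her vertices together with their fixed and input-dependent incident edges, then hands the algorithm's memory state to Bob, who reveals his vertices with all incident blue edges and the remaining fixed cross-edges. The VA ordering is crucial and sufficient: any vertex whose adjacency list involves both parties' input bits must be owned by whichever party arrives later, and the gadget is engineered so this is the case. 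Since only a constant number of state transfers occur per pass, Proposition~\ref{prop:DisjReduction} gives the $\Omega(n/p)$ lower bound.

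The main obstacle is verifying that the construction is genuinely an interval graph for every $(x,y)$. I would exhibit an explicit interval representation: assign gadget $i$ a strip of the real line, say $[3i, 3i{+}3]$, with $r_i = \ell_{i+1}$ realized by an interval straddling $3(i{+}1)$ so that consecutive gadgets overlap only at the shared anchor. Inside each strip, constantly many short intervals are placed whose pairwise overlaps are tuned to realize precisely the four edge configurations corresponding to $(x_i, y_i) \in \{0,1\}^2$: input-dependent adjacencies are toggled by nudging a single endpoint, which either creates or destroys an intersection without disturbing the rest of the strip. Since intervals from different strips do not interact beyond the shared anchor, local correctness yields global correctness, and the resulting graph is interval.
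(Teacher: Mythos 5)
Your proposal mirrors the paper's proof: a reduction from $\textsc{Disj}_n$ via a chain of $n$ gadgets (precisely the construction in Figure~\ref{fig:IntervalConn}), where Alice reveals her vertices first in the VA stream, the resulting interval graph is connected iff the $\textsc{Disj}_n$ answer is YES, and the $\Omega(n/p)$ bound follows from Proposition~\ref{prop:DisjReduction}. The only thing left abstract in your write-up is the explicit gadget, which the figure supplies (for index $i$: vertices $u_i, a_i, v_i, b_i$; fixed edges $(u_i,a_i)$, $(v_i,b_i)$, $(v_{i-1},u_i)$; Alice adds $(a_i,v_i)$ iff $x_i=0$; Bob adds $(b_i,u_i)$ and $(b_i,a_i)$ iff $y_i=0$), and given that gadget the rest of your argument matches the paper's.
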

\begin{proof}
    Let $x,y$ be the input to $\textsc{Disj}_n$ of Alice and Bob, respectively. We create a construction as shown in Figure~\ref{fig:IntervalConn}. We create an interval graph on $4n$ vertices. For each $i$, we create the vertices $u_i, v_i, a_i, b_i$. We insert the edges $(u_i, a_i)$, $(v_i, b_i)$, and $(v_{i-1},u_i)$ (for $i=1$ we do not insert this last edge). Alice inserts the edge $(a_i, v_i)$ if and only if $x_i = 0$. Bob inserts the edges $(b_i, u_i)$ and $(b_i, a_i)$ if and only if $y_i = 0$. This completes the construction. Notice that this is an interval graph, as illustrated by the interval representation of index $i$ in Figure~\ref{fig:IntervalConn}. Given an algorithm that works on a family including this construction, Alice and Bob construct an VA stream as follows. First, Alice reveals all vertices $u_i, a_i, v_i$ (and the edges between them) for each $i$. Then she passes the memory of the algorithm to Bob who reveals each $b_i$. This completes one pass of the stream. Notice that Alice does not need to know the input of Bob for $\textsc{Disj}_n$, and neither does Bob have to know the input of Alice, as it is a VA stream.
    
    We now claim that the graph is connected if and only if the answer to $\textsc{Disj}_n$ is YES.
    
    Let us assume that the answer to $\textsc{Disj}_n$ is NO, that is, there is an index $i$ such that $x_i = y_i = 1$. Then clearly, there is no path between $u_i$ and $v_i$, and so the graph is not connected.
    
    Now assume the answer to $\textsc{Disj}_n$ is YES, that is, there is no index $i$ such that $x_i = y_i = 1$. Now we claim that there is a path from $u_1$ to $v_{n}$ using every $u_i$ and $v_i$, and hence the graph is connected. Indeed, if there is such a path then the graph is connected, as each $a_i$ and $b_i$ are always adjacent to $u_i$ and $v_i$, respectively. There is such a path because, for each $1 \leq i \leq n$ at least one of the edges $(u_i, b_i)$ or $(a_i, v_i)$ is present, creating a path between $u_i$ and $v_i$. Combining these paths for each $i$ gives us the path we were looking for, as the edges $(v_{i-1},u_i)$ exist for each $2\leq i\leq n$.
    
    We conclude that any algorithm that can solve the \textsc{Connectivity} problem on the graph construction `Interval' in the VA model in $p$ passes, must use $\Omega(n/p)$ bits of memory by Proposition~\ref{prop:DisjReduction}.
\end{proof}

\begin{restatable}{theorem}{SplitConn}\label{thm:Split-Conn}
    Any streaming algorithm for \textsc{Connectivity} that works on split graphs in the VA model using $p$ passes over the stream requires $\Omega(n/p)$ bits of memory.
\end{restatable}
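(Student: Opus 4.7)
The plan is to reduce from $\textsc{Disj}_n$ and apply Proposition~\ref{prop:DisjReduction}. I would construct a split graph $G$ on $n+2$ vertices consisting of a clique $\{a,b\}$ and an independent set $\{v_1,\ldots,v_n\}$ indexed by the positions of Alice's and Bob's input strings. The edge $(a,b)$ is always present. Alice controls the $a$-to-$v_i$ edges: insert $(a,v_i)$ iff $x_i=0$. Bob controls the $b$-to-$v_i$ edges: insert $(b,v_i)$ iff $y_i=0$. By construction, $\{a,b\}$ induces a clique and $\{v_1,\ldots,v_n\}$ is an independent set, so $G$ is a split graph.

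To produce a valid VA stream with only a single memory exchange, Alice first reveals $v_1,\ldots,v_n$ (which have no edges among themselves), then reveals $a$ together with its edges to the $v_i$ for which $x_i=0$; she knows all of these from $x$ alone. She then hands the memory state of the streaming algorithm to Bob, who reveals $b$ together with the edge $(a,b)$ and the edges $(b,v_i)$ for indices with $y_i=0$; Bob knows all of these from $y$ alone. Because in VA each edge is revealed at the later endpoint, and both $a$ and $b$ arrive strictly after every $v_i$, each edge of $G$ appears exactly once and neither party needs any information about the other's input.

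Correctness is immediate. If $\textsc{Disj}_n$ is NO, some index $i$ has $x_i=y_i=1$, so $v_i$ is isolated and $G$ is disconnected. If $\textsc{Disj}_n$ is YES, then for every $i$ at least one of $x_i=0$ or $y_i=0$ holds, so every $v_i$ is adjacent to $a$ or to $b$; together with the fixed edge $(a,b)$ this makes $G$ connected. Hence a $p$-pass VA streaming algorithm for \textsc{Connectivity} on split graphs solves $\textsc{Disj}_n$ using a constant number of memory exchanges, and Proposition~\ref{prop:DisjReduction} yields the claimed $\Omega(n/p)$ lower bound.

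The main obstacle is really just checking that the reduction is compatible with the VA model in the right direction. Because the edges incident to each $v_i$ are split between Alice and Bob, the construction fails in the AL model, where $v_i$'s full adjacency list must be produced at a single point in the stream. In VA this is not a problem: each $v_i$ arrives before both $a$ and $b$ with an empty adjacency list, and the Alice/Bob-controlled edges are then emitted at the later endpoints $a$ and $b$, which each lie entirely on one side of the communication split.
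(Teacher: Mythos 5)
Your proposal is correct and uses essentially the same construction and VA-stream ordering as the paper's proof: a split graph with clique $\{a,b\}$ and independent set $\{v_1,\ldots,v_n\}$, with Alice controlling the $a$--$v_i$ edges and Bob the $b$--$v_i$ edges, each inserted on a $0$ entry. The only cosmetic difference is that the paper appeals to the general fact that a split graph is connected iff it has no isolated vertex, whereas you argue connectivity directly via the fixed edge $(a,b)$; both are valid.
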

\begin{proof}
    Let $x,y$ be the input to $\textsc{Disj}_n$ of Alice and Bob, respectively. We create a construction as shown in Figure~\ref{fig:SplitConn}. We create an split graph on $n + 2$ vertices. Let $v_1,\ldots, v_n$ be $n$ vertices in the independent set. Let $a,b$ be two vertices that form the clique. Alice inserts the edges $(a,v_i)$ when $x_i = 0$, and Bob inserts the edges $(b,v_i)$ when $y_i = 0$, for each $1\leq i \leq n$. This completes the construction. Given an algorithm that works on the construction, Alice and Bob construct a VA stream as follows. First, Alice reveals $v_1,\ldots, v_n$ (without edges at this point) and then reveals $a$. She then passes the memory of the algorithm to Bob, who reveals $b$, which completes one pass of the stream.
    
    It can be easily seen that there is an isolated vertex if and only if there is an index $i$ such that $x_i = y_i = 1$. Split graphs are connected if and only if there is no isolated vertex.
    
    We conclude that any algorithm that can solve the \textsc{Connectivity} problem on split graphs in the VA model in $p$ passes, must use $\Omega(n/p)$ bits of memory by Proposition~\ref{prop:DisjReduction}.
\end{proof}

For split graphs, in any model, \textsc{Connectivity} admits a one-pass, $\O(n)$ bits of memory algorithm by counting if there is a vertex of degree 0 (and so also for any $p$ a $p$-pass algorithm using $\O(n/p)$ bits by splitting up the work in $p$ parts)\footnote{This assumes the vertices are labelled $1 \ldots n$ and do not have arbitrary labels.}. If there can be no isolated vertices, then a split graph is always connected.

\subsection{Permutation Lower Bound}\label{sec:ConnectivityPerm}

As an additional result, we give another reduction for graphs of maximum degree 2. Sun and Woodruff~\cite{SunWoodruffPermBounds} have already shown a \textsc{Permutation} lower bound for \textsc{Connectivity} in general graphs. We show that the `Cycles' construction can be extended to lower bounds using the $\textsc{Perm}_n$ problem, showing hardness for graphs of maximum degree 2.

\begin{figure}
    \centering
    \includegraphics[width=\textwidth]{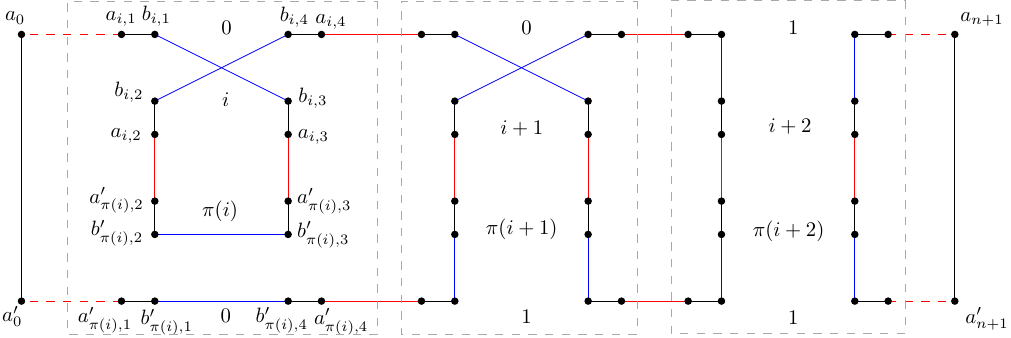}
    \caption{AL permutation lower bound for connectivity, called `Cycles-Perm'. The graph consists of one or multiple cycles depending on the input of $\textsc{Perm}_n$. Red edges between $a$-vertices are controlled by Alice, blue edges between $b$-vertices by Bob.}
    \label{fig:Cycles-Perm-Conn}
\end{figure}

\begin{theorem}\label{thm:CyclesPermConn}
    Any streaming algorithm for \textsc{Connectivity} that works on graphs of maximum degree 2 in the AL model using 1 pass over the stream requires $\Omega(n \log n)$ bits of memory.
\end{theorem}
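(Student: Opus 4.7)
The plan is to adapt the `Cycles' construction of Theorem~\ref{thm:Cycles-Conn} so that it reduces from $\textsc{Perm}_n$ instead of $\textsc{Disj}_n$, while preserving the property that every vertex has degree at most~$2$. Since $\textsc{Perm}_n$ has input size $\Theta(n \log n)$ and the construction will use $\O(n)$ vertices and be realizable as a one-pass AL stream in which Alice and Bob exchange the memory state only once, Proposition~\ref{prop:PermReduction} will then yield the desired $\Omega(n \log n)$ lower bound.

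First, I would partition the vertex set into $n$ source gadgets and $n$ target gadgets, together with a constant number of connector vertices, each gadget consisting of a small block of degree-$2$ vertices arranged similarly to the four-vertex blocks $a_{i,1},\ldots,a_{i,4}$ and $b_{i,1},\ldots,b_{i,4}$ used in the proof of Theorem~\ref{thm:Cycles-Conn}. Alice's only role is to encode her permutation $\pi$: for every $i \in [n]$ she inserts a single matching edge joining a designated attachment vertex of the $i$-th source gadget to a designated attachment vertex of the $\pi(i)$-th target gadget. Because $\pi$ is a bijection every attachment vertex is used exactly once, so these edges respect the degree-$2$ constraint. This mirrors the role played by the edges $(u_{i,4},v_{\pi(i),4})$ in `Windmill-Perm' (Theorem~\ref{thm:WindmillPerm}).

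Second, Bob places two configurations inside the gadgets, completely analogous to the $0$-case and $1$-case configurations of `Cycles'. Inside the $\psi$-th source gadget he uses the break configuration and elsewhere the through configuration; inside a target gadget indexed by $m$ he uses the break configuration iff the $\gamma$-th bit of $m$ is $1$, and the through configuration otherwise. By the same cycle-tracing argument as in Theorem~\ref{thm:Cycles-Conn}, having the break configuration on both sides of a single matched source--target pair is exactly what is needed to split the otherwise global cycle into more than one component. Because Alice's matching pairs source $i$ with target $\pi(i)$, the unique candidate for such a split is the source at $\psi$ together with the target at $\pi(\psi)$, and both sides are broken precisely when the $\gamma$-th bit of $\pi(\psi)$ is $1$. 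Hence the graph is disconnected iff the answer to $\textsc{Perm}_n$ is YES.

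The main obstacle is to satisfy three competing constraints simultaneously: the maximum degree stays at $2$, the permutation matching and per-gadget configurations interact to give a clean single-cycle-versus-multiple-cycles dichotomy, and the stream is presentable in one AL pass in which Alice first reveals her attachment vertices (together with the matching edges, which she alone determines), hands the memory state to Bob, who then reveals the remaining gadget vertices with his configuration edges. Verifying the dichotomy amounts to a careful local case analysis inside the combined source--target double gadget at index~$\psi$, showing that the four combinations of break/through on the two sides either yield a single local traversal from one end to the other (giving one global cycle) or a turnaround that closes off a shorter sub-cycle (giving at least two components). Once this local analysis is in place, the global reduction and the $\Omega(n \log n)$ lower bound via Proposition~\ref{prop:PermReduction} follow along the same lines as Theorems~\ref{thm:Cycles-Conn} and~\ref{thm:WindmillPerm}.
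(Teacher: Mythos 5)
Your proposal is correct and takes essentially the same approach as the paper's proof: it adapts the `Cycles' construction to reduce from $\textsc{Perm}_n$, with Alice encoding $\pi$ via linking edges between the $i$-th top (source) gadget and the $\pi(i)$-th bottom (target) gadget, and Bob placing break/through configurations according to $\psi$ and the $\gamma$-th bit, so that the max-degree-$2$ graph is a single cycle iff the $\textsc{Perm}_n$ answer is NO. This is precisely the paper's `Cycles-Perm' construction, including the division of labour needed for a one-pass AL stream with a single handoff.
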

\begin{proof}
    Essentially, we adapt the construction of Theorem~\ref{thm:Cycles-Conn} to work for $\textsc{Perm}_n$. Let $\pi, j$ be the input to $\textsc{Perm}_n$, and let $\gamma$ and $\psi$ be the values associated with $j$ (see the definition of $\textsc{Perm}_n$). We construct a graph on $16n+4$ vertices, consisting of one or more cycles. For each index $i$, we create the vertices $a_{i,1}, a_{i,2}, a_{i,3}, a_{i,4}, a_{i,1}', a_{i,2}', a_{i,3}', a_{i,4}'$ and $b_{i,1}, b_{i,2}, b_{i,3}, b_{i,4}, b_{i,1}', b_{i,2}', b_{i,3}', b_{i,4}'$, where we connect the each $a$-vertex with its corresponding $b$-vertex (i.e. $(a_{i,2}', b_{i,2}')$ is an edge). Next to this, we also create 4 vertices $a_0, a_0', a_{n+1}, a_{n+1}$, where $(a_0, a_0')$ and $(a_{n+1}, a_{n+1}')$ are edges. The rest of the edges are dependent on the input to $\textsc{Perm}_n$. For an index $i$, if it is the $\psi$-th index, Bob inserts the edges $(b_{i,1}, b_{i,2})$ and $(b_{i,3}, b_{i,4})$. If it is not the $\psi$-th index, Bob inserts the edges $(b_{i,1}, b_{i,3})$ and $(b_{i,2}, b_{i,4})$. Also, for an index $i$, if its $\gamma$-th bit is a 1, Bob inserts the edges $(b_{i,1}', b_{i,2}')$ and $(b_{i,3}', b_{i,4}')$. If the $\gamma$-th bit is a 0, Bob inserts the edges $(b_{i,1}', b_{i,4}')$ and $(b_{i,2}', b_{i,3}')$. Alice links `the top' of $i$ with `the bottom' of $\pi(i)$. For each index $i$, Alice inserts the following edges, $(a_{i-1,4}, a_{i,1})$ (or $(a_0, a_{i,1})$ when $i=1$), $(a_{i,4}, a_{i+1,4})$ (or $(a_{i,4}, a_{n+1})$ when $i=n$), $(a_{i,2}, a_{\pi(i), 2}')$, $(a_{i,3}, a_{\pi(i), 3}')$, $(a_{\pi(i-1),4}', a_{\pi(i),1}')$ (or $(a_0', a_{\pi(i),1}')$ when $\pi(i)=1$), $(a_{\pi(i),4}', a_{\pi(i+1),4}')$ (or $(a_{\pi(i),4}', a_{n+1}')$ when $\pi(i)=n$).\footnote{We note that formally, we insert many edges twice, but this is to make the description more understandable. Alice does not actually insert these edges twice.} This concludes the construction. The graph consists of one or more cycles because every vertex has degree 2. Given an algorithm that works on a family including this construction, Alice and Bob construct an AL stream as follows. First, Alice reveals all $a$-vertices, then passes the memory of the algorithm to Bob, who reveals all $b$-vertices, which completes one pass of the stream. This is correct, as $a$-vertices are only connected to $b$-vertices with edges independent of the input to $\textsc{Disj}_n$.
    
    We claim that the graph is not connected if and only if the answer to $\textsc{Perm}_n$ is YES. The graph is not connected if and only if there is an index $i$ such that `the top' and `the bottom' are both a 1-construction. This can only be the case when an index $i$ is the $\psi$-th index on `the top', and the $\gamma$-th bit is a 1 on `the bottom'. However, `the bottom' corresponds to the index $\pi(i)$ because of the edges of Alice, which means that the $\gamma$-th bit of the image under $\pi$ of $\psi$ is a 1. Hence, this occurs if and only if the answer to $\textsc{Perm}_n$ is YES.
    
    We conclude that any 1-pass algorithm in the AL model that can solve the \textsc{Connectivity} problem on graphs of maximum degree 2, must use $\Omega(n \log n)$ bits of memory by Proposition~\ref{prop:PermReduction}.
\end{proof}

The results of Theorem~\ref{thm:CyclesPermConn} also holds for bipartite graphs of degree 2. To see this, subdivide every edge, making the graph odd cycle-free, and thus bipartite.


\section{Vertex Cover kernelization}\label{sec:VCkernel}

In this section\footnote{This section is based on the master thesis ``Parameterized Algorithms in a Streaming Setting'' by the first author.}, we parameterize the \textsc{Vertex Cover} problem by the solution size $k$. We now show how our insights into parameterized, streaming graph exploration can aid in producing a new kernelization algorithm for \textsc{Vertex Cover [$k$]}.
The basis for our result is a well-known kernel for the \textsc{Vertex Cover [$k$]} problem of Buss and Goldsmith~\cite{BussVCkernel}, consisting of $\O(k^2)$ edges. Constructing this kernel is simple: find all vertices with degree bigger than $k$, and remove them from the graph, and decrease the parameter with the number of vertices removed, say to $k'$. Then, there is no solution if there are more than $k\cdot k'$ edges. Therefore, we have a kernel consisting of $\O(k^2)$ edges. We are able to achieve this same kernel in the AL model, as counting the degree of a vertex is possible in this model. Interestingly, we do not require $\O(k^2\log n)$ bits of memory to produce a stream corresponding to the kernel of $\O(k^2)$ edges. This result is also possible in the EA model, by allowing vertices up to degree $2k$.

\begin{theorem}\label{thm:VCkernelk^2edgesEAAL}
    Given a graph $G$ as an AL stream, we can make an AL stream corresponding to an $\O(k^2)$-edge kernel for the \textsc{Vertex Cover [$k$]} problem using two passes and $\O(k\log n)$ bits of memory. When we work with an EA stream, we can make an EA stream corresponding to an $\O(k^2)$-edge kernel using four passes and $\O(k\log n)$ bits of memory.
\end{theorem}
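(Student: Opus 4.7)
The plan is to streamify the classical Buss--Goldsmith kernel: identify the set $H$ of vertices forced into every vertex cover of size $k$ (those of high degree), and then either emit the induced subgraph $G-H$ as the kernel or a canonical NO-instance if $G-H$ has too many edges. The two models differ only in how much work is needed to uncover $H$.

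\textbf{AL case (two passes).} Since every vertex in an AL stream arrives together with its complete neighbourhood, its degree is known the instant it is revealed. In pass~1 I would maintain, in $\O(k \log n)$ bits, four quantities: (i)~the set $H$ of vertices of degree larger than $k$ seen so far, aborting to a canonical NO-output as soon as $|H|>k$; (ii)~the total edge count $m = \tfrac12 \sum_v \deg(v)$; (iii)~$\sigma := \sum_{v\in H}\deg(v)$, incremented when a vertex enters $H$; and (iv)~$e_{HH}:=|E(H,H)|$, computed by adding $|N(v)\cap H|$ at the moment a new vertex $v$ enters $H$. Each edge inside $H$ is counted exactly once, at its later-appearing endpoint, so
\[
|E(G-H)| \;=\; m - \sigma + e_{HH}.
\]
At the end of pass~1 we know $k' = k - |H|$ and whether $|E(G-H)| \le k k'$. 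In pass~2 we branch: if $|E(G-H)| > k k'$ we emit a canonical NO-kernel (for instance $K_{k+2}$); otherwise, for every vertex $v \notin H$ appearing in the input, we emit $v$ together with $N(v) \setminus H$, which is a local computation because $H$ sits in memory. The output is a valid AL stream of an equivalent instance with $\O(k^2)$ edges.

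\textbf{EA case (four passes).} In EA we cannot read a vertex's degree from a single appearance, so following the hint in the statement I would relax the threshold to $2k$, which still yields an $\O(k^2)$-edge kernel. Pass~1 greedily builds a maximal matching $M$, stopping with a canonical NO-output if $|M| > k$; otherwise $|V(M)| \le 2k$. A key observation is that every vertex $v$ of degree $> 2k$ must belong to $V(M)$: if $v \notin V(M)$, maximality forces $N(v) \subseteq V(M)$, and hence $|V(M)| \ge \deg(v) > 2k$, a contradiction. Pass~2 thus only needs to compute the exact degrees of the at most $2k$ vertices in $V(M)$ to extract $H = \{v \in V(M) : \deg(v) > 2k\}$. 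Pass~3 rescans the stream to count $|E(G-H)|$ (an edge contributes iff neither endpoint lies in $H$, decided locally since $|H| \le k$). Pass~4 either emits the canonical NO-kernel (if $|E(G-H)| > 2k k'$ with $k' = k - |H|$) or re-streams exactly the edges of $G$ with both endpoints outside $H$. Throughout, only $H$, $V(M)$ and a constant number of counters are stored, for $\Ot(k)$ memory.

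\textbf{Main obstacle.} The AL case is comparatively easy because full neighbourhoods let us compute $H$, $m$, and $e_{HH}$ in one sweep. The EA case is harder: pinpointing vertices of degree $> k$ exactly with $\Ot(k)$ memory in a single EA pass seems out of reach, since many borderline vertices can coexist. The matching-VC trick side-steps this by confining the candidate set to $V(M)$ at the cost of weakening the threshold to $2k$, and two extra passes are then needed to (a)~verify exact degrees within $V(M)$ and (b)~decouple counting from emission so that no edges are output before we know whether the reduced graph is small enough to be the kernel. This accounts for the gap between two passes in AL and four in EA.
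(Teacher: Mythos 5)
Your proof is correct and follows essentially the same route as the paper: in the AL model you locate the set of high-degree vertices in a single pass (the paper tracks a running ``removed-edge'' counter $r$ while you use the equivalent inclusion-exclusion $|E(G-H)| = m - \sigma + e_{HH}$, a purely notational difference), and in the EA model you first pin down a maximal matching to confine the candidate high-degree vertices, then spend extra passes on exact degree counting and edge counting before emitting. The only cosmetic deviation is that you emit an explicit canonical NO-instance instead of simply returning NO, which if anything hews more closely to the ``produce a stream'' wording of the statement.
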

\begin{proof}
    Let $G$ be a graph, with $n$ vertices and $m$ edges, given as an AL stream, and let $k$ be the solution size parameter for the \textsc{Vertex Cover [$k$]} problem. Note that we can count the degree of every vertex when it appears in the stream, as we are given all adjacencies of a vertex consecutively. Therefore, in one pass over the stream we can count the degree of every vertex, and save each vertex with a degree bigger than $k$ in a set $S$, as long as $|S| \leq k$. In this same pass, we keep track of two more counters: the total number of edges in the stream $m'$ (which is $2m$), and the number of unique edges we remove $r$. We find $r$ by incrementing a local (to a vertex) counter $r'$ when we see edges towards vertices not in $S$, and add $r'$ to $r$ if we decide to add the vertex to $S$. If $\frac{m'}{2} - r > k \cdot (k - |S|)$, return NO. Otherwise, make a pass over the stream, and output only those edges between vertices not in $S$.
	
	The output must be an AL stream, as we only remove edges from an AL stream to produce it. Let is also be clear that we use two passes over the stream.
	
	The set $S$ takes $\O(k\log n)$ bits of memory, as finding more than $k$ vertices will result in returning NO. Counting the total number of edges takes $\O(\log m) = \O(\log n)$ bits of space, and other constant number of counters are the same size or smaller. Therefore, this procedure uses $\O(k\log n)$ bits of memory.
	
	The behaviour of this procedure is equivalent of the kernelization algorithm of Buss and Goldsmith~\cite{BussVCkernel}, as it finds exactly those vertices with degree higher than $k$, and `removes' them by adding them to $S$ and ignoring edges incident to them in the output. Checking the instance size is done correctly, as the new parameter $k'$ is equivalent to $k - |S|$, and the number of remaining edges is equal to $m - r$, which is $\frac{m'}{2} - r$. The value of $r$ is counted correctly because we only count unique edges by ignoring those towards vertices already in $S$. Therefore, this kernelization procedure is correct.
	
	For the case of the EA stream, we essentially do the same as in the AL model, but in this model we end up with a slightly larger kernel. In our first pass, we greedily construct a vertex cover $X$ of size $2k$ or conclude that there is no solution to \textsc{Vertex Cover [$k$]}. Now, we know that all vertices not in $X$ must have degree at most $2k$. So, in the second pass we count the degree of all vertices in $X$, and if a degree exceeds $2k$ we the add vertex to $S$. This takes only $\O(k\log n)$ bits of memory. If at some point $|S| > k$ we also stop and conclude there is no solution to \textsc{Vertex Cover [$k$]}, as all vertices of degree at least $2k$ must be in any solution. In a third pass, we count the total number $m$ of edges in the stream, and the number of unique edges we remove $r$ (this is not possible during the second pass because we might count edges double when it has both endpoints in $S$). Now if $m - r > 2k \cdot (k - |S|)$, we can conclude there is no solution to \textsc{Vertex Cover [$k$]}, as each vertex has maximum degree $2k$.
    
    Now to output the kernel as an EA stream, we make a fourth pass and only output those edges not incident to vertices in $S$. In this procedure we only use $\O(k\log n)$ bits of memory for counting and $X$. The resulting kernel has at most $2k \cdot (k - |S|) = \O(k^2)$ edges.
\end{proof}

Next, we show how to use Theorem~\ref{thm:VCkernelk^2edgesEAAL} to produce a kernel of even smaller size, using only $\O(k\log n)$ bits of memory. This requires Theorem~\ref{thm:VCkernelk^2edgesEAAL} to convert the original graph stream into the kernel input for the next theorem, which only increases the number of passes by a factor 2 or 4 (we have to apply Theorem~\ref{thm:VCkernelk^2edgesEAAL} every time the other procedure uses a pass).

Chen et al.~\cite{CHEN2001_VC} show a way to convert the kernel of Buss and Goldsmith into a $2k$-vertex kernel for \textsc{Vertex Cover [$k$]}, using the NT-Theorem by Nemhauser and Trotter~\cite{NT-theorem}. We will adapt this method in the streaming setting, and give a concise description of this procedure below. The following theorem, as formulated this way by Chen et al.~\cite{CHEN2001_VC}, is due to Nemhauser and Trotter~\cite{NT-theorem} and Bar-Yehuda and Even~\cite{BarYehudaEven}.

	\begin{proposition}[NT-Theorem]\label{prop:NT-theorem}
    	There is an $\O(\sqrt{n}m)$ time algorithm that, given a graph $G$ of $n$ vertices and $m$ edges, constructs two disjoint subsets $C_0$ and $V_0$ of vertices in $G$ such that
    	\begin{enumerate}
        	\item[(1)] The union of any minimum vertex cover of $G[V_0]$ and $C_0$ forms a minimum vertex cover for $G$.
     		\item[(2)] Any minimum vertex cover of $G[V_0]$ contains at least $\lvert V_0 \rvert/2$ vertices.
    	\end{enumerate}
	\end{proposition}

	The proof of the NT-Theorem by Bar-Yehuda and Even~\cite{BarYehudaEven} shows us how to do find $V_0$ and $C_0$ for any arbitrary graph $G$. One creates a bipartite graph $B$ from $G$ by making two copies of all vertices $V, V'$, and an edge $(x,y)$ in $G$ translates to the edges $(x,y')$, $(x',y)$ in $B$. One then finds a maximum matching $M$ of $B$ to find a minimum vertex cover of $B$, as described by Bondy and Murty~\cite[Page~74, Theorem~5.3]{BondyMurtyGraphTheory}. Let us shortly go over what it entails. If our bipartite graph has vertex sets $V, V'$ and a maximum matching $M$, then we can find a minimum vertex cover $X$ with $\lvert X\rvert = \lvert M\rvert$ in the following manner. Denote all unmatched vertices in $V$ with $U$, and let $Z \subseteq V \cup V'$ be the set of vertices connected to $U$ with an $M$-alternating path (a path such that edges in $M$ and not in $M$ alternate). Denoting $S = Z \cap V$ and $T = Z \cap V'$, then $X$ is given by $X = (V \setminus S) \cup T$. Now, the set $C_0$ is given by all vertices $v\in G$ for which both $v,v'\in B$ are contained in $X$, and $V_0$ contains the vertices $v\in G$ for which exactly one of $v,v'\in B$ is contained in $X$.

	Chen et al.~\cite{CHEN2001_VC} describe how to use the above procedure to get a smaller kernel from the kernel by Buss and Goldsmith~\cite{BussVCkernel}. Start with $G$ as the kernel by Buss and Goldsmith~\cite{BussVCkernel}, and execute the above procedure to find the sets $C_0$ and $V_0$ in $G$. Then, the kernel is given by $G' = G[V_0]$ with new parameter value $k' = k_1 - \lvert C_0 \rvert$, where $k_1$ is the parameter value of the kernel by Buss and Goldsmith. Chen et al.~\cite{CHEN2001_VC} show that $G'$ has at most $2k$ vertices and is a kernel with parameter value $k'$.

	We now show how to execute this procedure in the streaming setting, both in the AL and EA models. Note that in the EA model, Theorem~\ref{thm:VCkernelk^2edgesEAAL} yields a kernel which is not exactly the kernel by Buss and Goldsmith, but still has $\O(k^2)$ edges. This property is sufficient for the above procedure to work, even though it is not exactly the kernel by Buss and Goldsmith.

\begin{restatable}{lemma}{BussToBipartite}\label{lemma:BussToBipartite}
	Given a graph $G$ as a stream in model AL or EA, we can produce a stream in the same model corresponding to the Phase 1 bipartite graph $B$ of \cite[Algorithm NT]{BarYehudaEven} using two passes and $\O(\log n)$ bits of memory.
\end{restatable}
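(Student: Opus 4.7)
The plan is a direct relabel-and-emit transformation of the input stream, with no need to remember any non-trivial portion of the graph. I will label the two copies of each original vertex $v$ as the pairs $(v,0) \in V_1$ and $(v,1) \in V_2$, each of which fits in $\O(\log n)$ bits; using this shorthand, each original edge $\{u,v\} \in E(G)$ must become the pair $\{(u,0),(v,1)\}$ and $\{(u,1),(v,0)\}$ in the Phase~1 bipartite graph.

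First I would handle the EA case. In a single pass over the input, for every incoming edge $(u,v)$ I emit the two edges $(u,v')$ and $(u',v)$, where the primes denote the copy indices above. Only a constant number of labels and counters ever reside in memory, giving $\tilde{\O}(1)$ bits; the second pass is not needed but certainly fits within the budget.

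For the AL case, I would use the two passes to naturally partition the output into the two sides of the bipartition. In the first pass, whenever the input presents vertex $v$ with its adjacency list $N(v)$, I emit one AL record for $(v,0)$ whose adjacency list is $\{(u,1) : u \in N(v)\}$, produced on the fly as $N(v)$ is read in. In the second pass, I emit the analogous AL record for $(v,1)$ with adjacency list $\{(u,0) : u \in N(v)\}$. Concatenating the two passes yields a valid AL stream of the bipartite graph in which every vertex of $V_1 \cup V_2$ appears exactly once with its complete adjacency list.

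The main (mild) obstacle is to make sure each output record respects its model. In the AL output, each adjacency list must be emitted contiguously, which is possible precisely because the input AL stream already presents $N(v)$ contiguously, so the relabeling can be streamed through directly without buffering the whole list. Correctness is immediate from the construction: every edge $\{u,v\} \in E(G)$ is mapped to exactly the two edges required by the Phase~1 bipartite graph of Bar-Yehuda and Even~\cite{BarYehudaEven}, and in the AL case each bipartite-graph vertex is emitted exactly once. The total memory use is a handful of $\O(\log n)$-bit labels and counters, hence $\tilde{\O}(1)$ bits, and the number of passes over the input is two.
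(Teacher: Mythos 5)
Your proposal is correct and matches the paper's argument almost exactly: both perform a pass-through relabeling, with pass one emitting each vertex's record on the $V$ side (neighbors primed) and pass two emitting the $V'$ side (neighbors unprimed), using only $\Ot(1)$ memory. The only minor difference is that you note the EA case can be done in a single pass by emitting both edges $(u,v')$ and $(u',v)$ per input edge, whereas the paper uniformly uses two passes for both models; this is a harmless optimization that stays within the stated budget.
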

\begin{proof}
    Given a graph $G = (V,E)$, Phase 1 of \cite[Algorithm NT]{BarYehudaEven} asks for the bipartite graph $B$ with vertex sets $V, V'$ and edges $E_B$ such that $V' = \{v' \mid v\in V \}$ and $E_B = \{ (x,y') \mid (x,y) \in E\}$. This is essentially two copies of all vertices and each edge in the original graph makes two edges, between the corresponding (original,copy)-pairs.
	
	The process of creating a stream corresponding to $B$ is quite simple: first we use a pass and, for every edge $(x,y)$, we output $(x,y')$, and then we use another pass and, for every edge $(x,y)$, we output $(x',y)$. If the input is an EA stream, then the output must be as well, as no edge is output twice. If the input stream is an AL stream, the output must be an AL stream too, as we are consistent in which copy of the vertex we address. That is, the output AL stream first reveals all vertices in $V$ and then all those in $V'$. All adjacencies of these vertices are present in the stream, as all the adjacencies were present in the input stream.
	
	We can see that this uses two passes and $\O(\log n)$ bits of memory (to remember what pass we are in, and to read an edge). It is trivial that the bipartite graph $B$ is constructed correctly, as for every edge $(x,y)$ we output the edges $(x,y')$ and $(x',y)$.
\end{proof}

Before we continue to find the maximum matching in such a graph $B$ produced by Lemma~\ref{lemma:BussToBipartite}, we need a few observations to restrict the size of the matching we want to find. From the conversion to a $2k$ kernel by Chen et al.~\cite{CHEN2001_VC}, we can conclude that for the sets $C_0$ and $V_0$ of the NT-Theorem it must hold that $\lvert V_0\rvert \leq 2k - 2 \lvert C_0\rvert$ (as this shows the kernel size). But then it must also be that $\lvert V_0 \rvert + \lvert C_0\rvert \leq 2k$, and $V_0$ and $C_0$ together include all vertices in the found minimum vertex cover in $B$. So, the maximum matching $M$ of $B$ we search for has size $\O(k)$.

To find the maximum matching we execute a DFS procedure, which can be done with surprising efficiency in this restricted bipartite setting.

\begin{restatable}{theorem}{MaximumMatchingBipartite}\label{thm:MaximumMatchingBipartite}
	Given a bipartite graph $B$ as an AL stream with $\O(k^2)$ vertices, we can find a maximum matching of size at most $\O(k)$ using $\O(k^2)$ passes and $\O(k\log n)$ bits of memory. For the EA model this can be done in $\O(k^3)$ passes.
\end{restatable}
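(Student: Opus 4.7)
The plan is to compute a maximum matching in $B$ by the classical augmenting-path algorithm: start with $M = \emptyset$ and, while some unmatched vertex $s$ admits an augmenting path with respect to $M$, find such a path $P$ and replace $M$ by its symmetric difference with $P$. Since $|V(B)| = \O(k)$, any matching has at most $\O(k)$ edges, so the algorithm terminates after at most $\O(k)$ augmentations. Each augmentation will be realized as a streaming DFS, and the two stated bounds will follow from counting passes per DFS step.

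In more detail, each augmenting-path search is a DFS implemented over the stream. The algorithm keeps in memory three subsets of $V(B)$: the current matching $M$, a stack $S$ encoding the current alternating path from an unmatched source $s$, and a set $D$ of vertices already abandoned in the present search (so we do not re-enter their subtrees). Each of these has size $\O(k)$ and stores vertex identifiers of $\O(\log n)$ bits, totalling $\Ot(k)$ bits of memory. When the top $v$ of $S$ is on the same side of $B$ as $s$, we look for an unmatched edge $(v, u')$ with $u' \notin S \cup D$; if such a $u'$ is unmatched in $M$ we have found an augmenting path, which we read off from $S$ and use to update $M$, restarting the search from a fresh unmatched source. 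Otherwise $u'$ is matched in $M$ to some $w$, which is known in memory, so $u'$ and $w$ are pushed without consulting the stream. If no valid $u'$ exists, we pop $v$, add it to $D$, and continue.

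In the AL model, one pass suffices to advance the DFS by one stack action: during a pass, when the current top $v$ arrives in the stream we see its full adjacency list as a contiguous block, compare it against $S \cup D$ in memory, and decide immediately whether to push or backtrack. Because each vertex is pushed at most once and popped at most once, a single DFS performs $\O(k)$ actions and hence uses $\O(k)$ passes; combined with $\O(k)$ augmentations this gives $\O(k^2)$ passes. For the EA model we cannot read $v$'s adjacencies as a contiguous block, so to obtain the next candidate $u'$ we scan the stream in a pass and return the first edge incident to $v$ whose other endpoint lies outside $S \cup D$; the total number of such edge-probes across one DFS is bounded by the total edge-explorations, i.e.\ $\O(k^2)$ in the worst case, yielding $\O(k^2)$ passes per DFS and $\O(k^3)$ overall.

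The main obstacle is producing the correct per-pass progress guarantee: we must show that the compact state $(M, S, D)$ is enough to resume the DFS correctly after a pass ends, and that each pass really does make one unit of progress despite the adversarial stream order. In particular, in the AL case we must argue that every pass genuinely terminates with one push or one pop of $S$ (not stall), and in the EA case we must argue that probing for ``the next untried neighbour'' via a single pass is well-defined given only the small state we keep. The remaining correctness claim (that the resulting $M$ is a maximum matching) is then just the standard augmenting-path argument.
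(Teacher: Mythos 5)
Your pass accounting hides the one genuinely nontrivial step. You bound each augmentation round by ``a single DFS'' of $\O(k)$ stack actions, each costing one AL pass. But an augmentation round is not a single-source DFS: if the search from the first unmatched vertex $s$ exhausts its alternating-reachable set without finding an augmenting path, you must then try further unmatched sources before you may conclude the matching is maximum or locate an augmenting path elsewhere. For the round to cost $\O(k)$ passes in total, the abandoned set $D$ must persist across all these source-trials within the round, so that each matched vertex is pushed at most once over the entire round. You describe $D$ as ``vertices already abandoned in the present search,'' which reads as per-source, and you never state that $D$ carries over; if it is reset per source, the naive bound becomes $\O(k)$ sources $\times$ $\O(k)$ passes $= \O(k^2)$ passes per round and $\O(k^3)$ overall in AL, not the claimed $\O(k^2)$.

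More importantly, even once you declare that $D$ persists across sources, you owe a correctness argument: why is it sound to skip a matched vertex in a search from $s'$ merely because an earlier failed search from $s$ visited it (with the same matching $M$)? This is true in bipartite graphs but false in general graphs (it is precisely the failure that blossoms repair), so it cannot be waved through as ``the standard augmenting-path argument.'' The paper makes this the crux of the proof: it notes that all intermediate vertices of any augmenting path are matched, that a failed exploration through a matched vertex $w$ certifies no unmatched vertex on the opposite side is alternating-reachable from $w$, and that this certificate is independent of which unmatched source initiated the search; bipartiteness is what prevents the current partial path from blocking a completion. You identify ``the main obstacle'' as resuming the DFS from compact state $(M,S,D)$ after a pass, but that part is routine; the real obstacle is exactly the cross-source reuse of $D$, and your writeup leaves it unaddressed. (A minor, harmless deviation: the paper seeds $M$ with a greedy maximal matching in one extra pass, while you start from $M=\emptyset$; this does not affect the asymptotics.)
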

\begin{proof}
    We first use a pass to find a maximal matching $M$ in the graph. This can be done in a single pass because we can construct a maximal matching in a greedy manner, picking every edge that appears in the stream for which both vertices are unmatched.
	
	Then, we iteratively find an $M$-augmenting path $P$ (a path starting and ending in a unmatched vertices, alternating between edges in and not in $M$), and improve the matching by switching all edges on $P$ (i.e., remove from $M$ the edges on $P$ in $M$, and add to $M$ the edges on $P$ not in $M$). Note that any such $P$ has length $\O(k)$, as otherwise $M$ would exceed size $\O(k)$. It is known that a matching $M$ in a bipartite graph is maximum when there is no $M$-augmenting path \cite{HopcroftKarpMaxMatching}. We can also find an $M$-augmenting path only $\O(k)$ times, as the size of the matching increases by at least 1 for each $M$-augmenting path.
	
	Let us now describe how we find an $M$-augmenting path, given some matching $M$ of size $\O(k)$. We find $M$-augmenting paths by executing a Depth First Search (DFS) from each unmatched vertex. Note that we alternate between traversing edges in $M$ and not in $M$ in this search. In contrary to a normal DFS, we do not save which vertices we visited, as this would cost too much memory. Instead, we mark edges in $M$ as visited, together with the vertex from which we started the search. If an edge $e \in M$ has been visited once in the search tree, there is no need to visit it again, as the search that visited $e$ would have found an $M$-augmenting path containing $e$ if it exists. Let us discuss the exact details on the size of the search tree and recursion.
	
	As any $M$-augmenting path has length at most $\O(k)$, the depth of the search tree is also $\O(k)$. Looking at any vertex, it might have $\O(k^2)$ neighbours in the given bipartite graph. However, only $\O(k)$ of its neighbours can be in $M$. As visiting an unmatched vertex must end the $M$-augmenting path, the search tree size is only increased by visiting matched vertices. Therefore, the search comes down to the following process. From the initial unmatched vertex, we can explore to at most $\O(k)$ vertices (those in the matching) or any unmatched vertex which would end the search. If we explore to a matched vertex, the next step must traverse the edge in the matching to make an $M$-augmenting path, which is deterministic. Then we again can explore to $\O(k)$ matched vertices, or any unmatched vertex which would end the search. This process continues. As we only visit each matched vertex at most once, we can see that the number of vertices the search visits is bounded by $\O(k)$. In each node along the currently active path of the search tree, we can keep a counter with value $\O(k^2)$ (using $\O(\log(k^2)) = \O(\log k)$ bits) to keep track of what edge we consider next. These counters take up $\O(k \cdot \log k) = \O(k \log n)$ space. In any node, if we wish to consider the next edge incident to a vertex $v$ with a counter value $x$, we inspect the $x$-th edge incident to $v$ in the stream. If it turns out we cannot visit that vertex (have already visited it), we can increment the counter and find the next edge to consider in the same pass (as the $(x+1)$-th edge incident to $v$ must be later in the stream than the $x$-th edge incident to $v$). Therefore, finding the next edge to visit in the search only takes a single pass. Notice that we return to nodes in the search tree at most $\O(k)$ times in total, because only visiting matched vertices can result in a `failed' search recursion. So, a search that visits all matched vertices uses $\O(k)$ passes, and this is the maximum number of passes for a single search.
	
	We start our search at most once from each unmatched vertex that has at least one edge (for which we can keep another counter to keep track), which means we do at most $\O(k^2)$ searches. However, for each of these searches we start from different vertices, we still keep saved the set of visited matched vertices. If a search from a vertex visits a matched vertex and does not find an $M$-augmenting path, then neither will a search from a different vertex by visiting that matched vertex again. In particular, this is because the graph is bipartite, because, when we start from an unmatched vertex on e.g. the `right' side, we have to end on an unmatched vertex on the `left' side, while all vertices we visit on the path are matched vertices. So, the current path does not interfere with the ability to successfully find an endpoint, which makes another search visiting the same matched edge have exactly the same result. Having to do $\O(k^2)$ searches would indicate that we need to use $\O(k^2)$ passes, at least one for every search. However, in the AL model, if we consider the $x$-th vertex, and in the pass we use for it, we do no successful visit to a vertex (all adjacencies are matched and already visited), then in the same pass we can consider the $(x+1)$-th vertex, because all edges incident to the $(x+1)$-th vertex in the stream appear later than the edges incident to the $x$-th vertex in the stream. Hence, in the AL model, over all $\O(k^2)$ starting vertices, we only use $\O(k)$ passes, because only (partially) successful searches increase the number of passes, and we can only visit $\O(k)$ vertices in total. In the EA model, we require at least one pass for each vertex we want to start searching from, and so the total number of passes is $\O(k^2)$.
	
	We conclude that with $\O(k)$ passes in the AL model, and $\O(k^2)$ passes in the EA model, and $\O(k \log n)$ bits of memory we can execute a DFS to find an $M$-augmenting path (if it exists).
	
	As mentioned, we can search for an $M$-augmenting path only $\O(k)$ times, as the existence of more $M$-augmenting paths would result in returning NO. Therefore, we can find a maximum matching in $B$ using $\O(k^2)$ passes and $\O(k \log n)$ bits of memory in the AL model. In the EA model, we require $\O(k^3)$ passes to accomplish this.
\end{proof}

Next, we show how to convert such a maximum matching as found by Theorem~\ref{thm:MaximumMatchingBipartite} into a minimum vertex cover for $B$, as asked by \cite[Algorithm NT]{BarYehudaEven}, for which we can use a DFS procedure as in Theorem~\ref{thm:MaximumMatchingBipartite} as a subroutine.

\begin{restatable}{lemma}{MaximumMatchingToVCBipartite}\label{lemma:MaximumMatchingToVCBipartite}
	Given a bipartite graph $B$ as an AL stream and a maximum matching $M$ of size $\O(k)$, we can find a minimum vertex cover $X$ for $B$ with $|X| = |M|$, using $\O(k)$ passes and $\O(k \log n)$ bits of memory. For the EA model, this takes $\O(k^2)$ passes.
\end{restatable}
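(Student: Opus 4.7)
The plan is to invoke the constructive proof of K\"onig's theorem. Let $(L,R)$ be the bipartition of $B$ and let $U \subseteq L$ be the unmatched vertices of $L$. Define $Z$ to be the set of vertices reachable from $U$ via $M$-alternating paths that start with a non-matching edge. K\"onig's construction states that $X := (L \setminus Z) \cup (R \cap Z)$ is a minimum vertex cover with $|X| = |M|$. The proof proposal is therefore to compute (a representation of) $X$ within the claimed pass and memory budget.

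The first step is the size bound, which allows $X$ to be stored in $\Ot(k)$ bits. Because $M$ is a maximum matching, no $M$-alternating path from $U$ can end at an unmatched vertex of $R$ (otherwise it would be $M$-augmenting), so every vertex of $Z \cap R$ is matched and $|Z \cap R| \le |M| = \O(k)$. Similarly, all unmatched $L$-vertices lie in $U \subseteq Z$, so every vertex of $L \setminus Z$ is matched and $|L \setminus Z| \le |M| = \O(k)$. Consequently $|X| = \O(k)$, and since $L \setminus Z$ is precisely the set of matched $L$-vertices whose $M$-partner is not in $Z \cap R$, it suffices to compute $Z$ restricted to the $\O(k)$ matched vertices.

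To compute this restricted $Z$, I would re-use the alternating-DFS machinery of Theorem \ref{thm:MaximumMatchingBipartite} as a subroutine. Starting from each unmatched vertex $u \in L$, perform a DFS along $M$-alternating paths, maintaining a global visited set of matched vertices that have already been put into $Z$. When the DFS reaches a matched vertex that is already in $Z$, it can prune that branch, since any further reachability from that vertex has already been discovered by an earlier DFS. The matching $M$ itself (stored in $\Ot(k)$ bits) tells us, for each matched vertex, the edge we must traverse next in the alternating sequence, so the traversal never needs to guess. Throughout, the memory footprint is $\Ot(k)$: $M$, the visited set, and the $\O(k)$-deep recursion stack with a per-level $\O(\log k)$ pointer into the current vertex's adjacency list, exactly as in Theorem \ref{thm:MaximumMatchingBipartite}.

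The pass analysis is the main point on which to be careful, and it mirrors Theorem \ref{thm:MaximumMatchingBipartite} almost verbatim. Only $\O(k)$ productive DFS calls occur, since each such call adds at least one new vertex to the visited set and there are only $\O(k)$ matched vertices to discover; each productive call navigates an alternating path of length $\O(k)$, and in the AL model the locality of the stream allows all non-productive DFS starts and the per-vertex adjacency scans to be piggybacked into the passes used for productive work, giving $\O(k)$ passes in total. In the EA model the piggybacking argument fails because a vertex's neighbours are scattered across the stream, so each productive DFS start costs $\O(k)$ passes on its own, yielding $\O(k^2)$ passes overall. Once the visited set is computed I would make one final pass to emit $X = (L \setminus Z) \cup (R \cap Z)$, which is trivial at this point. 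The only real obstacle is stating the piggybacking argument carefully enough that the $\O(k)$ bound in the AL model genuinely follows, and this reduces to the observation (already used in Theorem \ref{thm:MaximumMatchingBipartite}) that once a vertex's adjacency list has been consumed, the stream pointer has moved strictly past it, so a subsequent DFS start at a later vertex can be handled in the same pass.
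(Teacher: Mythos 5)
Your proposal is correct and follows essentially the same route as the paper: both identify the König/Bondy--Murty construction $X = (V\setminus Z) \cup (Z \cap V')$ from the unmatched-vertex alternating-reachability set $Z$, observe that only matched vertices need to be tracked so that $X$ fits in $\Ot(k)$ bits, and then reuse the alternating-DFS subroutine of Theorem~\ref{thm:MaximumMatchingBipartite} (with the same visited-set pruning and the same AL-model piggybacking argument) to compute $Z$ in $\O(k)$ passes (AL) or $\O(k^2)$ passes (EA).
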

\begin{proof}
    We adapt a theorem by Bondy and Murty~\cite[Page~74, Theorem~5.3]{BondyMurtyGraphTheory} to the streaming setting to achieve this lemma. Let us repeat again what it entails. If our bipartite graph has vertex sets $V, V'$ and a maximum matching $M$, then we can find a minimum vertex cover $X$ with $|X| = |M|$ in the following manner. Denote all unmatched vertices in $V$ with $U$, and let $Z \subseteq V \cup V'$ be the set of vertices connected to $U$ with an $M$-alternating path (a path such that edges in $M$ and not in $M$ alternate). If $S = Z \cap V$ and $T = Z \cap V'$, then $X$ is given by $X = (V \setminus S) \cup T$.
	
	As $T \subseteq X$ and $|X| = |M|$, we can find and save $T$ by executing a DFS procedure just like in Theorem~\ref{thm:MaximumMatchingBipartite}, without exceeding $\O(k \log n)$ bits of memory. This takes $\O(k)$ passes in the AL model and $\O(k^2)$ passes in the EA model, and $\O(k \log n)$ bits of memory. Also, $V \setminus S$ must only contain matched vertices, as $U \subseteq S$. Therefore, in the same DFS procedure to find $T$, we can also save for every matched vertex in $V$ if it is reachable through an $M$-alternating path. Then $V \setminus S$ is simply given by all matched vertices in $M$ for which we did not save that they were reachable. We conclude that we can find $X$, the minimum vertex cover such that $|X| = |M|$, in $\O(k)$ passes (AL model) or $\O(k^2)$ passes (EA model) and $\O(k \log n)$ bits of memory.
\end{proof}

The final result is as follows, which consists of putting the original stream through each step for every time we require a pass, i.e. the number of passes of each of the parts of this theorem combine in a multiplicative fashion.

\begin{restatable}{theorem}{FinalKernelVCStreaming}\label{thm:2kKernelVCStreaming}
	Given a graph $G$ as an AL stream, we can produce a kernel of size $2k$ for the \textsc{Vertex Cover [$k$]} problem using $\O(k^2)$ passes and $\O(k \log n)$ bits of memory. In the EA model, this procedure takes $\O(k^3)$ passes.
\end{restatable}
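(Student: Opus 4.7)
The plan is to pipeline the previously established procedures, never materialising any intermediate graph in memory. We first feed the input stream through Theorem~\ref{thm:VCkernelk^2edgesEAAL} to obtain a stream of the $\O(k^2)$-edge Buss--Goldsmith kernel, then through Lemma~\ref{lemma:BussToBipartite} to obtain the Nemhauser--Trotter bipartite graph $B$. On $B$ we invoke Theorem~\ref{thm:MaximumMatchingBipartite} to compute a maximum matching $M$, and then Lemma~\ref{lemma:MaximumMatchingToVCBipartite} to turn $M$ into a minimum vertex cover $X$ of $B$. From $X$ we read off the Nemhauser--Trotter sets $C_0 = \{v \in V : v \in X \text{ and } v' \in X\}$ and $V_0 = \{v \in V : |\{v,v'\} \cap X| = 1\}$ as in \cite{BarYehudaEven}, which by the argument of Chen et al.~\cite{CHEN2001_VC} yield a kernel on at most $2k$ vertices: $C_0$ is forced into any minimum vertex cover, and either $|V_0| \leq 2(k - |C_0|)$ or we answer \textsc{No}, so $|C_0 \cup V_0| \leq 2k$. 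A final scan over the original stream emits the edges of $G[C_0 \cup V_0]$ as the kernel stream.

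The crucial implementation trick is that whenever an outer procedure needs a pass over its input, we regenerate that stream on the fly by re-running the entire upstream chain. Thus one pass over $B$ costs two passes over the Buss--Goldsmith kernel (via Lemma~\ref{lemma:BussToBipartite}), each of which costs two (AL) or four (EA) passes over the original stream (via Theorem~\ref{thm:VCkernelk^2edgesEAAL}), for a constant multiplicative overhead per pass of $B$. Theorem~\ref{thm:MaximumMatchingBipartite} then contributes $\O(k^2)$ passes of the original stream in the AL model and $\O(k^3)$ in the EA model; Lemma~\ref{lemma:MaximumMatchingToVCBipartite} contributes $\O(k)$ and $\O(k^2)$ respectively, which is of lower order in both cases; and the final emission of the kernel adds only $\O(1)$ further passes. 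Summing these yields the claimed $\O(k^2)$ (AL) and $\O(k^3)$ (EA) pass bounds.

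The memory budget of $\Ot(k)$ is maintained at every stage: each pipelined procedure individually uses $\Ot(k)$ bits, and the only persistent data carried between stages are the $\O(k)$-size sets $M$, $X$, and subsequently $C_0, V_0$, each of which fits in $\Ot(k)$ bits since the procedure aborts with answer \textsc{No} as soon as any of them would grow too large. The main obstacle is the bookkeeping of the chained passes: we must verify that every upstream procedure produces a stream in the model expected by the next stage (Lemma~\ref{lemma:BussToBipartite} preserves the AL/EA nature of its input, and Theorem~\ref{thm:VCkernelk^2edgesEAAL} does likewise), and that the downstream procedures access their streams strictly sequentially so that on-the-fly regeneration is well-defined. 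Both properties are already guaranteed by the cited results, so once they are verified the theorem follows.
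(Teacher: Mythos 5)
Your proposal is correct and follows essentially the same route as the paper's proof: pipeline Theorem~\ref{thm:VCkernelk^2edgesEAAL}, Lemma~\ref{lemma:BussToBipartite}, Theorem~\ref{thm:MaximumMatchingBipartite}, and Lemma~\ref{lemma:MaximumMatchingToVCBipartite}, regenerating upstream streams on the fly so each pass over the bipartite graph $B$ costs only a constant number of passes over the original input, and then read off the Nemhauser--Trotter sets $C_0,V_0$ from the minimum vertex cover of $B$. The pass and memory accounting is identical to the paper's.

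One small imprecision in the final step: the kernel to emit is $G[V_0]$ with the adjusted parameter $k'=k_1-|C_0|$ (as in Chen et al.), not $G[C_0\cup V_0]$ with the unchanged parameter. A vertex cover of $G[C_0\cup V_0]$ of size at most $k$ need not contain all of $C_0$, and then edges from $I_0=V\setminus(C_0\cup V_0)$ into $C_0$ need not be covered, so $G[C_0\cup V_0]$ with parameter $k$ is not an equivalence-preserving kernel unless one also records that $C_0$ is forced. You do state that $C_0$ is forced, so your intent is clear, but the clean unannotated kernel is $(G[V_0],\,k_1-|C_0|)$. This does not affect the pass or memory bounds.
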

\begin{proof}
    We execute Theorem~\ref{thm:MaximumMatchingBipartite} on the stream produced by applying Theorem~\ref{thm:VCkernelk^2edgesEAAL} and then Lemma~\ref{lemma:BussToBipartite} on the input stream (we have to apply these transformations every time that we require a pass). Notice that these applications increase the number of passes by a constant factor. On the result of Theorem~\ref{thm:MaximumMatchingBipartite} we apply Lemma~\ref{lemma:MaximumMatchingToVCBipartite} to obtain a minimum vertex cover for the specific bipartite graph $B$. Now, $C_0$ contains the vertices $v$ for which both $v,v' \in B$ are contained in the minimum vertex cover of $B$, and $V_0$ contains the vertices $v$ where either $v,v' \in B$ is contained in the minimum vertex cover of $B$, but not both. Finding $C_0$ and $V_0$ from $B$ and its minimum vertex cover requires no passes over the stream, as they are simply given by analysing the minimum vertex cover of $B$. These sets $C_0$ and $V_0$ are exactly the sets in the NT-Theorem (Proposition~\ref{prop:NT-theorem}). The kernel by Chen et al.~\cite{CHEN2001_VC} is given by $G' = G[V_0]$, which we can find with a pass (we can output the kernel as a stream), and parameter $k' = k_1 - |C_0|$, where $k_1$ is the parameter after application of Theorem~\ref{thm:VCkernelk^2edgesEAAL}. All in all, this process takes $\O(k^2)$ (AL) or $\O(k^3)$ (EA) passes and $\O(k\log n)$ bits of memory.
\end{proof}


\section{Conclusion}
We studied the complexity of \textsc{Diameter} and \textsc{Connectivity} in the streaming model, from a parameterized point of view. In particular, we considered the viewpoint of an $H$-free modulator, showing that a vertex cover or a modulator to the disjoint union of $\ell$ cliques effectively forms the frontier of memory- and pass-efficient streaming algorithms. Both problems remain hard for almost all other $H$-free modulators of constant size (often even of size~$0$). We believe that this forms an interesting starting point for further investigations into which other graph classes or parameters might be useful when computing \textsc{Diameter} and \textsc{Connectivity} in the streaming model.

On the basis of our work, we propose four concrete open questions:
\begin{itemize}
    \item What is the streaming complexity of computing \textsc{Distance to $\ell$ Cliques}? On the converse of \textsc{Vertex Cover [$k$]}, we are not aware of any algorithms to compute this parameter, even though it is helpful in computing \textsc{Diameter} and \textsc{Connectivity}.
    \item Are there algorithms or lower bounds for \textsc{Diameter} or \textsc{Connectivity} in the AL model for interval graphs? 
    \item Assuming isolated vertices are allowed in the graph, can we solve \textsc{Connectivity} in the AL model on split graphs using $O(\log n)$ bits of memory?
    \item Is there a streaming algorithm for \textsc{Vertex Cover [$k$]} using $\O(\mathrm{poly}(k))$ passes and $\O(\mathrm{poly}(k, \log n))$ bits of memory, or can it be shown that one cannot exist? This result would be relevant in combination with our kernel.
\end{itemize}

\bibliographystyle{abbrv}
\bibliography{bib}

\begin{thebibliography}{10}

\bibitem{AbboudWilliamsWangDiameter}
A.~Abboud, V.~V. Williams, and J.~R. Wang.
\newblock Approximation and fixed parameter subquadratic algorithms for radius
  and diameter in sparse graphs.
\newblock In {\em Proc. {SODA} 2016}, pages 377--391. {SIAM}, 2016.

\bibitem{AgarwalSpatialScan}
D.~Agarwal, A.~McGregor, J.~M. Phillips, S.~Venkatasubramanian, and Z.~Zhu.
\newblock Spatial scan statistics: approximations and performance study.
\newblock In {\em Proc. SIGKDD 2006}, pages 24--33. {ACM}, 2006.

\bibitem{AssadiCK19-arxiv}
S.~Assadi, Y.~Chen, and S.~Khanna.
\newblock Polynomial pass lower bounds for graph streaming algorithms.
\newblock {\em CoRR}, abs/1904.04720, 2019.

\bibitem{AssadiKSY20}
S.~Assadi, G.~Kol, R.~R. Saxena, and H.~Yu.
\newblock Multi-pass graph streaming lower bounds for cycle counting, max-cut,
  matching size, and other problems.
\newblock In {\em Proc. {FOCS} 2020}, pages 354--364. {IEEE}, 2020.

\bibitem{AssadiN21}
S.~Assadi and V.~N.
\newblock Graph streaming lower bounds for parameter estimation and property
  testing via a streaming {XOR} lemma.
\newblock In {\em Proc. {STOC} 2021}, pages 612--625. {ACM}, 2021.

\bibitem{AssadiR20}
S.~Assadi and R.~Raz.
\newblock Near-quadratic lower bounds for two-pass graph streaming algorithms.
\newblock In {\em Proc. {FOCS} 2020}, pages 342--353. {IEEE}, 2020.

\bibitem{BarYehudaEven}
R.~Bar{-}Yehuda and S.~Even.
\newblock A local-ratio theorem for approximating the weighted vertex cover
  problem.
\newblock In {\em Proc. {WG} 1983}, pages 17--28. Universit{\"{a}}tsverlag
  Rudolf Trauner, Linz, 1983.

\bibitem{DiameterComplexityNonStreaming}
M.~Bentert and A.~Nichterlein.
\newblock Parameterized complexity of diameter.
\newblock In {\em Proc. {CIAC} 2019}, volume 11485 of {\em LNCS}, pages 50--61.
  Springer, 2019.

\bibitem{BishnuStreamingVC}
A.~Bishnu, A.~Ghosh, S.~Kolay, G.~Mishra, and S.~Saurabh.
\newblock Fixed-parameter tractability of graph deletion problems over data
  streams.
\newblock {\em CoRR}, abs/1906.05458, 2019.

\bibitem{BishnuStreamingVCconference}
A.~Bishnu, A.~Ghosh, S.~Kolay, G.~Mishra, and S.~Saurabh.
\newblock Fixed parameter tractability of graph deletion problems over data
  streams.
\newblock In {\em Proc. COCOON 2020}, volume 12273 of {\em LNCS}, pages
  652--663. Springer, 2020.

\bibitem{BishnuFundamentalGeometric}
A.~Bishnu, A.~Ghosh, G.~Mishra, and S.~Sen.
\newblock On the streaming complexity of fundamental geometric problems.
\newblock {\em CoRR}, abs/1803.06875, 2018.

\bibitem{BondyMurtyGraphTheory}
J.~A. Bondy and U.~S.~R. Murty.
\newblock {\em Graph Theory with Applications}.
\newblock Macmillan Education {UK}, 1976.

\bibitem{BringmannHusfeldtDiameterTreewidth}
K.~Bringmann, T.~Husfeldt, and M.~Magnusson.
\newblock Multivariate analysis of orthogonal range searching and graph
  distances.
\newblock {\em Algorithmica}, 82(8):2292--2315, 2020.

\bibitem{BussVCkernel}
J.~F. Buss and J.~Goldsmith.
\newblock Nondeterminism within {P}.
\newblock {\em {SIAM} J. Comput.}, 22(3):560--572, 1993.

\bibitem{Cabello19}
S.~Cabello.
\newblock Subquadratic algorithms for the diameter and the sum of pairwise
  distances in planar graphs.
\newblock {\em {ACM} Trans. Algorithms}, 15(2):21:1--21:38, 2019.

\bibitem{ChakrabartiDisjAssumptions}
A.~Chakrabarti, S.~Khot, and X.~Sun.
\newblock Near-optimal lower bounds on the multi-party communication complexity
  of set disjointness.
\newblock In {\em Proc. {CCC} 2003}, pages 107--117. {IEEE}, 2003.

\bibitem{CHEN2001_VC}
J.~Chen, I.~A. Kanj, and W.~Jia.
\newblock Vertex cover: Further observations and further improvements.
\newblock {\em Journal of Algorithms}, 41(2):280 -- 301, 2001.

\bibitem{CKPSSY21}
L.~Chen, G.~Kol, D.~Paramonov, R.~R. Saxena, Z.~Song, and H.~Yu.
\newblock Almost optimal super-constant-pass streaming lower bounds for
  reachability.
\newblock In {\em Proc. {STOC} 2021}, pages 570--583. {ACM}, 2021.

\bibitem{ChitnisTheory}
R.~Chitnis and G.~Cormode.
\newblock Towards a theory of parameterized streaming algorithms.
\newblock In {\em Proc. {IPEC} 2019}, volume 148 of {\em LIPIcs}, pages
  7:1--7:15. Schloss Dagstuhl - Leibniz-Zentrum f{\"{u}}r Informatik, 2019.

\bibitem{ChitnisEsfandiariSampling}
R.~Chitnis, G.~Cormode, H.~Esfandiari, M.~T. Hajiaghayi, A.~McGregor,
  M.~Monemizadeh, and S.~Vorotnikova.
\newblock Kernelization via sampling with applications to finding matchings and
  related problems in dynamic graph streams.
\newblock In {\em Proc. {SODA} 2016}, pages 1326--1344. {SIAM}, 2016.

\bibitem{ChitnisAnnouncement}
R.~H. Chitnis, G.~Cormode, H.~Esfandiari, M.~Hajiaghayi, and M.~Monemizadeh.
\newblock Brief announcement: New streaming algorithms for parameterized
  maximal matching {\&} beyond.
\newblock In {\em Proc. {SPAA} 2015}, pages 56--58. {ACM}, 2015.

\bibitem{ChitnisMaxMatchVC}
R.~H. Chitnis, G.~Cormode, M.~T. Hajiaghayi, and M.~Monemizadeh.
\newblock Parameterized streaming: Maximal matching and vertex cover.
\newblock In {\em Proc. {SODA} 2015}, pages 1234--1251. {SIAM}, 2015.

\bibitem{CorneilDHP01}
D.~G. Corneil, F.~F. Dragan, M.~Habib, and C.~Paul.
\newblock Diameter determination on restricted graph families.
\newblock {\em Discret. Appl. Math.}, 113(2-3):143--166, 2001.

\bibitem{CoudertDucoffePopaDiameter}
D.~Coudert, G.~Ducoffe, and A.~Popa.
\newblock Fully polynomial {FPT} algorithms for some classes of bounded
  clique-width graphs.
\newblock {\em {ACM} Trans. Algorithms}, 15(3):33:1--33:57, 2019.

\bibitem{BookParamComplexity}
M.~Cygan, F.~V. Fomin, L.~Kowalik, D.~Lokshtanov, D.~Marx, M.~Pilipczuk,
  M.~Pilipczuk, and S.~Saurabh.
\newblock {\em Parameterized Algorithms}.
\newblock Springer, Cham, 2015.

\bibitem{DellMelkebeekKernelLowerBounds}
H.~Dell and D.~van Melkebeek.
\newblock Satisfiability allows no nontrivial sparsification unless the
  polynomial-time hierarchy collapses.
\newblock {\em J. {ACM}}, 61(4):23:1--23:27, 2014.

\bibitem{DowneyFellowsBook}
R.~G. Downey and M.~R. Fellows.
\newblock {\em Parameterized Complexity}.
\newblock Monographs in Computer Science. Springer, 1999.

\bibitem{Ducoffe21}
G.~Ducoffe.
\newblock Beyond helly graphs: The diameter problem on absolute retracts.
\newblock In {\em Proc. {WG} 2021}, volume 12911 of {\em LNCS}, pages 321--335.
  Springer, 2021.

\bibitem{DucoffeD21}
G.~Ducoffe and F.~F. Dragan.
\newblock A story of diameter, radius, and (almost) helly property.
\newblock {\em Networks}, 77(3):435--453, 2021.

\bibitem{DucoffeHV19}
G.~Ducoffe, M.~Habib, and L.~Viennot.
\newblock Fast diameter computation within split graphs.
\newblock In {\em Proc. {COCOA} 2019}, volume 11949 of {\em LNCS}, pages
  155--167. Springer, 2019.

\bibitem{DucoffeHV20}
G.~Ducoffe, M.~Habib, and L.~Viennot.
\newblock Diameter computation on \emph{H}-minor free graphs and graphs of
  bounded (distance) vc-dimension.
\newblock In {\em Proc. {SODA} 2020}, pages 1905--1922. {SIAM}, 2020.

\bibitem{Elkin20a}
M.~Elkin.
\newblock Distributed exact shortest paths in sublinear time.
\newblock {\em J. {ACM}}, 67(3):15:1--15:36, 2020.

\bibitem{ElkinT2021}
M.~Elkin and C.~Trehan.
\newblock $(1+\epsilon)$-approximate shortest paths in dynamic streams.
\newblock {\em CoRR}, abs/2107.13309, 2021.

\bibitem{FafianieKratsch}
S.~Fafianie and S.~Kratsch.
\newblock Streaming kernelization.
\newblock In {\em Proc. MFCS 2014}, volume 8635 of {\em LNCS}, pages 275--286.
  Springer, 2014.

\bibitem{OmegaNMemory}
J.~Feigenbaum, S.~Kannan, A.~McGregor, S.~Suri, and J.~Zhang.
\newblock On graph problems in a semi-streaming model.
\newblock {\em Theor. Comput. Sci.}, 348(2-3):207--216, 2005.

\bibitem{GraphDistancesDataStreaming}
J.~Feigenbaum, S.~Kannan, A.~McGregor, S.~Suri, and J.~Zhang.
\newblock Graph distances in the data-stream model.
\newblock {\em {SIAM} J. Comput.}, 38(5):1709--1727, 2008.

\bibitem{GawrychowskiKMS21}
P.~Gawrychowski, H.~Kaplan, S.~Mozes, M.~Sharir, and O.~Weimann.
\newblock Voronoi diagrams on planar graphs, and computing the diameter in
  deterministic {\~{o}}(n\({}^{\mbox{5/3}}\)) time.
\newblock {\em {SIAM} J. Comput.}, 50(2):509--554, 2021.

\bibitem{GoelKK12}
A.~Goel, M.~Kapralov, and S.~Khanna.
\newblock On the communication and streaming complexity of maximum bipartite
  matching.
\newblock In Y.~Rabani, editor, {\em Proc. {SODA} 2012}, pages 468--485.
  {SIAM}, 2012.

\bibitem{GuruswamiO16}
V.~Guruswami and K.~Onak.
\newblock Superlinear lower bounds for multipass graph processing.
\newblock {\em Algorithmica}, 76(3):654--683, 2016.

\bibitem{HenzingerStreams}
M.~R. Henzinger, P.~Raghavan, and S.~Rajagopalan.
\newblock Computing on data streams.
\newblock In {\em Proc. {DIMACS} 1998}, volume~50 of {\em {DIMACS}}, pages
  107--118. {DIMACS/AMS}, 1998.

\bibitem{HopcroftKarpMaxMatching}
J.~E. Hopcroft and R.~M. Karp.
\newblock An n\({}^{\mbox{5/2}}\) algorithm for maximum matchings in bipartite
  graphs.
\newblock {\em {SIAM} J. Comput.}, 2(4):225--231, 1973.

\bibitem{HuangP19}
Z.~Huang and P.~Peng.
\newblock Dynamic graph stream algorithms in o(n) space.
\newblock {\em Algorithmica}, 81(5):1965--1987, 2019.

\bibitem{HusfeldtDiameterTreewidth}
T.~Husfeldt.
\newblock Computing graph distances parameterized by treewidth and diameter.
\newblock In {\em Proc. {IPEC} 2016}, volume~63 of {\em LIPIcs}, pages
  16:1--16:11. Schloss Dagstuhl - Leibniz-Zentrum f{\"{u}}r Informatik, 2016.

\bibitem{Kapralov13}
M.~Kapralov.
\newblock Better bounds for matchings in the streaming model.
\newblock In S.~Khanna, editor, {\em Proc. {SODA} 2013}, pages 1679--1697.
  {SIAM}, 2013.

\bibitem{KhanM19}
S.~Khan and S.~K. Mehta.
\newblock Depth first search in the semi-streaming model.
\newblock In R.~Niedermeier and C.~Paul, editors, {\em Proc. {STACS} 2019},
  volume 126 of {\em LIPIcs}, pages 42:1--42:16. Schloss Dagstuhl -
  Leibniz-Zentrum f{\"{u}}r Informatik, 2019.

\bibitem{McGregorSurveyStream}
A.~McGregor.
\newblock Graph stream algorithms: a survey.
\newblock {\em {SIGMOD} Rec.}, 43(1):9--20, 2014.

\bibitem{McGregorVV16}
A.~McGregor, S.~Vorotnikova, and H.~T. Vu.
\newblock Better algorithms for counting triangles in data streams.
\newblock In T.~Milo and W.~Tan, editors, {\em Proc. {PODS} 2016}, pages
  401--411. {ACM}, 2016.

\bibitem{NT-theorem}
G.~L. Nemhauser and L.~E.~T. Jr.
\newblock Vertex packings: Structural properties and algorithms.
\newblock {\em Math. Program.}, 8(1):232--248, 1975.

\bibitem{StreamingDeletionVC}
J.~J. Oostveen and E.~J. van Leeuwen.
\newblock Streaming deletion problems parameterized by vertex cover.
\newblock In {\em Proc. {FCT} 2021}, volume 12867 of {\em LNCS}, pages
  413--426. Springer, 2021.

\bibitem{DFSsequential}
J.~H. Reif.
\newblock Depth-first search is inherently sequential.
\newblock {\em Inf. Process. Lett.}, 20(5):229--234, 1985.

\bibitem{RodittyWilliamsSeth}
L.~Roditty and V.~V. Williams.
\newblock Fast approximation algorithms for the diameter and radius of sparse
  graphs.
\newblock In {\em Proc. STOC 2013}, pages 515--524. {ACM}, 2013.

\bibitem{SunWoodruffPermBounds}
X.~Sun and D.~P. Woodruff.
\newblock Tight bounds for graph problems in insertion streams.
\newblock In {\em Proc. {APPROX/RANDOM} 2015}, volume~40 of {\em LIPIcs}, pages
  435--448. Schloss Dagstuhl - Leibniz-Zentrum f{\"{u}}r Informatik, 2015.

\bibitem{YuVerbinCycleCounting}
E.~Verbin and W.~Yu.
\newblock The streaming complexity of cycle counting, sorting by reversals, and
  other problems.
\newblock In D.~Randall, editor, {\em Proc. {SODA} 2011}, pages 11--25. {SIAM},
  2011.

\end{thebibliography}

\end{document}